\documentclass[11pt]{article}
\usepackage{xcolor}
\usepackage{geometry}   
\usepackage{setspace}
\usepackage[small,compact]{titlesec}
\titlespacing*\section{0pt}{12pt plus 4pt minus 2pt}{4pt plus 0pt minus 0pt}
\titlespacing*\subsection{0pt}{6pt plus 4pt minus 2pt}{2pt plus 4pt minus 2pt}
\titlespacing*\subsubsection{0pt}{12pt plus 4pt minus 2pt}{0pt plus 2pt minus 2pt}

\usepackage[title]{appendix}
\usepackage{natbib}
 \bibpunct[, ]{(}{)}{,}{a}{}{,}%
 \setcitestyle{citesep={;}}

\usepackage{subcaption}
\usepackage{amssymb}
\usepackage{amsmath}
\usepackage{mathtools}
\usepackage{multicol}
\usepackage{graphicx}
\usepackage{array}
\usepackage{bm}
\usepackage{soul}
\usepackage{dirtytalk}
\usepackage{algcompatible}
\makeatletter
\newenvironment{breakablealgorithm}
  {
   \begin{center}
     \refstepcounter{algorithm}
     \hrule height.8pt depth0pt \kern2pt
     \renewcommand{\caption}[2][\relax]{
       {\raggedright\textbf{\ALG@name~\thealgorithm} ##2\par}%
       \ifx\relax##1\relax 
         \addcontentsline{loa}{algorithm}{\protect\numberline{\thealgorithm}##2}%
       \else 
         \addcontentsline{loa}{algorithm}{\protect\numberline{\thealgorithm}##1}%
       \fi
       \kern2pt\hrule\kern2pt
     }
  }{
     \kern2pt\hrule\relax
   \end{center}
  }
\makeatother

\usepackage{algorithm}
\usepackage{caption}

\makeatletter
\newcommand\fs@boxedtopcap{\def\@fs@cfont{\bfseries}\let\@fs@capt\floatc@plain
	\def\@fs@pre{\setbox\@currbox\vbox{\hbadness10000
			\moveleft3.4pt\vbox{\advance\hsize by6.8pt
				\hrule \hbox to\hsize{\vrule\kern3pt
					\vbox{\kern3pt\box\@currbox\kern3pt}\kern3pt\vrule}\hrule}}}%
	\def\@fs@mid{\kern2pt}%
	\def\@fs@post{}\let\@fs@iftopcapt\iftrue}
\makeatother

\floatstyle{boxedtopcap}
\restylefloat{algorithm}

\usepackage{threeparttable}

\usepackage{enumitem}
\setlist[enumerate]{noitemsep, topsep=2pt}
\setlist[itemize]{noitemsep, topsep=2pt}

\usepackage{extarrows}

\usepackage{anysize}
\usepackage{latexsym}
\usepackage{amsthm}
\usepackage{epsfig}

\usepackage[T1]{fontenc}
\usepackage{mathrsfs}
\usepackage{float}
\usepackage{setspace}
\usepackage{color}
\definecolor{lawngreen}{RGB}{0,250,154}

\usepackage{titletoc}

\usepackage{color}
\usepackage{dsfont}
\usepackage{booktabs}
\definecolor{darkblue}{rgb}{0.0,0.0,0.5}
\definecolor{winered}{rgb}{0.5,0,0}
\usepackage[bookmarks,colorlinks,breaklinks]{hyperref}  
\hypersetup{linkcolor=winered,citecolor=winered,filecolor=winered,urlcolor=winered} 

\allowdisplaybreaks[1]

\graphicspath{{./Figures/}}

\setlist[enumerate]{noitemsep, topsep=0pt}

\usepackage{thmtools}

\theoremstyle{definition}
\newtheorem{theorem}{Theorem}
\newtheorem{definition}{Definition}
\newtheorem{lemma}{Lemma}

\newtheorem{remark}{Remark}

\newtheorem{corollary}{Corollary}

\newtheorem{proposition}{Proposition}

\def\E{\mathbb{E}}

\newcounter{note}[section]

\usepackage[nameinlink]{cleveref}
\crefname{assumption}{Assumption}{Assumptions}
\crefname{lemma}{Lemma}{Lemmas}
\crefname{theorem}{Theorem}{Theorems}
\crefname{corollary}{Corollary}{Corollaries}
\crefname{proposition}{Proposition}{Propositions}
\crefname{claim}{Claim}{Claims}
\crefname{procedure}{Procedure}{Procedures}
\crefname{algorithm}{Algorithm}{Algorithms}
\crefname{figure}{Figure}{Figures}
\crefname{remark}{Remark}{Remarks}
\crefname{section}{Section}{Sections}
\crefname{procedure}{Procedure}{Procedures}
\crefname{table}{Table}{Tables}
\crefname{appendix}{Appendix}{Appendices}
\crefname{example}{Example}{Examples}
\crefname{equation}{}{}



\usepackage{xr}
\usepackage{authblk}
 
\begin{document}

\title{Multi-LLM Query Optimization}

\author{Arlen Dean} 
\affil{Washington University in St. Louis, Olin Business School, arlen@wustl.edu} 
\author{Zijin Zhang}
\affil{Boston College, Carroll School of Management, zijin.zhang@bc.edu}
\author{Stefanus Jasin}
\affil{University of Michigan, Stephen M. Ross School of Business, sjasin@umich.edu}
\author{Yuqing Liu}
\affil{University of Michigan, Industrial and Operations Engineering, lyuqing@umich.edu}
\date{}



\maketitle
\vspace{-0.3in}

\begin{abstract}
Deploying multiple large language models (LLMs) in parallel to classify  an unknown ground-truth label is a common practice, yet the problem of optimally allocating queries across heterogeneous models remains poorly understood. 
In this paper, we formulate a robust, offline query-planning problem that minimizes total query cost subject to statewise error constraints which guarantee reliability for every possible ground-truth label. We first establish that this problem is NP-hard via a reduction from 
the minimum-weight set cover problem. To overcome this intractability, we develop a surrogate by combining a union bound decomposition of the multi-class error into pairwise comparisons with Chernoff-type concentration bounds. The resulting surrogate admits a closed-form, multiplicatively separable expression in the query counts and is guaranteed to be feasibility-preserving. We further show that the surrogate is asymptotically tight at the optimization level: the ratio of surrogate-optimal cost to true 
optimal cost converges to one as error tolerances shrink, with an explicit rate of $\mathcal{O}\left(\log\log(1/\alpha_{\min}) / \log(1/\alpha_{\min})\right)$. 
Finally, we design an asymptotic fully polynomial-time approximation scheme (AFPTAS) that returns a surrogate-feasible query plan within a  $(1+\varepsilon)$ factor of the surrogate optimum.
\end{abstract}
\textit{Keywords: large language models, query optimization, polynomial-time approximation}

\section{Introduction}
Large language models (LLMs) are rapidly transforming how organizations operate. Rather than relying on a single model, many deployments use a collection of LLMs and query them on the \emph{same} task to  aggregate their responses \citep{ai2025beyond, wang2024mixture}.  Combining multiple predictors is well established in classical machine learning, where ensemble methods such as stacking can improve accuracy beyond any single learner \citep{wolpert1992stacked,breiman1996stacked}. 
However, the case for aggregation is especially compelling for LLMs because different models can disagree substantially due to differences in training data and architecture, and their outputs are inherently stochastic.
Pooling across models and across repeated queries is therefore a natural mechanism for reducing this variability and improving reliability \citep{kuncheva2003measures}.

One prominent application of this multi-model approach is prediction, where a system consults several LLMs to infer an unknown ground-truth state from a finite set of outcomes. Such a use appears across many domains. In healthcare, a hospital system might run the same case through several diagnostic models and then aggregate their predicted diagnosis from a fixed menu of conditions \citep{yuan2026mixed, more2026theramind}. In streaming or online marketplaces, platforms may combine outputs from multiple LLMs to classify customer intent into a finite set of content or product categories \citep{fang2024llm, boateng2026agentic}. In legal services, document-review pipelines may route the same filing through several models and aggregate their labels for text classification \citep{huffman2025exploiting}. In these settings, using multiple LLMs in parallel on the same unknown ground truth can yield substantial improvements over any single model.

Realizing these benefits in practice, however, is far from straightforward. 
Aggregating responses from multiple LLMs requires the decision maker to first determine how often to consult each model, based on their understanding of how each model performs across different categories of a given task.
Yet each query consumes compute, adds latency, and incurs monetary cost through API fees or related charges \citep{shekhar2024towards}.
Querying all available models as many times as possible is therefore inefficient and wasteful, while querying too few may leave certain ground-truth labels poorly covered, resulting in unacceptable misclassification for those categories. Navigating between these extremes is difficult because models are heterogeneous in their discriminative power. A model that reliably distinguishes one pair of categories may be nearly uninformative for another, and repeated queries to the same model reduce uncertainty but with diminishing returns. Existing studies may offer partial guidance on individual model behavior, but they do not reveal how to distribute a limited budget across models whose strengths, weaknesses, and costs all interact. What is needed is an efficient allocation of queries across models, determined in advance, so that the aggregated responses are sufficient to reliably classify the unknown label. In practice, this allocation is largely handled through ad-hoc heuristics and trial-and-error. \textbf{This gap  motivates us to ask the following research questions:} \vspace{2mm}

\begin{enumerate}
\item[\emph{(i)}] \emph{How should a decision maker allocate queries across LLMs to minimize cost while guaranteeing a target level of  accuracy across all possible ground-truth labels?}\vspace{2mm}

\item[\emph{(ii)}] \emph{How hard is this optimization problem, and how can we solve it tractably with provable guarantees?} 
\end{enumerate}

\subsection{Our Main Results and Contributions}
\label{ss:contribution}
Our main results and contributions are as follows:

\textbf{A principled framework for offline multi-LLM query planning.} We formulate a robust optimization model for the offline query-planning problem that arises whenever an organization consults multiple LLMs to classify an unknown state. The formulation captures three features that are central to practical deployments but have not been jointly addressed in the literature, namely (i) heterogeneous per-query costs across models, (ii) model-specific and state-dependent discriminative power, and (iii) statewise error constraints that require reliability for every possible ground-truth label, not just on average. The framework applies broadly to any firm that employs multiple LLMs for classification or prediction tasks and provides a rigorous foundation for the upstream planning decision of allocating a query budget across a heterogeneous collection of models.

\textbf{NP-hardness and a tractable Chernoff surrogate.} We establish that the query design problem is NP-hard via a reduction from the minimum-weight set cover problem (\cref{thm:nphard}). The reduction exploits the fact that ensuring correct classification of every label forces the decision maker to select a collection of models whose combined discriminative reach spans all label pairs, at minimum total cost. To overcome this intractability, we develop a tractable surrogate that replaces each intractable statewise error constraint with an efficiently computable upper bound (\cref{thm:statewise_chernoff}). The construction proceeds in two steps. A union bound decomposes the multi-class probability error into pairwise label comparisons (\cref{lem:union_bound}), and a Chernoff exponential bound controls each pairwise term (\cref{prop:pairwise_chernoff}). 
The resulting expression of the upper bound factorizes across models and query counts, which makes constraint evaluation efficient. Moreover, feasibility under the surrogate automatically implies feasibility under the original constraints.

\textbf{Asymptotically Optimal Polynomial-time Approximation.} We show that replacing the exact error constraints with the Chernoff surrogate does not significantly inflate the optimal query cost. Specifically, we prove that the ratio of surrogate-optimal cost to true optimal cost converges to one as the error tolerances shrink, with the excess vanishing at an explicit rate that is logarithmically smaller than the dominant cost scale (\cref{thm:optight}). The gap arises because the Chernoff proxy captures the correct exponential decay rate of each pairwise error probability but introduces a polynomial prefactor. Closing this prefactor requires only a negligible amount of additional query effort relative to the total budget any feasible plan must already spend. This result establishes that, in the high-reliability regime, solving the tractable surrogate yields essentially the same minimum cost as solving the intractable exact problem. The surrogate is therefore not merely a convenient analytical relaxation. It preserves the first-order cost structure that governs how query effort should be distributed across models. Building on this foundation, we design an asymptotic fully polynomial-time approximation scheme (AFPTAS) that returns a surrogate-feasible query plan whose cost is within a factor of $1+\varepsilon$ of the surrogate optimum for any desired accuracy $\varepsilon > 0$ (\cref{thm:afptas}).

\subsection{Notations}
Throughout the paper, we use $\|\cdot\|_p$ to denote the $p$-norm of a vector in $\mathbb R^d$, and $\|x\|_{\mathbf A} := \sqrt{x^\top \mathbf A x}$ to denote the weighted 2-norm of the vector $x \in \mathbb R^d$ where $\mathbf A \in \mathbb{R}^{d \times d}$ is a positive definite matrix.  We use the big-$O$ notation where, by definition, $f(x) = O\left(g(x)\right)$ for positive real-valued functions $f$ and $g$ if there exists an $r \in \mathbb R_+$ such that $f(x) < r g(x)$. Similarly, if $f(x) = \Omega \left(g(x)\right)$, then $f(x)>rg(x)$. When $f(x) = O(g(x))$ and $f(x) = \Omega(g(x))$, it is represented by $f(x) = \Theta (g(x))$.
For any numbers \( a, b \in \mathbb{R} \), we use the notation: $a \vee b := \max\{a, b\}$ and $ a \wedge b := \min\{a, b\}$, to denote the maximum and minimum of $a$ and $b$, respectively. $(a)^+$ denotes $\max\{a,0\}$.


\section{Literature Review} \label{s:literature}
\paragraph{Multi-LLM systems.}
A growing body of work studies how to combine or coordinate multiple LLMs to attain decision quality beyond that of any single model. The simplest aggregation strategy is majority voting, introduced in this setting through self-consistency decoding \citep{wang2022selfconsistency}, where diverse chain-of-thought paths are sampled and the most frequent answer is selected. More sophisticated ensembling methods include pairwise ranking \citep{jiang2023llmblender} and approaches that exploit higher-order relationships across model outputs \citep{ai2025beyond}. For a recent survey, see \citet{chen2025harnessing}. A related paradigm is multi-agent debate, in which several models iteratively refine their answers through interaction \citep{du2024improving, liang2023encouraging}. \citet{wang2024mixture} study this idea through a mixture-of-agents architecture, while \citet{li2024moreagents} examine how performance scales with the number of agents. A separate line of work considers per-query model selection rather than output aggregation. LLM routing methods assign each query to a single model to balance cost and quality \citep{ong2025routellm, ding2024hybrid}, while LLM selects query models sequentially in order of increasing cost and stop once a confidence threshold is met \citep{chen2023frugalgpt, jitkrittum2024confidence, gupta2024cascades, dekoninck2025unified}. 
Across these studies, the focus is on how to aggregate or route outputs after they are observed. By contrast, our work addresses an upstream design problem: how many times to query each model before any outputs are collected, subject to cost and accuracy constraints.

\paragraph{Crowdsourcing and noisy label aggregation.}
The problem of aggregating responses from multiple noisy sources to infer a latent ground truth has a long history in statistics and crowdsourcing. The foundational model of \citet{dawid1979maximum} estimates annotator error rates via EM when the true label is unobserved. Our setting shares its core structure, with multiple sources that have heterogeneous accuracies, a latent class, and inference based on posterior maximization. Subsequent work extends this framework through richer classification models and alternative inference methods \citep{whitehill2009whose, liu2012variational}. On the algorithmic side, \citet{zhang2014spectral} develop spectral methods that improve on naive voting, while \citet{raykar2010learning} study cost-quality tradeoffs in learning from crowd labels. Information-theoretic limits are also well understood. \citet{gao2016exact} establish matching upper and lower bounds with an exact exponent characterized by average Chernoff information across workers, which parallels the role of pairwise Chernoff bounds in our surrogate construction. Moving from inference to \emph{budget allocation}, \citet{karger2011iterative} and \citet{karger2014budget} study budget-optimal non-adaptive task assignment in crowdsourcing. 
Their formulation is the closest structural antecedent to our problem. More broadly, the crowdsourcing literature typically assumes homogeneous query costs and considers assignment across many items. By contrast, we optimize integer query counts for a single classification instance with heterogeneous per-model costs, which results in a challenging combinatorial optimization problem.

\paragraph{Active Learning.}
Our query-planning problem is also related to active learning, which studies how to select informative queries under budget constraints \citep{settles2012active}. Much of this literature is adaptive, meaning that the learner observes each outcome before deciding what to query next. Classical examples include uncertainty sampling \citep{lewis1994sequential}, query-by-committee \citep{seung1992query}, and variance-based statistical approaches \citep{cohn1996active}. These settings differ from ours because we require the decision maker to commit to a complete query plan before any outputs are observed. In this respect, our problem is closer to batch-mode active learning, which selects an entire set of queries in a single round \citep{hoi2006batch, chen2013nearoptimal, zhang2024more}, and to cost-sensitive active learning with multiple imperfect oracles, where query decisions must account for heterogeneous costs and reliabilities \citep{donmez2008proactive, yan2011active}. More broadly, this line of work connects to the classical literature on the value of information in hypothesis testing and sequential design \citep{chernoff1952measure, chernoff1959sequential}. Within this broader perspective, our formulation belongs to a non-adaptive class of information acquisition problems such as  controlled sensing (e.g., \citealt{nitinawarat2013controlled}), where non-adaptive observation policies can be asymptotically optimal for multihypothesis testing under fixed sample budgets .

\section{Model and Preliminary Analysis} \label{s:model}

We consider a finite classification task in which a decision maker queries multiple large language models (LLMs) and aggregates their responses to predict an unknown ground-truth label.
The unknown ground-truth label is a random variable, denoted by $Y \in \mathcal{Y}:= \{1, 2, \ldots, L\}$, and the decision maker holds a strictly positive prior $\pi(y) := \Pr(Y = y) > 0$ for every label $y \in \mathcal{Y}$. 
To learn about the true label $Y$, the decision maker has access to $K$ different LLMs, indexed by $m \in \{1, \ldots, K\}$. A single query to model $m$ produces a random response $X_{m,t} \in \mathcal{X}_m$ drawn from the conditional distribution $p_m(x \mid y) := \Pr(X_{m,t} = x \mid Y = y)$, where $\mathcal{X}_m$ represents the output alphabet of model $m$ and $t$ indexes repeated queries. 
For every model $m\in\{1,\ldots,K\}$, every label $y\in\mathcal{Y}$, and every symbol $x\in\mathcal{X}_m$, we have $p_m(x\mid y)>0$.
Conditional on $Y = y$, the outputs of each model are independent and identically distributed across repetitions, and outputs are mutually
independent across models.
The joint likelihood of all observed outputs therefore factorizes as
\begin{equation}\label{eq:joint_likelihood}
  \Pr \bigl(\{X_{m,t} = x_{m,t}\}_{m,t} \mid Y = y\bigr)=
  \prod_{m=1}^{K} \prod_{t=1}^{r_m} p_m(x_{m,t} \mid y),
\end{equation}
where $r_m$ denotes the number of times that model $m$ is queried.

Our goal is to design a \emph{query plan}, specified by a query-count vector $r = (r_1, \ldots, r_K) \in \mathbb{Z}_{\ge 0}^K$.
The query plan is \emph{non-adaptive}: the decision maker commits to a fixed number of queries per model before observing any outputs.
Each query to model $m$ incurs a known cost $c_m > 0$, and the total cost by a plan is denoted by  $C(r) := \sum_{m=1}^{K} c_m\, r_m$. 
Once $r$ is fixed, the decision maker observes the dataset $X(r) := \{X_{m,t} : 1 \le m \le K, 1 \le t \le r_m\}$. 
Given the realized observations $X(r)=x(r)$, the decision maker updates their belief about the unknown ground-truth label $Y$ using Bayes' rule combined with the joint likelihood in~\eqref{eq:joint_likelihood}. The posterior distribution of $Y$ thus follows:
\begin{equation}\label{eq:posterior}
  \Pr\bigl(Y = y \mid X(r) = x(r)\bigr)
  =
  \frac{
    \pi(y)\;\prod_{m=1}^{K}\prod_{t=1}^{r_m} p_m(x_{m,t} \mid y)
  }{
    \sum_{y' \in \mathcal{Y}}
    \pi(y') \prod_{m=1}^{K}\prod_{t=1}^{r_m} p_m(x_{m,t} \mid y')
  }.
\end{equation}
The decision maker predicts label $\hat{Y}$ using the maximum-a-posteriori (MAP) rule, which selects the label with the largest posterior probability. Since the denominator in~\eqref{eq:posterior} does not depend on $y$, maximizing the posterior probability is equivalent to solving

\begin{equation}\label{eq:map_estimator}
  \widehat{Y}_r\bigl(x(r)\bigr)
  \in
  \operatorname*{arg\,max}_{y \in \mathcal{Y}}
  \Bigl\{\pi(y)\prod_{m=1}^{K}\prod_{t=1}^{r_m} p_m(x_{m,t} \mid y)
  \Bigr\},
\end{equation}
which we refer to as the \emph{MAP estimator}.
We evaluate the query performance through a \emph{statewise} error probability 
$$P_e(y; r) := \Pr(\widehat{Y}_r(X(r)) \neq y \mid Y = y),$$ which measures the misclassification rate conditional on the true label being $y$.
This criterion is \emph{robust}: it requires the error probability to be small for every possible ground-truth label, not just on average under a prior. Given a vector of target tolerances $\alpha = (\alpha_y)_{y \in \mathcal{Y}} \in (0,1)^L$, the objective is to find the minimum-cost query plan satisfying every statewise constraint simultaneously:
\begin{equation}\label{eq:opt_problem}
  \begin{aligned}
    \min_{r \in \mathbb{Z}_{\ge 0}^K} \quad
      & C(r) = \sum_{m=1}^{K} c_m r_m \\
    \text{s.t.} \quad
      & P_e(y; r) \le \alpha_y,
        \qquad \forall y \in \mathcal{Y}.
  \end{aligned}
\end{equation}

Problem~\eqref{eq:opt_problem} is an integer program whose constraints couple the query counts through the combinatorial structure of the MAP error.  To ensure that the problem is feasible for all sufficiently
small tolerance vectors, we require that every pair of distinct labels be
distinguishable by at least one model: for every label $y \neq y'$, there exists model $m$ with $p_m(\cdot \mid y) \neq p_m(\cdot \mid y')$. 
The remainder of this paper aims to develop an approach for designing cost-optimal query plans in this offline robust setting.


\subsection{Hardness Result}
Directly optimizing problem~\eqref{eq:opt_problem} is computationally intractable for two main reasons. First, the exact evaluation of $P_e(y; r)$ requires summing over all possible observation sequences, resulting in a combinatorial sum that grows exponentially in $\sum_{i=1}^m r_i$ and is therefore infeasible to compute in general. Second, evaluating $P_e(y; r)$ also involves solving an inner nonlinear optimization problem to determine the estimator $\hat{Y}_r$ within constraints. Taken together, these challenges make the problem difficult to solve exactly. We formally establish this computational difficulty by showing that the problem is NP-hard.
For the remainder of the paper, all proofs are deferred to the Appendix.

\begin{theorem}[NP-Hardness]\label{thm:nphard}
The query design problem~\eqref{eq:opt_problem} is
\textup{NP}-hard.
\end{theorem}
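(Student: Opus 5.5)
We reduce from minimum-weight set cover, as the introduction indicates. Let an instance be a universe $U=\{e_1,\dots,e_n\}$, a family of subsets $S_1,\dots,S_K\subseteq U$ with weights $w_1,\dots,w_K>0$, and a threshold $W$. We ask whether some cover has weight at most $W$. We build a query-design instance in which each element $e_j$ becomes a pair of labels that must be separated. Model $m$ corresponds to $S_m$ and is informative exactly about the pairs for elements in $S_m$.

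\textbf{Construction.} Take labels $\mathcal{Y}=\{0,1,\dots,n\}$ with uniform prior. Label $0$ is a reference state and label $j$ encodes element $e_j$. Each model $m$ has cost $c_m=w_m$ and binary output alphabet $\{0,1\}$. We set $p_m(1\mid 0)=\delta$, $p_m(1\mid j)=1-\delta$ if $e_j\in S_m$, and $p_m(1\mid j)=\delta$ otherwise. Here $\delta\in(0,1/2)$ is small, so all probabilities are positive as the model requires.

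This design has a defect. Two labels $j,j'$ that are both covered by exactly the same sets cannot be distinguished, which violates the standing distinguishability assumption and makes some $P_e(j;r)$ stay bounded away from $0$. We fix this in one of two ways:
\begin{itemize}
\item add $n$ auxiliary \emph{free} models. Model $j$ is very cheap, with cost much smaller than $\min_m w_m/n$, and separates label $j$ from all labels $j'\ne j$, $j'\ge1$, but has the same distribution under label $0$ and label $j$;
\item or, more simply, choose the tolerances so that only errors involving label $0$ matter. We could use the signature construction $p_m(\cdot\mid j)$ depending on $\mathbf 1[e_j\in S_m]$ together with a cheap ``identity'' model that distinguishes the nonzero labels among themselves.
\end{itemize}
I would go with the cheap auxiliary models. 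Their total cost is negligible, so the pairs $(0,j)$ are the only pairs that must be separated by the expensive models.

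\textbf{Key equivalence.} Take tolerances $\alpha_y=\alpha$ with $\alpha$ small and fixed, for example $\alpha=1/(4(n+1))$.

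Suppose first that no model covering $e_j$ is queried. Then labels $0$ and $j$ induce identical output distributions. The MAP rule with uniform prior then errs on one of them with probability at least $1/2$ (under any tie-breaking), which exceeds $\alpha$. So every feasible $r$ has $\{m:r_m\ge1\}$ forming a set cover among the original models.

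Conversely, take any cover $I$. Querying each $m\in I$ once, together with each auxiliary model a fixed number $t$ of times, is feasible for $\delta$ small enough. A single informative query distinguishes $0$ from $j$ with error $O(\delta)$, and a union bound over the $n$ other labels gives $P_e(y;r)\le n\cdot O(\delta)+(\text{aux error})\le\alpha$.

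Hence the optimal cost equals the minimum cover weight up to an additive term $\epsilon_{\rm aux}$ smaller than the minimum nonzero gap between cover weights. For rational data we can take $\epsilon_{\rm aux}<1/2$ after scaling the weights to integers. This yields the decision reduction: the optimum is at most $W+\epsilon_{\rm aux}$ if and only if a cover of weight at most $W$ exists.

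\textbf{Main obstacle.} The hardest part is making the forward direction rigorous while keeping the instance polynomial in size.
\begin{itemize}
\item We must show that queried cheap auxiliary models can never substitute for a missing expensive model. They have identical distributions under $0$ and $j$, so the likelihood ratio between $0$ and $j$ depends only on the expensive models. This has to be checked carefully even when other labels are present in the MAP competition; the lower bound on $P_e$ for that pair survives because of symmetry.
\item We must pick $\delta$, $t$ and the auxiliary costs with polynomially many bits. Here $\delta=1/(8n(n+1))$ and $t=O(\log n)$ suffice via Hoeffding.
\item We must handle tie-breaking in the MAP estimator. Since $\alpha<1/2$, the argument that identical distributions force error at least $1/2$ on one of the two labels holds for any tie-breaking rule.
\end{itemize}
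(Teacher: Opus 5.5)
Your reduction fails in the forward direction. The faulty step is the claim that if no model covering $e_j$ is queried, then labels $0$ and $j$ induce identical output distributions. This holds for the set-cover models and for $a_j$, but not for the other auxiliary models. For $j'\neq j$ you have $p_{a_{j'}}(\cdot\mid 0)=p_{a_{j'}}(\cdot\mid j')\neq p_{a_{j'}}(\cdot\mid j)$, so $a_{j'}$ separates $0$ from $j$. Your first ``main obstacle'' bullet makes the same slip.

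As a result, the cheap models solve the instance by themselves. Take $a_j$ binary with $\Pr(1)=3/4$ under labels $0$ and $j$, and $1/4$ under all other labels. For $n\ge 2$ every pair of distinct labels is separated by some auxiliary model. Theorem~\ref{thm:statewise_chernoff} at $s=1/2$ with your uniform prior then gives $\overline{P}_e(y;r)\le n(\sqrt{3}/2)^t\le\alpha$ when each $a_j$ is queried $t=O(\log n)$ times and no set-cover model is queried at all. The constructed instance therefore has negligible optimal cost whatever the set-cover input, and the reduction always answers yes.

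No choice of auxiliary laws repairs this. For a single model, equality of conditional laws is an equivalence relation on labels. So a model that is uninformative for both $(0,j)$ and $(0,j')$ is uninformative for $(j,j')$. Any family that separates the nonzero labels from one another therefore separates $0$ from all but at most one of them, and cheap models can be repeated at negligible cost.

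Your second option, a cheap identity model, is the route the paper takes, and it does not escape the problem. The paper's discriminator $m^*$ has $p_{m^*}(x\mid i)=\mathbf{1}[x=i]$, which already violates positivity. It is calibrated so that one query leaves $0$ and $i$ exactly tied. Nothing forces $r_{m^*}=1$, however. With $r_{m^*}=2$ the MAP scores under $Y=i$ are $1/(2n)$ for $i$ and $1/(2n^2)$ for $0$. So $P_e(i;r)=0$ and $P_e(0;r)=1/n\le\alpha_0$, at cost $2\delta'$.

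What works is to replace the star around a shared label $0$ by a matching:
\begin{itemize}
\item Give each element two labels $(j,a)$ and $(j,b)$, with equal prior on all $2n$ labels.
\item Let a cheap identity model have a law depending only on $j$: output $j$ with probability $3/4$, otherwise uniform over the other $n-1$ symbols.
\item Let set-cover model $m$ output $1$ with probability $1-\delta$ under $(j,a)$ and $\delta$ under $(j,b)$ if $e_j\in S_m$, and with probability $1/2$ under both otherwise.
\end{itemize}
The cheap model's indistinguishability classes are now exactly the pairs, which transitivity permits.

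If no queried set contains $e_j$, the laws of $X(r)$ under $(j,a)$ and $(j,b)$ coincide. Hence $P_e((j,a);r)+P_e((j,b);r)\ge 1$ for every decision rule, which violates $\alpha_y=1/4$.

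Conversely, query each set in a cover once and the identity model $t=O(\log n)$ times. Theorem~\ref{thm:statewise_chernoff} at $s=1/2$, together with $M_m^{(y,y')}(1/2)\le 1$ for every model, gives $\overline{P}_e(y;r)\le 2\sqrt{\delta}+2(n-1)(\sqrt{3}/2)^t\le 1/4$ for $\delta=1/256$. Your integrality argument for the decision version then goes through unchanged.
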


\cref{thm:nphard} shows that our optimization problem of solving the minimum-cost query plan is NP-hard. This can be proved by establishing a polynomial-time reduction from the minimum-weight set cover problem. The intuition is that each error constraint imposes a covering requirement: we can construct an instance such that for each potential ground-truth label to be correctly classified, at least one model that is capable of distinguishing it must be queried. Different models distinguish different subsets of labels, so the decision maker faces a combinatorial selection problem in choosing the cheapest collection of models whose discriminative capabilities jointly cover all labels. 
This is precisely the structure of the set cover problem, which is NP-hard. 
Therefore, no polynomial-time algorithm can solve the query design problem \eqref{eq:opt_problem} exactly for all instances unless P = NP. 
To address this computational challenge, we next introduce a surrogate design approach.


\section{Query Design: Surrogate Problem}
\label{s:algorithm}
A central challenge in multi-model query design is controlling the statewise classification error without explicitly computing posterior probabilities.
As shown in \cref{thm:nphard}, directly evaluating the statewise error probability $P_e(y;r)$ requires enumerating all possible observation sequences.
When the number of queries grows, the number of such sequences increases exponentially, which makes exact evaluation computationally infeasible.

To overcome this difficulty, this section develops an analytically tractable upper bound on the statewise MAP error probabilities $P_e(y;r)$ by combining a union bound over pairwise comparisons with Chernoff bounding techniques.
The resulting bound admits a closed-form expression and exhibits a key structural property: it decomposes multiplicatively across models and query counts.
This separability allows us to transform the original query design problem into a more tractable optimization problem.

The key idea of our approach is to replace each intractable constraint $P_e(y;r)\le \alpha_y$ with a tractable surrogate constraint $\overline{P}_e(y;r)\le \alpha_y$, where $\overline{P}_e(y;r)$ is constructed as a valid upper bound on the true error probability. Therefore, any query plan that satisfies the surrogate constraints automatically satisfies the original ones.
At the same time, the surrogate bound admits a simple closed-form expression in terms of the query plan $r$.
This leads to the following optimization problem
\begin{equation}\label{eq:surrogate_problem}
  \begin{aligned}
    \min_{r \in \mathbb{Z}_{\ge 0}^K} \quad
      & C(r) = \sum_{m=1}^{K} c_m r_m \\
    \text{s.t.} \quad
      & \overline{P}_e(y; r) \le \alpha_y,
        \qquad \forall y \in \mathcal{Y}.
  \end{aligned}
\end{equation}

Below, we construct $\overline{P}_e(y;r)$ through a sequence of steps. We first decompose the MAP error event into pairwise comparisons between the true label and competing labels. We then bound each pairwise comparison probability using a Chernoff bound.
Finally, combining these components yields a tractable upper bound that retains the essential statistical structure of the original problem.

\subsection{Upper Bound Construction}
\label{subsec:upper_bound}
The MAP rule selects the label whose posterior probability is largest.
Equivalently, it selects the label that maximizes the posterior log-likelihood
\[
  \widehat{Y}_r\bigl(x(r)\bigr)
  \in
  \operatorname*{arg\,max}_{y \in \mathcal{Y}}~ \log \pi(y) + 
  \sum_{m=1}^{K}\sum_{t=1}^{r_m}
  \log p_m(x_{m,t}\mid y).
\]
An error occurs when some competing label $y'$ achieves a posterior log-likelihood at least as large as that of the true label $y$.
This observation motivates analyzing the classification error through a collection of pairwise comparisons between the true label and each competitor.

\begin{definition} [Log-likelihood difference]
For any query plan $r\in\mathbb{Z}_{\ge 0}^K$ and distinct labels $y,y'\in\mathcal{Y}$, the \emph{log-likelihood difference} is a random variable
\begin{equation}
\label{eq:delta_expanded}
\Delta_{y,y'}(r):=
\log \left(\frac{\pi(y')}{\pi(y)}\right)
+
\sum_{m=1}^K\sum_{t=1}^{r_m}
\log \left(\frac{p_m(X_{m,t}\mid y')}{p_m(X_{m,t}\mid y)}\right).
\end{equation}
\end{definition}

The quantity $\Delta_{y,y'}(r)$ measures the advantage of the competing label $y'$ over the true label $y$ in terms of posterior log-likelihood.
It consists of two components.
The first term reflects the prior preference between the two labels, while the second term aggregates the evidence provided by the observed data.
Each observation contributes a log-likelihood ratio comparing how likely that observation is under the two competing hypotheses.
If the data is informative, this log-likelihood difference tends to favor the true label, causing $\Delta_{y,y'}(r)$ to drift downward as more evidence is collected; otherwise, classification errors occur when the data incorrectly favors a competing label.
Conditional on $Y=y$, the event $\{\Delta_{y,y'}(r)\ge 0\}$ indicates that a competing label is favored.
Hence, the error event $\{\hat{Y}_r(X(r)) \neq y\}$ is contained in the union of events $\{\Delta_{y,y'}(r)\ge 0\}$ over all competitors $y'\ne y$.
This observation allows us to express the error event as a union of pairwise comparison events. The following lemma formalizes this reduction.

\begin{lemma}[Union bound reduction to pairwise comparisons]
\label{lem:union_bound}
For any query plan $r\in\mathbb{Z}_{\ge 0}^K$ and label $y\in\mathcal{Y}$,
\begin{equation}
\label{eq:union_bound}
P_e(y;r)
\le
\sum_{y'\in\mathcal{Y}\setminus\{y\}}
\Pr \big(\Delta_{y,y'}(r)\ge 0 \,\big|\, Y=y\big).
\end{equation}
\end{lemma}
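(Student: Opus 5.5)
The plan is to show that the error event is contained in a union of pairwise events, and then apply countable subadditivity of the conditional probability measure $\Pr(\cdot \mid Y=y)$.

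\textbf{Step 1: reduce an error to a pairwise comparison.} Fix $r$ and $y$, and write $S_{y''}(x) := \log \pi(y'') + \sum_{m=1}^K\sum_{t=1}^{r_m}\log p_m(x_{m,t}\mid y'')$ for the log-posterior score. This is well defined because $\pi>0$ and $p_m(x\mid y'')>0$ for every symbol, so no logarithm is infinite. By the definition of the MAP estimator in \eqref{eq:map_estimator}, and since $\log$ is increasing, $\widehat{Y}_r(x)$ is a maximizer of $S_{\cdot}(x)$. Suppose $\widehat{Y}_r(x)=y'\neq y$. Then $S_{y'}(x)\ge S_{y}(x)$.

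\textbf{Step 2: identify the difference with $\Delta_{y,y'}(r)$.} Subtracting the two scores term by term gives exactly
\[
S_{y'}(x)-S_y(x)=\log\frac{\pi(y')}{\pi(y)}+\sum_{m=1}^K\sum_{t=1}^{r_m}\log\frac{p_m(x_{m,t}\mid y')}{p_m(x_{m,t}\mid y)},
\]
which is $\Delta_{y,y'}(r)$ evaluated at $X(r)=x$ by \eqref{eq:delta_expanded}. Hence
\[
\{\widehat{Y}_r(X(r))\neq y\}\subseteq\bigcup_{y'\neq y}\{\Delta_{y,y'}(r)\ge 0\}.
\]
This inclusion holds regardless of how ties in the $\arg\max$ are broken, because the inequality $S_{y'}\ge S_y$ is non-strict. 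That tie-breaking point is the only subtlety. It also shows the bound is conservative, since on ties with $y$ the estimator might still select $y$.

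\textbf{Step 3: apply the union bound.} Taking $\Pr(\cdot\mid Y=y)$ of both sides and using monotonicity and finite subadditivity over the $L-1$ competitors yields \eqref{eq:union_bound}. Measurability is not an issue, because $X(r)$ takes finitely many values. I do not expect any real obstacle: the argument is a direct inclusion of events followed by Boole's inequality.
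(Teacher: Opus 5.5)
Your proposal is correct and matches the paper's proof essentially step for step: rewrite the MAP rule as maximization of the log-posterior score (well defined by positivity), observe that an error forces some competitor $y'$ to score at least as high as $y$, i.e.\ $\Delta_{y,y'}(r)\ge 0$, and then apply Boole's inequality under $\Pr(\cdot\mid Y=y)$. Your remark that the non-strict inequality makes the inclusion hold under any tie-breaking rule is a useful clarification that the paper leaves implicit.
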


This lemma shows that the probability of misclassifying label $y$ can be controlled by bounding the probability that any competing label outperforms $y$ in terms of posterior log-likelihood.
Conceptually, this step replaces a multi-class classification problem into a collection of binary tests.
Each term in the sum corresponds to the probability that a particular competitor $y'$ beats the true label.
This union bound provides a useful analytical simplification. Instead of analyzing the full multi-class error structure, we can focus on understanding the behavior of individual pairwise comparisons.
However, the resulting probabilities are still difficult to compute exactly, because they depend on the enumeration of all possible observation sets. 
To obtain a tractable expression, we further bound these probabilities using a Chernoff parameter defined below.

\begin{definition}[Chernoff affinity factor]
\label{def:chernoff_factor}
For any two distinct labels $y,y'\in\mathcal{Y}$, model
$m\in\{1,\dots,K\}$, and tilting parameter $s\in[0,1]$, we define the
\emph{Chernoff affinity factor}
\begin{equation}
\label{eq:M_def}
M_{m}^{(y,y')}(s)
\;:=\;
\sum_{x\in\mathcal{X}_m}
  p_m(x\mid y)^{\,1-s}\;p_m(x\mid y')^{\,s}.
\end{equation}
\end{definition}

The Chernoff affinity factor measures the statistical overlap between the distributions induced by labels $y$ and $y'$ under model $m$.
When the two distributions are very similar, the value of
$M_m^{(y,y')}(s)$ is close to one, reflecting the fact that observations from model $m$ provide little information for distinguishing the two labels.
On the other hand, when the distributions are well separated, $M_m^{(y,y')}(s)$ becomes much smaller than one, indicating that the model is highly informative for this pairwise comparison.

From an operational perspective, this quantity captures the discriminative power of each model.
It tells us how rapidly evidence accumulates in favor of the correct label when we collect observations from model $m$.
This interpretation will later play a key role in determining how query effort should be allocated across models.
Using the Chernoff affinity factor, we can further bound the pairwise probability error in \cref{prop:pairwise_chernoff}.


\begin{proposition}[Pairwise Chernoff bound]
\label{prop:pairwise_chernoff}
For any query vector $r\in\mathbb{Z}_{\ge 0}^K$ and distinct labels $y,y'\in\mathcal{Y}$,
\begin{equation}
\label{eq:pairwise_chernoff_opt}
\Pr\!\big(\Delta_{y,y'}(r)\ge 0 \,\big|\, Y=y\big)
\le
\min_{s\in[0,1]}
\left(\frac{\pi(y')}{\pi(y)}\right)^{s}
\prod_{m=1}^K \big(M_{m}^{(y,y')}(s)\big)^{r_m}.
\end{equation}
\end{proposition}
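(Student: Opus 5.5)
We use the standard exponential Markov (Chernoff) argument applied to the log-likelihood difference $\Delta_{y,y'}(r)$ from \eqref{eq:delta_expanded}, conditional on $Y=y$.

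\textbf{Markov step.} Fix $s\in[0,1]$. Since $u\mapsto e^{su}$ is nonnegative and nondecreasing, Markov's inequality gives
\[
\Pr\big(\Delta_{y,y'}(r)\ge 0 \mid Y=y\big)\le \E\big[e^{s\Delta_{y,y'}(r)}\mid Y=y\big].
\]
For $s=0$ the right-hand side equals $1$ and the bound is trivial but still valid. This uniform treatment is why the minimum in \eqref{eq:pairwise_chernoff_opt} may be taken over the closed interval $[0,1]$.

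\textbf{Factorization step.} Write
\[
e^{s\Delta_{y,y'}(r)}=\Big(\tfrac{\pi(y')}{\pi(y)}\Big)^{s}\prod_{m=1}^{K}\prod_{t=1}^{r_m}\Big(\tfrac{p_m(X_{m,t}\mid y')}{p_m(X_{m,t}\mid y)}\Big)^{s}.
\]
The prior factor is deterministic and comes out of the expectation. Conditional on $Y=y$, the outputs are independent across models and i.i.d.\ across repetitions, by the factorization in \eqref{eq:joint_likelihood}. Hence the expectation of the product is the product of expectations, giving $\prod_m\big(\E[(p_m(X\mid y')/p_m(X\mid y))^{s}\mid Y=y]\big)^{r_m}$. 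If $r_m=0$, the corresponding empty product equals $1=M_m^{(y,y')}(s)^0$, so no separate case is needed.

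\textbf{Identification step.} Each single-query expectation is
\[
\sum_{x\in\mathcal{X}_m} p_m(x\mid y)\,\frac{p_m(x\mid y')^{s}}{p_m(x\mid y)^{s}}=\sum_{x}p_m(x\mid y)^{1-s}p_m(x\mid y')^{s}=M_m^{(y,y')}(s).
\]
This is well defined because $p_m(x\mid y)>0$ for all $x$. The resulting bound holds for every $s\in[0,1]$, so it holds for the infimum over $s$.

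\textbf{Attainment of the minimum.} The only point needing care is that the minimum is actually attained. The map $s\mapsto (\pi(y')/\pi(y))^{s}\prod_m M_m^{(y,y')}(s)^{r_m}$ is a finite sum and product of continuous functions of $s$, since all probabilities are strictly positive and the alphabets $\mathcal{X}_m$ are finite. A continuous function on the compact set $[0,1]$ attains its minimum, which yields \eqref{eq:pairwise_chernoff_opt}. We do not expect any step to be a genuine obstacle; the main thing to state carefully is the conditional independence used to factor the expectation.
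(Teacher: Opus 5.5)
Your proposal is correct and follows essentially the same route as the paper: Markov's inequality applied to $e^{s\Delta_{y,y'}(r)}$, factorization of the conditional moment by conditional independence into $\prod_m \big(M_m^{(y,y')}(s)\big)^{r_m}$ (which the paper isolates as a separate exponential-moment factorization lemma), and then minimization over $s\in[0,1]$. Your continuity-and-compactness argument that the minimum over $s$ is attained is a small point the paper leaves implicit.
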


\Cref{prop:pairwise_chernoff} provides the key analytical tool that allows tractable error control.
The Chernoff bound transforms the error probability based on the log-likelihood difference and observation sets into an exponential bound that depends only on the query counts.
Two structural properties are particularly important.
First, the bound factorizes across models: each model contributes a term $(M_m^{(y,y')}(s))^{r_m}$.
Second, each of these terms decays exponentially with the number of queries $r_m$.
Intuitively, every additional observation from an informative model reduces the probability that the competing label can accumulate sufficient evidence to beat the true label.
Besides, the parameter $s$ acts as an exponential tilting parameter that optimally balances the contributions of the two competing distributions.
Optimizing over $s$ selects the tightest exponential bound for each pairwise comparison.
Combining \cref{lem:union_bound,prop:pairwise_chernoff} leads to the following upper bound on the statewise classification error.

\begin{theorem}[Statewise Error Upper Bound]
\label{thm:statewise_chernoff}
For any query plan $r\in\mathbb{Z}_{\ge 0}^K$ and label $y\in\mathcal{Y}$,
\begin{equation}
\label{eq:statewise_chernoff}
P_e(y;r)\le \overline{P}_e(y;r)
:=
\sum_{y'\in\mathcal{Y}\setminus\{y\}}
\min_{s\in[0,1]}
\left(\frac{\pi(y')}{\pi(y)}\right)^{s}
\prod_{m=1}^K \big(M_{m}^{(y,y')}(s)\big)^{r_m}.
\end{equation}
\end{theorem}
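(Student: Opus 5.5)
The plan is to chain \cref{lem:union_bound} with \cref{prop:pairwise_chernoff}, term by term. Fix a query plan $r\in\mathbb{Z}_{\ge 0}^K$ and a label $y\in\mathcal{Y}$. \Cref{lem:union_bound} gives
\[
P_e(y;r)\le \sum_{y'\in\mathcal{Y}\setminus\{y\}} \Pr\big(\Delta_{y,y'}(r)\ge 0 \,\big|\, Y=y\big).
\]
Each summand corresponds to a distinct competitor $y'\neq y$. For each such $y'$ we apply \cref{prop:pairwise_chernoff}, which bounds the summand by
\[
\min_{s\in[0,1]}\left(\frac{\pi(y')}{\pi(y)}\right)^{s}\prod_{m=1}^K \big(M_m^{(y,y')}(s)\big)^{r_m}.
\]
All terms are nonnegative, so summing these termwise inequalities over $y'\in\mathcal{Y}\setminus\{y\}$ preserves the inequality. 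The right-hand side is exactly $\overline{P}_e(y;r)$ as defined in \eqref{eq:statewise_chernoff}.

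The only point needing care is that the minimum over $s$ is attained, so that writing $\min$ rather than $\inf$ is legitimate. Each $M_m^{(y,y')}(s)$ is a finite sum of functions $p_m(x\mid y)^{1-s}p_m(x\mid y')^{s}$. These are continuous in $s$ on the compact set $[0,1]$, because all probabilities are strictly positive by assumption. Likewise $(\pi(y')/\pi(y))^s$ is continuous since $\pi>0$. A finite product of continuous functions is continuous, so the minimum is attained. The minimization is also performed separately inside each pairwise term, so separate choices of $s$ for different $y'$ are allowed, consistent with \cref{prop:pairwise_chernoff} holding for every pair.

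There is no real obstacle, since the substantive work lives in the two earlier results. The step I would check most carefully is that \cref{prop:pairwise_chernoff} applies for every $r$, including $r=0$. In that case the bound reduces to $\min_s (\pi(y')/\pi(y))^s\le 1$, which is still valid.
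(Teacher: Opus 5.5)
Your proposal is correct and matches the paper's proof: fix $y$ and $r$, apply \cref{lem:union_bound}, then bound each pairwise summand by \cref{prop:pairwise_chernoff}, and the resulting sum is $\overline{P}_e(y;r)$. Your remarks on attainment of the minimum (continuity on the compact set $[0,1]$) and on the case $r=0$ are correct side checks that the paper leaves implicit.
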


This result provides a tractable surrogate for the otherwise intractable statewise error probability.
Beyond its analytical convenience, the bound also offers useful insights into how query effort affects classification accuracy.

First, the bound exhibits a separable structure in the query vector $r$.
For any fixed pair $(y,y')$ and tilting parameter $s$, the expression consists of a product of terms $(M_m^{(y,y')}(s))^{r_m}$, each corresponding to one model. This means that every additional query from model $m$ contributes a fixed amount of evidence toward correct classification, i.e., a multiplicative reduction in the probability that the competing label $y'$ can overtake the true label $y$.
This separable structure is exactly what makes the surrogate problem \eqref{eq:surrogate_problem} efficient: the effect of querying each model can be evaluated independently and aggregated through simple additive contributions.

Second, the bound provides a natural interpretation of model informativeness.
The quantity $M_m^{(y,y')}(s)$ captures the statistical overlap between the two distributions associated with labels $y$ and $y'$ under model $m$.
When the two distributions are very similar, the overlap is close to one, meaning that observations from this model provide little evidence for distinguishing the two labels.
In contrast, when the distributions are well separated, the overlap is small, and each observation from that model quickly reduces the probability of confusion between $y$ and $y'$.
As a result, models that are more informative for distinguishing specific label pairs will contribute larger reductions in the bound, and an optimal query strategy will naturally allocate more queries to
such models.

Finally, the bound highlights the role of prior information.
The factor $(\pi(y')/\pi(y))^s$ reflects the initial advantage that one label may have over another before any data is observed.
When the true label $y$ has a higher prior probability, the classifier begins with a favorable bias, so less observational evidence is needed to maintain correct classification.
However, when the competing label $y'$ has a stronger prior,
additional observations must compensate for this disadvantage.
In this sense, the bound explicitly captures the interplay between
prior beliefs and the information obtained from data.

Taken together, these properties explain why the bound is well suited for query design.
Although it replaces the exact error probability with an upper bound, it preserves the key statistical properties that guide the learning process: prior information, the discriminative power of each model, and the cumulative effect of additional observations.
At the same time, its separable structure allows efficient
optimization over the query plan $r$ and makes it a practical tool for designing cost-effective data collection strategies.

\subsection{Tightness of Optimization Decisions}\label{subsec:opt_tightness}
The previous subsection establishes that the surrogate error probability $\overline{P}_e(y;r)$ is a valid upper bound on the true MAP error $P_e(y;r)$. This guarantees that the surrogate problem is feasibility-preserving, but it does not by itself answer the more relevant design question: does the surrogate significantly change the optimal query plan? We note that, for our purposes, the natural notion of tightness is decision-based rather than pointwise. To make this precise, we compare the minimum costs achievable under the exact and surrogate problems.

We fix a vector of error tolerances $\alpha=(\alpha_y)_{y\in\mathcal{Y}}\in(0,1)^L$ and define the true and surrogate feasible sets
\[
\begin{aligned}
\mathcal{R}(\alpha)
&:=
\Big\{
r\in\mathbb{Z}_{\ge 0}^K:
P_e(y;r)\le \alpha_y \ \text{for all }y\in\mathcal{Y}
\Big\}, \ \overline{\mathcal{R}}(\alpha) :=
\Big\{
r\in\mathbb{Z}_{\ge 0}^K:
\overline{P}_e(y;r)\le \alpha_y \ \text{for all }y\in\mathcal{Y}
\Big\}.
\end{aligned}
\]
Using these feasible sets, we can define the optimal costs associated with each formulation.

\begin{definition}[Optimal costs]
\label{def:opt_costs}
The true and surrogate optimal costs are
\[
\mathrm{OPT}(\alpha)
:=
\min_{r\in\mathcal{R}(\alpha)} C(r),
\qquad
\overline{\mathrm{OPT}}(\alpha)
:=
\min_{r\in\overline{\mathcal{R}}(\alpha)} C(r),
\]
with the convention $\min\emptyset:=+\infty$.
\end{definition}

Whenever the relevant feasible set is nonempty, the corresponding minimum is attained because $c_m>0$ for every $m$ and each cost sublevel set $\{r\in\mathbb{Z}_{\ge 0}^K: C(r)\le c\}$ is finite.
Moreover, the statewise Chernoff bound in \Cref{thm:statewise_chernoff} implies the inclusion $\overline{\mathcal{R}}(\alpha)\subseteq \mathcal{R}(\alpha)$, and hence the following result.

\begin{corollary}[Surrogate conservatism]
\label{cor:surrogate_conservative}
For every tolerance vector $\alpha\in(0,1)^L$,
\[
\mathrm{OPT}(\alpha)\le \overline{\mathrm{OPT}}(\alpha).
\]
\end{corollary}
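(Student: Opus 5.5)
The argument is a direct set-inclusion comparison of the two minimization problems in \cref{def:opt_costs}, relying on \Cref{thm:statewise_chernoff}. The first step is to show $\overline{\mathcal{R}}(\alpha)\subseteq\mathcal{R}(\alpha)$. Take any $r\in\overline{\mathcal{R}}(\alpha)$, so that $\overline{P}_e(y;r)\le\alpha_y$ holds for every $y\in\mathcal{Y}$. \Cref{thm:statewise_chernoff} gives $P_e(y;r)\le\overline{P}_e(y;r)$ for the same $r$ and every $y$. Chaining the two inequalities yields $P_e(y;r)\le\alpha_y$ for all $y$, and hence $r\in\mathcal{R}(\alpha)$.

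The second step converts this inclusion into the inequality between optimal values. There are two cases. If $\overline{\mathcal{R}}(\alpha)=\emptyset$, the convention $\min\emptyset=+\infty$ gives $\overline{\mathrm{OPT}}(\alpha)=+\infty$, and the inequality is trivial. If $\overline{\mathcal{R}}(\alpha)\neq\emptyset$, the minimum defining $\overline{\mathrm{OPT}}(\alpha)$ is attained, as noted before the corollary, because cost sublevel sets are finite. Let $\bar r$ be a minimizer. By the first step $\bar r\in\mathcal{R}(\alpha)$, so $\mathcal{R}(\alpha)$ is nonempty and
\[
\mathrm{OPT}(\alpha)\le C(\bar r)=\overline{\mathrm{OPT}}(\alpha).
\]
This is just the fact that minimizing the same objective over a larger set cannot give a larger value.

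No step here is a genuine obstacle. The only point needing care is the empty-set convention, which makes the statement meaningful even when the surrogate problem is infeasible. All the substantive work, namely the union bound of \cref{lem:union_bound} and the Chernoff estimate of \cref{prop:pairwise_chernoff}, is already contained in \Cref{thm:statewise_chernoff}, which I take as given.
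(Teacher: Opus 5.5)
Your proof is correct and follows essentially the same route as the paper, which derives the inclusion $\overline{\mathcal{R}}(\alpha)\subseteq\mathcal{R}(\alpha)$ from \Cref{thm:statewise_chernoff} and concludes that minimizing the same cost over the larger set cannot give a larger value. The paper leaves the empty-set convention implicit; your proof treats it explicitly.
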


\cref{cor:surrogate_conservative} suggests that the surrogate problem is conservative. We point out that since the surrogate bound $\overline{P}_e(y;r)$ is obtained by combining a union bound across competing labels with pairwise Chernoff bounds, the gap between the true error $P_e(y;r)$ and the surrogate $\overline{P}_e(y;r)$ at a fixed query vector $r$ is not the most informative metric. What matters in practice is whether replacing the exact constraints by the surrogate changes the optimal cost $\mathrm{OPT}(\alpha)$ required to achieve the prescribed error tolerances.

To state our main result in this section, we impose one mild regularity condition beyond the standing assumptions of Section~\ref{s:model}. We require that the log-likelihood ratios are uniformly bounded, that is, there exists $B<\infty$ such that
\[
\left|
\log\left(\frac{p_m(x\mid y')}{p_m(x\mid y)}\right)
\right|
\le B
\qquad
\text{for all }m,\ y\neq y',\ x\in\mathcal{X}_m.
\]
This condition is standard in finite-alphabet settings and, together with the assumption that every label pair is distinguishable by at least one model, rules out degenerate cases in which a small number of observations can dominate the comparison between two labels. Let $\alpha_{\min}:=\min_{y\in\mathcal{Y}} \alpha_y$ denote the smallest error tolerance. The following result shows that the additional cost from using the surrogate vanishes as $\alpha_{\min}$ approaches 0.

\begin{theorem}[Optimization-level tightness]
\label{thm:optight}
Suppose the uniform boundedness condition above is satisfied. Then for all sufficiently small $\alpha_{\min}$, if $\mathrm{OPT}(\alpha)<\infty$,
\[
1
\le
\frac{\overline{\mathrm{OPT}}(\alpha)}{\mathrm{OPT}(\alpha)}
\le
1
+
O\!\left(
\frac{
\log\log\!\left(1/\alpha_{\min}\right)
}{
\log\!\left(1/\alpha_{\min}\right)
}\right).
\]
In particular, this ratio approaches $1$ as $\alpha_{\min}\to 0$.
\end{theorem}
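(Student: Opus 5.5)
The lower bound $1\le \overline{\mathrm{OPT}}(\alpha)/\mathrm{OPT}(\alpha)$ is \cref{cor:surrogate_conservative}. For the upper bound, the plan is to take an optimal true plan $r^\star\in\mathcal{R}(\alpha)$ and build from it a surrogate-feasible plan of cost at most $(1+\delta)\,C(r^\star)$, with $\delta=O(\log\log(1/\alpha_{\min})/\log(1/\alpha_{\min}))$. The candidate is the inflated plan $\tilde r:=\lceil (1+\delta) r^\star\rceil$, possibly with a fixed bounded additive correction. Integer rounding costs only an additive $O(\sum_m c_m)$, which is negligible relative to $C(r^\star)$. This uses the scale fact below.

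\textbf{Step 1: scale of $\mathrm{OPT}$.} I would show $C(r^\star)=\Omega(\log(1/\alpha_{\min}))$. Let $y$ attain $\alpha_{\min}$. For any competitor $y'$, $P_e(y;r)$ is at least the probability, under $y$, that the sample looks exactly like a typical sample from $y'$. By the bounded-LLR condition, each observation changes the log-likelihood ratio by at most $B$. Hence under $Y=y$, the probability of the event $\{\hat Y\ne y\}$ is at least $e^{-B n}$ times a constant times the corresponding probability under $y'$, where $n=\sum_m r_m$. I would use a change-of-measure argument on the pair $(y,y')$ and consider both $P_e(y;r)$ and $P_e(y';r)$. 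Their prior-weighted sum is at least the Bayes error of the binary test between $y$ and $y'$. That error is $\ge \tfrac12\min(\pi)\,e^{-Bn}$, because each sample's likelihood ratio lies within $e^{\pm Bn}$.

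This bounds $\max(\alpha_y,\alpha_{y'})$ rather than $\alpha_{\min}$. To repair this, I would use the sharper binary lower bound via the Chernoff information $\sum_m r_m D_m^{(y,y')}$ in the next step and treat the constants carefully. In any case the result is $n\ge c\log(1/\alpha_{\min})$, and hence $C(r^\star)\ge c'\log(1/\alpha_{\min})$.

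\textbf{Step 2: matching exponent (the main obstacle).} For each pair $(y,y')$, let $E_{y'}(r):=-\min_s\big[s\log\tfrac{\pi(y')}{\pi(y)}+\sum_m r_m\log M_m^{(y,y')}(s)\big]$ be the surrogate exponent. I need a lower bound on the true error of the form
\[
P_e(y;r)\ \ge\ \frac{c}{\sqrt{n}}\,e^{-E_{y'}(r)}\quad\text{or at least}\quad P_e(y;r)+\tfrac{\pi(y')}{\pi(y)}P_e(y';r)\ \ge\ \frac{c}{\sqrt n}\,e^{-E_{y'}(r)}.
\]
This is the lower-tail counterpart of \cref{prop:pairwise_chernoff}. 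I would prove it by exponential tilting at the optimal $s^\star$ of the sum of independent, non-identically distributed, bounded LLRs. Under the tilted measure the sum has mean zero and variance $\Theta(n)$ (bounded LLRs, finite alphabets). A Berry--Esseen or local estimate then gives mass $\Omega(1)$ on a window of width $O(\sqrt n)$ near $0$, which costs a prefactor $e^{-O(\sqrt n)}$ at worst, or $\mathrm{poly}(n)^{-1}$ with a local CLT.

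The delicate points are three. The variance may degenerate for pairs that $r^\star$ barely separates. The MAP event $\{\hat Y\ne y\}$ contains $\{\Delta_{y,y'}\ge0\}$ only up to ties and other competitors. And $\alpha_y$ and $\alpha_{y'}$ are asymmetric; I would handle this by applying the bound to whichever state's constraint dominates. If only the cruder $e^{-O(\sqrt n)}$ prefactor is available, I would fall back on a polynomial-prefactor bound via the method of types, since alphabets are finite, which gives $(n+1)^{-|\mathcal X|K}$.

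\textbf{Step 3: inflation.} From Step 2, feasibility of $r^\star$ gives $E_{y'}(r^\star)\ge \log(1/\alpha_y)-O(\log n)$ for every pair. The exponent is concave and positively $1$-homogeneous in $r$, up to the bounded prior term. Hence $E_{y'}((1+\delta)r^\star)\ge(1+\delta)E_{y'}(r^\star)-O(1)$. Choose $\delta$ so that $\delta\log(1/\alpha_y)\ge O(\log n)+\log(L-1)+O(1)$. Then each summand of $\overline P_e(y;\tilde r)$ is at most $\alpha_y/(L-1)$, so $\tilde r\in\overline{\mathcal R}(\alpha)$. By Step 1 and $n=O(C(r^\star))$, $\log n=O(\log\log(1/\alpha_{\min}))$ once one also bounds $n$ above by a constant times $\log(1/\alpha_{\min})$ at the optimum; such a bound follows by comparing with an explicit surrogate-feasible plan, uniform in $\alpha$, of cost $O(\log(1/\alpha_{\min}))$. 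This gives $\delta=O(\log\log(1/\alpha_{\min})/\log(1/\alpha_{\min}))$. The ratio bound follows, with the rounding term absorbed by Step 1.
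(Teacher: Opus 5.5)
\textbf{The gap is in Step 3: you go from ``$\delta\log(1/\alpha_y)\ge O(\log n)+\dots$'' to ``$\delta=O(\log\log(1/\alpha_{\min})/\log(1/\alpha_{\min}))$'', and that step does not follow.} The requirement must hold for every $y$. Since $\alpha_y\ge\alpha_{\min}$, it forces $\delta\gtrsim\log n/\log(1/\max_y\alpha_y)$, so $\delta$ is set by the \emph{largest} tolerance. Uniform scaling then charges this $\delta$ against the whole cost $C(r^\star)\asymp\log(1/\alpha_{\min})$.

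The problem is structural, not a loose constant. Scaling adds to each pairwise exponent an amount proportional to that exponent. The deficit you must cover, $\tfrac12\log n+\log(L-1)+O(1)$, does not shrink with it. So pairs whose informative models are queried only lightly dictate $\delta$ for all models.

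A concrete instance: take $L=3$, a model $A$ separating label $1$ from $\{2,3\}$, and a model $B$ separating $3$ from $\{1,2\}$. Set $\alpha_1=\alpha_{\min}$ and $\alpha_2=\alpha_3=\exp(-\sqrt{\log(1/\alpha_{\min})})$, so all tolerances vanish.
\begin{itemize}
\item Then $r^\star_A\asymp\log(1/\alpha_{\min})$ but $r^\star_B\asymp\sqrt{\log(1/\alpha_{\min})}$.
\item The pair $(2,3)$ is separated only by $B$ and needs $\Theta(\log\log(1/\alpha_{\min}))$ extra $B$-queries, because the $1/\sqrt n$ prefactor is sharp (Bahadur--Rao).
\item Hence $\delta\gtrsim\log\log(1/\alpha_{\min})/\sqrt{\log(1/\alpha_{\min})}$.
\item $\lceil(1+\delta)r^\star\rceil$ therefore pays an extra $\Theta\bigl(\sqrt{\log(1/\alpha_{\min})}\,\log\log(1/\alpha_{\min})\bigr)$ on model $A$.
\end{itemize}
The resulting ratio is only $1+\Theta\bigl(\log\log(1/\alpha_{\min})/\sqrt{\log(1/\alpha_{\min})}\bigr)$, not the claimed rate.

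A smaller point: $E_{y'}$ is a pointwise max of affine functions of $r$, hence convex, not concave. The inequality $E_{y'}((1+\delta)r)\ge(1+\delta)E_{y'}(r)-\delta\,|\log(\pi(y')/\pi(y))|$ that you need holds anyway, by evaluating at the old tilt.

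The paper avoids this by \emph{additive} padding rather than scaling. The tightness lemma shows that a true-feasible $r^\star$ has surrogate slack $\overline{P}_e(y;r^\star)\le (L-1)\alpha_y\sqrt{n}/\gamma$. The inflation factor $(L-1)\sqrt n/\gamma$ is independent of $\alpha_y$. One then uses $r^\star+k\mathbf{1}$.

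The old optimal tilts lie in $[\delta,1-\delta]$ by the centering lemma. This is why one first needs $n_{y,y'}(r^\star)\ge n_0$ for every pair. At such tilts each full round shrinks every pairwise proxy by the uniform factor $\vartheta<1$. So $k=O(\log n)=O(\log\log(1/\alpha_{\min}))$ rounds fix all constraints simultaneously, at additive cost $k\sum_m c_m$. That cost is then divided by $\mathrm{OPT}(\alpha)\gtrsim\log(1/\alpha_{\min})$.

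Your Steps 1--2 are essentially the paper's ingredients, with two simplifications available. First, Berry--Esseen on a \emph{constant}-width window $(0,H]$ already gives tilted mass $\Omega(1/\sqrt n)$: the normal mass $\asymp H/\sqrt n$ beats the $O(1/\sqrt n)$ error once $H$ is a large constant, and $e^{-s^\star\Delta}\ge e^{-H}$ on that window. So there is no $e^{-O(\sqrt n)}$ loss, and you need neither a local CLT nor the method of types. Working with the strict event $\{\Delta_{y,y'}>0\}$ disposes of ties. Second, Step 1 follows by applying that lemma to $y^\star\in\arg\min_y\alpha_y$ together with the trivial exponential lower bound on the proxy. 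The $\max(\alpha_y,\alpha_{y'})$ issue you raise therefore never arises.
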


The tightness result in \Cref{thm:optight} shows that the Chernoff surrogate is asymptotically exact at the optimization level. Although the surrogate may exclude some query plans that are feasible for the original problem, the resulting increase in the optimal cost $\mathrm{OPT}(\alpha)$ is negligible relative to the true optimum once the error tolerances become sufficiently small. In other words, in the high-reliability regime, solving the tractable surrogate problem yields essentially the same minimum cost as solving the intractable exact problem.

The threshold for how small  $\alpha_{\min}$ should be is a problem-dependent constant, determined by the prior, the log-likelihood bound $B$, cost structure, and the Chernoff affinity factors, but not by the tolerance vector $\alpha$ itself. It delineates the high-reliability regime in which $\mathrm{OPT}(\alpha)$ is large enough (of order $\log(1/\alpha_{\min})$) for the additive padding cost to become negligible. For any concrete instance, this threshold can in principle be computed from the model parameters. In practice, the requirement of $\alpha_{\min}$ being sufficiently small is mild, since the regime of interest is precisely when error tolerances are not big.

The proof hinges on two observations. First, for each label pair $(y,y')$, the optimized Chernoff proxy matches the true pairwise comparison probability up to a sub-exponential factor of order $1/\sqrt{n}$, where $n$ is the number of queries allocated to models that actually distinguish $y$ from $y'$. Thus the surrogate captures the correct exponential decay rate of the error and loses only a polynomial prefactor. Second, this remaining gap can be removed with very little additional query effort. Specifically, each additional round of queries to every model shrinks each pairwise Chernoff term by a fixed factor $\vartheta\in(0,1)$ (see \cref{lem:app_theta}).
Since any plan that is optimal for the original problem already requires total query effort of order $\log(1/\alpha_{\min})$, only $O(\log\log(1/\alpha_{\min}))$ additional rounds are needed to absorb the polynomial prefactor and allow the plan feasible for the surrogate. This produces a lower-order additive cost, which is precisely what drives the cost ratio in \Cref{thm:optight} to one.

We note that the practical implication is significant. The surrogate is not just a convenient analytical relaxation. It preserves the first-order cost trade-off that controls how query effort should be distributed across models. This observation makes the upper bound construction in Section~\ref{subsec:upper_bound} appropriate not only for proving feasibility, but also for guiding the actual query design. The next subsection leverages this structure to describe how the surrogate problem can be solved computationally.

\subsection{Approximation Scheme for the Surrogate Problem}
This section develops an approximation algorithm for the surrogate problem \eqref{eq:surrogate_problem}. The constraint in~\eqref{eq:surrogate_problem} involves $\overline{P}_e(y;r)$, which itself contains a minimization problem over the Chernoff parameter $s\in[0,1]$ for each pairwise term.
In what next, we provide a reformulation of the surrogate problem which removes the inner minimization over parameter $s$.
The following proposition shows that this is possible without any loss of optimality or feasibility.

\begin{proposition}[Reformulation]
\label{prop:joint_reformulation}
Let $\mathcal{P}=\{(y,y')\in\mathcal{Y}^{2}:y\neq y'\}$ denote the set of ordered pairs.
The surrogate problem~\eqref{eq:surrogate_problem} is equivalent to the joint problem
\begin{equation}\label{eq:joint_reformulation}
  \begin{aligned}
     \min_{\substack{r\in\mathbb{Z}_{\ge 0}^{K},\\[2pt]
                  s\,=\,(s_{(y,y')})_{(y,y')\in\mathcal{P}}\,\in\,[0,1]^{|\mathcal{P}|}}} &C(r) \\
    \text{s.t.} \quad
  & \sum_{y'\neq y}
  \left(\frac{\pi(y')}{\pi(y)}\right)^{\!s_{(y,y')}}
  \prod_{m=1}^{K}\!\bigl(M_{m}^{(y,y')}(s_{(y,y')})\bigr)^{r_{m}}
  \;\le\;\alpha_{y},
  \qquad\forall\,y\in\mathcal{Y}.
  \end{aligned}
\end{equation}
\end{proposition}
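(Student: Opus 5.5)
We show that the $r$-projection of the feasible set of \eqref{eq:joint_reformulation} coincides with $\overline{\mathcal{R}}(\alpha)$. The objective $C(r)$ does not depend on $s$, so equal projections give equal optimal values. They also let optimal solutions be mapped back and forth: an optimal $r$ of one problem is optimal for the other, padded with a suitable $s$ when passing to the joint problem.

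\emph{Step 1 (joint feasible $\Rightarrow$ surrogate feasible).} Let $(r,s)$ be feasible for \eqref{eq:joint_reformulation}. For each $y$ and $y'\neq y$, the minimum over $s\in[0,1]$ is bounded above by the value at $s_{(y,y')}$:
\[
\min_{s\in[0,1]}\Bigl(\tfrac{\pi(y')}{\pi(y)}\Bigr)^{s}\prod_{m}\bigl(M_m^{(y,y')}(s)\bigr)^{r_m}\le \Bigl(\tfrac{\pi(y')}{\pi(y)}\Bigr)^{s_{(y,y')}}\prod_{m}\bigl(M_m^{(y,y')}(s_{(y,y')})\bigr)^{r_m}.
\]
Summing over $y'\neq y$ gives $\overline{P}_e(y;r)\le\alpha_y$, so $r\in\overline{\mathcal{R}}(\alpha)$.

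\emph{Step 2 (surrogate feasible $\Rightarrow$ joint feasible).} Let $r\in\overline{\mathcal{R}}(\alpha)$. For each ordered pair $(y,y')$, choose $s_{(y,y')}$ to be a minimizer of
\[
g_{(y,y')}(s):=(\pi(y')/\pi(y))^{s}\prod_m M_m^{(y,y')}(s)^{r_m}.
\]
The minimizers for different pairs can be picked independently, because the variable $s_{(y,y')}$ appears only in the single term of constraint $y$ indexed by $y'$. With this choice, the left-hand side of constraint $y$ in \eqref{eq:joint_reformulation} equals exactly $\overline{P}_e(y;r)\le\alpha_y$, so $(r,s)$ is feasible for the joint problem.

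\emph{Step 3 (existence of minimizers).} The main technical point, which is minor, is to justify that the minimum in Step 2 is attained, i.e.\ that $g_{(y,y')}$ is continuous on $[0,1]$. Since $\mathcal{X}_m$ is finite and $p_m(x\mid y)>0$, each map $s\mapsto p_m(x\mid y)^{1-s}p_m(x\mid y')^{s}$ is continuous and positive. Hence $M_m^{(y,y')}$ is continuous and positive, and so is the finite product $g_{(y,y')}$. Compactness of $[0,1]$ then gives attainment by Weierstrass, which also justifies writing $\min$ in \cref{thm:statewise_chernoff}.

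\emph{Conclusion.} Combining Steps 1 and 2, both problems minimize the same objective $C(r)$ over the same set of feasible $r$. Hence their optimal values agree, including the infeasible case where both equal $+\infty$. Optimal solutions also correspond as described above.
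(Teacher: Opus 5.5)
Your proof is correct and is essentially the argument the paper relies on. The paper gives no standalone proof of this proposition, but in the proofs of \cref{prop:app_fixeds_dp} and \cref{thm:afptas} it uses exactly your two directions: the value at any fixed $s$ dominates the minimum over $s$, so joint-feasible implies surrogate-feasible; and choosing a pairwise minimizer $s^\star_{(y,y')}$ (which exists by compactness of $[0,1]$ and continuity) makes each joint constraint coincide with $\overline{P}_e(y;r)\le\alpha_y$.
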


Proposition~\ref{prop:joint_reformulation} brings two benefits.
First, feasibility becomes log-linear in $r$ for fixed $s$: with $s$ held constant, each constraint in~\eqref{eq:joint_reformulation} reduces to a sum of exponentials that are log-linear in $r$, leading to a convex and separable subproblem amenable to standard solvers.
Second, the tilting parameters $s_{(y,y')}$ are now explicit decision variables on the compact domain $[0,1]^{|\mathcal{P}|}$, which opens the door to discretisation: approximating this domain with a finite grid of resolution $\delta$ reduces the problem to a finite collection of integer subproblems, each with fixed $s$.
 
Based on this formulation, we next propose an algorithm that solves the surrogate problem efficiently.
This algorithm follows asymptotic fully polynomial-time approximation scheme (AFPTAS).

\begin{breakablealgorithm}
\caption{AFPTAS for the Chernoff Surrogate Design Problem}
\label{alg:afptas}
\begin{algorithmic}[1]
\STATE \textbf{Input:} $(K,\mathcal{Y},\mathcal{P},\{M_m^{(y,y')}\},\{c_m\},\pi,\alpha)$,
       accuracy $\varepsilon\in(0,1]$,
       query-count bound $N_{\max}$,
       scale parameter $\Lambda$
\STATE \textbf{Output:} surrogate-feasible design $\widehat r \in \mathbb{Z}_{\ge 0}^K$
\STATE Set grid mesh $h \gets \log(1+\varepsilon)/\Lambda$
\STATE Build $s$-grid $\mathcal G_\varepsilon \gets \{0,h,2h,\dots,\lfloor 1/h\rfloor h,1\}^{|\mathcal P|}$
\FOR{each grid point $s \in \mathcal G_\varepsilon$}
  \STATE Set rounding scale $\Delta \gets \log(1+\varepsilon)/N_{\max}$
         and state-space cutoff $T_{\max} \gets \lceil B N_{\max}/\Delta \rceil$
  \FOR{$m=1,\dots,K$, $p\in\mathcal P$}
    \STATE Round discrimination weight $\widetilde w_{m,p}(s) \gets \left\lfloor -\log M_m^{(y,y')}(s_p)/\Delta \right\rfloor$
  \ENDFOR
  \STATE $\mathrm{DP}[\mathbf 0] \gets 0$
  \FOR{$t \in \{0,\dots,T_{\max}\}^{|\mathcal P|}$ in non-decreasing $\|t\|_1$ order}
    \STATE $\mathrm{DP}[t] \gets \min_m \{ c_m + \mathrm{DP}[(t-\widetilde w_m)_+] \}$
  \ENDFOR
  \STATE Find cheapest feasible state $t^\dagger(s) \gets \arg\min \{\mathrm{DP}[t] : \sum_{y'\neq y} \left(\frac{\pi(y')}{\pi(y)}\right)^{s_p} e^{-\Delta t_p}\le \alpha_y,\ \forall y\in\mathcal Y \}$
  \STATE Recover design $r^\dagger(s)$ by backtracking from $t^\dagger(s)$
\ENDFOR
\STATE $\widehat r \gets \arg\min_{s\in\mathcal G_\varepsilon} C(r^\dagger(s))$
\STATE \textbf{return} $\widehat r$
\end{algorithmic}
\end{breakablealgorithm}

Algorithm~\ref{alg:afptas} proceeds in two stages. In the first stage, it discretises the continuous tilting parameter
$s \in [0,1]^{|\mathcal{P}|}$ into a finite $s$-grid $\mathcal{G}_\varepsilon$ by setting a mesh $h = \log(1+\varepsilon)/\Lambda$, where $\Lambda$ is a scale parameter controlling how fast the pairwise Chernoff bound changes as $s$ varies. The mesh is calibrated so that adjusting any tilting parameter to the nearest grid point inflates each pairwise error bound by at most a factor of $(1+\varepsilon)$. 
In the second stage, the algorithm iterates over every
grid point $s \in \mathcal{G}_\varepsilon$ and runs a fixed-$s$ dynamic program. At each grid point, the discrimination weights
$w_{m,p}(s) = -\log M_m^{(y,y')}(s_p)$ are rounded down to integer multiples of the rounding scale $\Delta = \log(1+\varepsilon)/N_{\max}$, which produce integer weights $\widetilde{w}_{m,p}(s)$. 
Rounding down is conservative: it understates how much each query discriminates, so any design the DP allows feasible remains feasible under the exact weights. The DP itself is an
unbounded-knapsack recursion over states $t \in \{0,\ldots,T_{\max}\}^{|\mathcal{P}|}$, where $\mathrm{DP}[t]$
records the minimum cost to accumulate at least $t_p$ rounded discrimination units for every pair $p$. The cheapest state whose associated conservative error bound $\sum_{y' \neq y} \left(\frac{\pi(y')}{\pi(y)}\right)^{s_p} e^{-\Delta t_p}$ satisfies all constraints is selected as $t^\dagger(s)$, and the corresponding integer design $r^\dagger(s)$ is recovered by backtracking. The algorithm finally returns the cheapest design across all grid points.

\begin{theorem}[AFPTAS guarantee]
\label{thm:afptas}
The design $\widehat{r}$ returned by Algorithm~\ref{alg:afptas} is surrogate-feasible, and for sufficiently small $\alpha_{\min}$, the
multiplicative guarantee
\begin{equation}
  \label{eq:multiplicative}
  C(\widehat{r})
  \;\leq\;
  (1+\varepsilon)\,\overline{\mathrm{OPT}}(\alpha)
\end{equation}
holds. Algorithm~\ref{alg:afptas} is polynomial in $K$, $\log(1/\alpha_{\min})$, and $1/\varepsilon$ for fixed $L$.
\end{theorem}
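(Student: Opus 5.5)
The argument has three parts: feasibility, the approximation ratio, and the running time. Throughout, write $w_{m,p}(s):=-\log M_m^{(y,y')}(s_p)\ge 0$, where nonnegativity follows from Hölder's inequality since $M\le 1$.

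\textbf{Feasibility.} Fix a grid point $s$ and let $r^\dagger=r^\dagger(s)$ be recovered from $t^\dagger$. Because the DP recursion only reaches $t$ by adding the rounded vectors $\widetilde w_m$, we have $\sum_m r^\dagger_m \widetilde w_{m,p}\ge t^\dagger_p$ for every pair $p$. Rounding down gives $\Delta\widetilde w_{m,p}\le w_{m,p}(s)$. Hence
\[
\prod_m M_m^{(y,y')}(s_p)^{r^\dagger_m}=e^{-\sum_m r^\dagger_m w_{m,p}}\le e^{-\Delta t^\dagger_p}.
\]
The selection criterion in the algorithm therefore implies the constraint of \eqref{eq:joint_reformulation} at $(r^\dagger,s)$. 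By \cref{prop:joint_reformulation}, $r^\dagger$ is surrogate-feasible, since $\overline P_e$ takes the minimum over $s$ and so can only be smaller. The same holds for $\widehat r$.

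\textbf{Ratio.} Let $(r^*,s^*)$ be optimal for \eqref{eq:joint_reformulation}, with cost $\overline{\mathrm{OPT}}$. I would inflate $r^*$ in two steps.

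\emph{(a) Grid error.} Take $\tilde s$ to be the grid point nearest $s^*$ coordinatewise. The function $s\mapsto s\log(\pi(y')/\pi(y))+\sum_m r_m\log M_m(s)$ is Lipschitz on $[0,1]$. Its derivative is bounded by $|\log(\pi(y')/\pi(y))|+B\sum_m r_m$, because $\frac{d}{ds}\log M$ is a tilted mean of the log-likelihood ratio and so lies in $[-B,B]$. Choosing $\Lambda$ to be of order $\log(1/\min\pi)+BN_{\max}$, each pairwise term is inflated by at most a factor $1+\varepsilon$.

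\emph{(b) Weight rounding.} Each query loses at most $\Delta$ in every coordinate, so at most $N_{\max}\Delta=\log(1+\varepsilon)$ in total. This is a further factor $1+\varepsilon$.

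After (a) and (b), $r^*$ satisfies the DP-selection criterion only with $\alpha_y$ replaced by $(1+\varepsilon)^2\alpha_y$. To absorb this loss, I would pad $r^*$ with $k=O(\log(1+\varepsilon)/\log(1/\vartheta))=O(\varepsilon)$ extra rounds of queries to every model, using \cref{lem:app_theta}: each round shrinks every pairwise term by $\vartheta$, and one needs this uniformly over $s$. The additive cost is then $O(\varepsilon\sum_m c_m)$. From the proof of \cref{thm:optight}, $\overline{\mathrm{OPT}}\ge \mathrm{OPT}=\Omega(\log(1/\alpha_{\min}))$, so for $\alpha_{\min}$ small this additive term is at most $\varepsilon\,\overline{\mathrm{OPT}}$. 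Rescaling $\varepsilon$ by a constant then gives \eqref{eq:multiplicative}; this is exactly where the "asymptotic" qualifier enters.

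It remains to check that the padded $r^*$ is representable in the DP. This requires $N_{\max}$ to be at least its total query count, which is $O(\log(1/\alpha_{\min}))$ by comparison with the uniform plan. It also requires states to be capped at $T_{\max}$: capping via $(\cdot)_+$ and the bound $w\le B$ guarantees that accumulated units never exceed $BN_{\max}/\Delta$. The DP value at the state reached by the padded $r^*$ is at most its cost, so $C(r^\dagger(\tilde s))\le C(r^*_{\text{pad}})$.

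\textbf{Running time.} The grid has $(1/h+2)^{|\mathcal P|}=O((\Lambda/\varepsilon))^{L^2}$ points. The DP has $(T_{\max}+1)^{|\mathcal P|}=O((BN_{\max}^2/\varepsilon))^{L^2}$ states, each costing $O(K)$. With $N_{\max},\Lambda=O(\log(1/\alpha_{\min}))$, the total is polynomial in $K$, $\log(1/\alpha_{\min})$ and $1/\varepsilon$ for fixed $L$.

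\textbf{Main obstacle.} I expect the hardest step to be the padding argument: showing that a uniform $\vartheta<1$ exists over all $s\in[0,1]$, including the endpoints $s\in\{0,1\}$ where $M=1$. I would handle this by restricting $s^*$ away from the endpoints, which is harmless for small $\alpha$ because the optimal tilt for a nontrivial constraint is interior, or else by invoking \cref{lem:app_theta} directly. I also need a lower bound $\overline{\mathrm{OPT}}=\Omega(\log(1/\alpha_{\min}))$ that dominates the constant padding cost.
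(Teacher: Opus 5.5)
Your skeleton matches the paper's proof. Feasibility comes from the rounded-down weights plus \cref{prop:joint_reformulation}. A padding of $k(\varepsilon)=\lceil 2\log(1+\varepsilon)/(-\log\vartheta)\rceil$ rounds buys a $(1+\varepsilon)^2$ margin, spent once on grid snapping and once on weight rounding. The additive padding cost is then converted using a logarithmic lower bound on $\overline{\mathrm{OPT}}(\alpha)$, and the grid/state count is the same. The paper pads \emph{before} snapping, so it applies the contraction at $s^\star$ itself. In your order the contraction must hold at the snapped tilt, which can lie up to $h$ outside $[\delta,1-\delta]$; this is harmless, but you should either enlarge the interval defining $\vartheta$ or reorder. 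For the lower bound, the paper uses the elementary estimate $M_m^{(y,y')}(s)\ge e^{-B}$ (\cref{lem:app_cost_lb}). Your detour through $\overline{\mathrm{OPT}}\ge\mathrm{OPT}$ and the Berry--Esseen argument behind \cref{thm:optight} is valid but much heavier.

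The step you flag as the main obstacle is the one non-routine ingredient, and your justification does not close it. Being interior is not enough. You need $s^\star_p\in[\delta,1-\delta]$ with $\delta$ independent of $\alpha$; otherwise $\prod_m M_m^{(y,y')}(s^\star_p)$ can approach $1$ and no bounded number of rounds restores the margin. An arbitrary optimal pair $(r^\star,s^\star)$ of \eqref{eq:joint_reformulation} also need not use minimizing tilts, so you must choose $s^\star_p\in\arg\min_s\overline{P}_{y,y'}(r^\star;s)$. Invoking \cref{lem:app_theta} ``directly'' does nothing, because its hypothesis is exactly $s\in[\delta,1-\delta]$.

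The paper closes this by showing $n_{y,y'}(r^\star)\ge n_0\ge n_{\mathrm{cen}}$ for every pair and then applying \cref{lem:app_centering}. A more direct route uses surrogate feasibility: $\alpha_y\ge\overline{P}_{y,y'}(r^\star)\ge(\pi_{\min}/\pi_{\max})\,e^{-B\,n_{y,y'}(r^\star)}$, since uninformative models contribute $M=1$. This makes $n_{y,y'}(r^\star)$ large only if $\alpha_y$ or $\alpha_{y'}$ is small, so ``small $\alpha$'' has to mean more than small $\alpha_{\min}$. The paper's use of \cref{lem:app_min_effective_sample} under $\alpha_{\min}<\varepsilon_0$ glosses over the same point, since that lemma only lower-bounds $\max\{P_e(y;r),P_e(y';r)\}$.

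Finally, $k=O(\varepsilon)$ is false. $k$ is the ceiling of a positive number, so $k\ge 1$, and the padding costs $k(\varepsilon)c_\Sigma=\Theta(c_\Sigma)$, not $O(\varepsilon c_\Sigma)$. The conclusion survives only because the threshold may depend on $\varepsilon$. You need $\overline{\mathrm{OPT}}(\alpha)\ge k(\varepsilon)c_\Sigma/\varepsilon$, i.e.\ $\alpha_{\min}\lesssim e^{-Ck(\varepsilon)/\varepsilon}$, which is the paper's $\alpha_0(\varepsilon)$. No rescaling of $\varepsilon$ is needed, since the padding already absorbs both $(1+\varepsilon)$ factors, and rescaling would alter the algorithm's input.
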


Theorem~\ref{thm:afptas} follows from the discretization and conservative rounding in the two stages of Algorithm~\cref{alg:afptas}. First, discretizing the Chernoff tilting parameters onto the grid $\mathcal{G}\varepsilon$ allows that replacing any optimal tilting vector by its nearest grid point increases each pairwise Chernoff bound by at most a multiplicative factor $(1+\varepsilon)$. Second, the dynamic program uses rounded-down discrimination weights, which makes the feasibility check conservative: the rounded weights underestimate the true discrimination power of each query, so any design that is feasible by the DP also satisfies the original surrogate constraints. This rounding introduces only a bounded additive loss in the accumulated discrimination levels. 
When the minimum error tolerance $\alpha_{\min}$ is sufficiently small, the optimal cost necessarily scales at least logarithmically with $1/\alpha_{\min}$. 
As a result, the additive loss becomes negligible relative to the optimal objective value, leading to the multiplicative guarantee $C(\widehat r) \le (1+\varepsilon) \overline{\mathrm{OPT}}(\alpha)$. Overall, Algorithm~\ref{alg:afptas} constitutes an asymptotic fully polynomial-time approximation scheme: although the $(1+\varepsilon)$ approximation ratio is not uniform across all parameter regimes, it holds in the practically relevant small-tolerance regime while the running time remains polynomial in $K$, $\log(1/\alpha_{\min})$, and $1/\varepsilon$ for fixed $L$.

\section{Conclusion} \label{s:conclusion}
This paper addresses the problem of allocating queries across multiple heterogeneous LLMs to classify an unknown label at minimum cost while meeting label-specific accuracy guarantees. Three main contributions emerge from the analysis. First, the query design problem is NP-hard, and the source of hardness is the combinatorial interaction between per-model costs and the requirement that every ground-truth label should be reliably covered. Second, a Chernoff-based surrogate relaxation makes the problem tractable without meaningfully inflating the optimal cost. The surrogate-to-true cost ratio vanishes at rate $\mathcal{O}(\log\log(1/\alpha_{\min}) / \log(1/\alpha_{\min}))$, confirming that the relaxation preserves the economically relevant structure of the original problem when error tolerances are small. Third, an asymptotic fully polynomial-time approximation scheme solves the surrogate to within a $(1+\varepsilon)$ factor of the surrogate optimum.
 
Taken together, these results provide an alternative to the ad-hoc heuristics that currently govern multi-LLM query allocation. Models differ in their per-query costs and in their ability to distinguish specific label pairs, making the optimal allocation fundamentally combinatorial. Our framework shows that this difficulty can be overcome through a surrogate whose optimal plans are near-optimal for the original and can be computed efficiently. The formulation applies broadly to any setting where multiple stochastic classifiers are queried on a shared unknown state, including medical diagnosis, content categorization, and document review.

\newpage 
\bibliographystyle{informs2014}
\bibliography{multi_llm_query}

\newpage

\begin{center}
\LARGE \textbf{Online Appendices to ``Multi-LLM Query Optimization''}
\end{center}

\appendix

\section{Proofs for \cref{s:model}}
\label{app:proofs_model}
This appendix collects the proofs of all results stated in \cref{s:model}.
\subsection{Proof of \cref{thm:nphard}}
\label{proof:nphard}
\begin{proof}[Proof of \cref{thm:nphard}]
We reduce from \textsc{Minimum-Weight Set Cover}, which is
NP-hard~\citep{karp2009reducibility}.
An instance consists of a universe $U = \{1,\ldots,n\}$, sets
$S_1,\ldots,S_K \subseteq U$ with $\bigcup_{j=1}^K S_j = U$, positive
weights $w_1,\ldots,w_K > 0$, and a budget $B_{\mathrm{SC}} > 0$. The
question is whether a sub-collection of total weight at most
$B_{\mathrm{SC}}$ can cover $U$.
 
\noindent\textbf{Construction.}
Given such an instance, we build a query-plan instance as follows.
Set $L = n+1$ and $\mathcal{Y} = \{0,1,\ldots,n\}$, where label~$0$ is a
``null'' confuser and labels $1,\ldots,n$ correspond to universe elements.
Define the prior
\begin{equation}\label{appeq:prior}
  \pi(0) = \frac{1}{2},
  \qquad
  \pi(i) = \frac{1}{2n}, \quad i = 1,\ldots,n.
\end{equation}
Introduce a \emph{discriminator} model $m^*$ with output alphabet
$\{1,\ldots,n\}$, cost $c_{m^*} = \delta' > 0$ (negligibly small),
and conditional distributions
\begin{equation}\label{appeq:disc}
  p_{m^*}(x \mid y) =
  \begin{cases}
    \mathbf{1}[x = i], & y = i \ge 1, \\
    \frac{1}{n},       & y = 0.
  \end{cases}
\end{equation}
For each set $S_j$, introduce a \emph{set-cover} model~$j$ with binary
output alphabet $\{0,1\}$, cost $c_j = w_j$, and, for a fixed
$\varepsilon \in (0, \tfrac{1}{4})$,
\begin{equation}\label{appeq:scmodel}
  p_j(x \mid y) =
  \begin{cases}
    \frac{1}{2},
      & y = 0, \\
    (1-\varepsilon)^{\mathbf{1}[x=1]}\,\varepsilon^{\mathbf{1}[x=0]},
      & y = i \ge 1 \text{ and } i \in S_j, \\
    \frac{1}{2},
      & y = i \ge 1 \text{ and } i \notin S_j.
  \end{cases}
\end{equation}
In words, when $i \in S_j$, model~$j$ outputs~$1$ with probability
$1-\varepsilon$ and~$0$ with probability $\varepsilon$. Otherwise, it
outputs each value with probability $1/2$, identically to its
behavior under label~$0$.
Set tolerances $\alpha_0 = 1 - \delta''$ (nearly trivial, for small
$\delta'' > 0$) and $\alpha_i = 2\varepsilon$ for $i = 1,\ldots,n$, and
budget $B = B_{\mathrm{SC}} + \delta'$.
 
\noindent\textbf{Error analysis.}
We show that the error constraints are equivalent to the covering
requirement. Any cost-optimal plan includes model~$m^*$ exactly once,
since its cost is negligible and it is necessary for separating
non-null labels from one another. Fix the true label $Y = i$ for some
$i \ge 1$, and suppose $r_{m^*} = 1$. Since
$p_{m^*}(i \mid j) = \mathbf{1}[i = j] = 0$ for $j \ge 1$, $j \neq i$,
the observation $X_{m^*,1} = i$ drives all non-null posteriors other
than $\Pr(Y = i \mid X_{m^*,1} = i)$ to zero, i.e., the posterior $\Pr(Y = k \mid X_{m^*,1}=i) = 0$ for every non-null label $k \neq i$.
The log-posterior-odds between labels $0$ and $i$ are then
\[
  \log\frac{\pi(0)\,p_{m^*}(i \mid 0)}{\pi(i)\,p_{m^*}(i \mid i)}
  = \log\frac{\frac{1}{2} \cdot \frac{1}{n}}{\frac{1}{2n} \cdot 1} = 0,
\]
so after model~$m^*$ the MAP decision reduces to a binary comparison
between labels $i$ and $0$ starting from a perfect tie.
 
By inspection of~\eqref{appeq:scmodel}, if $i \notin S_j$ then
$p_j(\cdot \mid i) = p_j(\cdot \mid 0) = (1/2, 1/2)$,
so model~$j$ yields log-likelihood ratio zero for every observation and
is entirely uninformative for breaking this tie. If instead $i \in S_j$,
the per-observation log-likelihood ratio is
\[
  \Lambda_j(X) := \log\frac{p_j(X \mid i)}{p_j(X \mid 0)} =
  \begin{cases}
    \log\bigl(2(1-\varepsilon)\bigr) > 0, & X = 1, \\[4pt]
    \log(2\varepsilon) < 0,               & X = 0.
  \end{cases}
\]
The cumulative log-posterior-odds after all queries are therefore
\[
  \Lambda_{\mathrm{total}}
  = \underbrace{0}_{\text{after }m^*}
  + \sum_{\substack{j:\,i \in S_j \\ r_j \ge 1}} \sum_{t=1}^{r_j}
    \Lambda_j(X_{j,t})
  + \underbrace{\sum_{\substack{j:\,i \notin S_j \\ r_j \ge 1}}
    \sum_{t=1}^{r_j} 0}_{=\,0}.
\]
Suppose there exists $j$ with $i \in S_j$ and $r_j \ge 1$. A
misclassification requires $\Lambda_{\mathrm{total}} \le 0$. With a
single covering query,
\[
  P_e(i;\,r)
  \le \Pr\!\bigl(\Lambda_j(X_{j,1}) \le 0 \mid Y = i\bigr)
  = \Pr(X_{j,1} = 0 \mid Y = i)
  = \varepsilon < 2\varepsilon = \alpha_i.
\]
Additional covering queries can only decrease $P_e(i;\,r)$, since each
contributes a strictly positive expected log-likelihood ratio favoring
the correct label. Now suppose instead that $r_j = 0$ for every $j$ with
$i \in S_j$. Every queried set-cover model satisfies $i \notin S_j$, so
$\Lambda_{\mathrm{total}} = 0$ for every realization of $X(r)$, and
labels $i$ and $0$ have identical posteriors regardless of the data.
Any deterministic tie-breaking rule that does not uniformly favor
label~$i$ yields $P_e(i;\,r) = 1 > 2\varepsilon$; the best randomized
rule (uniform) yields $P_e(i;\,r) = 1/2 > 2\varepsilon$ since
$\varepsilon < 1/4$. Combining both directions: the statewise
constraint $P_e(i;\,r) \le 2\varepsilon$ holds for every $i =
1,\ldots,n$ if and only if $\mathcal{C}(r) := \{j : r_j \ge 1\}$ covers
$U$.
 
\noindent\textbf{Cost equivalence.}\;
Given a set cover $\mathcal{C}^*$ with
$\sum_{j \in \mathcal{C}^*} w_j \le B_{\mathrm{SC}}$, define
$r_j = \mathbf{1}[j \in \mathcal{C}^*]$ and $r_{m^*} = 1$.
This plan is feasible by the above equivalence and has cost
$C(r) = \delta' + \sum_{j \in \mathcal{C}^*} w_j \le B$.
By contrast, given a feasible plan $r$ with $C(r) \le B$, we may assume
without loss of generality that $r_j \in \{0,1\}$ for each set-cover
model (additional queries only increase cost without avoiding feasibility),
so $\mathcal{C}(r)$ is a valid cover with
$\sum_{j \in \mathcal{C}(r)} w_j = C(r) - \delta' \le B_{\mathrm{SC}}$.
 
 
The construction is polynomial-time and the two decision problems have
identical yes/no answers, so NP-hardness of \textsc{Minimum-Weight Set
Cover} implies NP-hardness of the decision problem~\eqref{eq:opt_problem}.
\end{proof}

\section{Proofs for \cref{s:algorithm}}
\label{app:proofs_alg}

This appendix collects the proofs of all results stated in \cref{s:algorithm}.

\subsection{Proof of \cref{lem:union_bound}}
\label{proof:union_bound}

\begin{proof}[Proof of \cref{lem:union_bound}]

\textbf{MAP rule as posterior likelihood maximization.}
By assumption, all likelihoods are strictly positive, and
$\pi(a)>0$ for all $a$.
Therefore the posterior quantity
\[
\pi(a)\prod_{m=1}^K \prod_{t=1}^{r_m} p_m(x_{m,t}\mid a)
\]
is strictly positive for every $a$ and every realization $x(r)$.
Since the logarithm is strictly increasing, maximizing the posterior
probability is equivalent to maximizing the posterior log-likelihood,
and thus
\[
\widehat{Y}_r(x(r))
\in
\arg\max_{a\in\mathcal{Y}}
\left[
\log \pi(a)
+
\sum_{m=1}^{K}\sum_{t=1}^{r_m}
\log p_m(x_{m,t}\mid a)
\right].
\]

\noindent
\textbf{Error event.}
Fix the ground-truth label $y$ and define the error event
\[
\mathcal{E}_y=\{\widehat{Y}_r(X(r))\neq y\}.
\]

If $\mathcal{E}_y$ occurs, then there must exist at least one competing
label $y'\neq y$ such that
\[
\log \pi(y')
+
\sum_{m,t}\log p_m(X_{m,t}\mid y')
\ge
\log \pi(y)
+
\sum_{m,t}\log p_m(X_{m,t}\mid y).
\]

Rearranging leads to
\[
\Delta_{y,y'}(r)\ge 0.
\]

Therefore,
\[
\mathcal{E}_y
\subseteq
\bigcup_{y'\in\mathcal{Y}\setminus\{y\}}
\{\Delta_{y,y'}(r)\ge 0\}.
\]

\noindent
\textbf{Apply the union bound.}
Taking conditional probabilities given $Y=y$ and applying the union bound allows \eqref{eq:union_bound}.
\end{proof}

\subsection{Proof of \cref{prop:pairwise_chernoff}}
\label{proof:pairwise_chernoff}

The proof applies the exponential Markov inequality and evaluates the
resulting moment via an auxiliary factorization lemma.

\begin{lemma}[Exponential moment factorization]
\label{lem:exp_moment_factorization}
For all distinct $y,y'\in\mathcal{Y}$,
$r\in\mathbb{Z}_{\ge0}^K$, and $s\in[0,1]$,
\begin{equation}
\label{eq:exp_moment_factorization}
\E\!\left[
\exp\!\left(
s\sum_{m=1}^{K}\sum_{t=1}^{r_m}
\log
\frac{p_m(X_{m,t}\mid y')}{p_m(X_{m,t}\mid y)}
\right)
\middle| Y=y
\right]
=
\prod_{m=1}^{K}
\left(M_m^{(y,y')}(s)\right)^{r_m}.
\end{equation}

\end{lemma}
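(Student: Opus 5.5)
The plan is to use the conditional independence structure in \eqref{eq:joint_likelihood} to reduce the multi-model expectation to a product of single-observation expectations, and then evaluate each one directly.

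\textbf{Step 1: write the exponential as a product.} Fix distinct $y,y'$, a plan $r$, and $s\in[0,1]$. Since the exponential of a sum is a product of exponentials, the integrand is
\[
\prod_{m=1}^K\prod_{t=1}^{r_m}\left(\frac{p_m(X_{m,t}\mid y')}{p_m(X_{m,t}\mid y)}\right)^{s}.
\]
Each factor is finite and strictly positive, because every likelihood is strictly positive and each alphabet $\mathcal{X}_m$ is finite. Conditional on $Y=y$, the variables $X_{m,t}$ are mutually independent across both $m$ and $t$, and for a fixed $m$ they are identically distributed. The conditional expectation of a product of nonnegative functions of independent variables equals the product of their expectations, so the left side of \eqref{eq:exp_moment_factorization} becomes
\[
\prod_{m=1}^K\left(\E\left[\left(\frac{p_m(X_{m,1}\mid y')}{p_m(X_{m,1}\mid y)}\right)^{s}\,\middle|\,Y=y\right]\right)^{r_m}.
\]

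\textbf{Step 2: compute the single-query expectation.} Because $\mathcal{X}_m$ is finite and $X_{m,1}\sim p_m(\cdot\mid y)$, the expectation is the finite sum
\[
\sum_{x\in\mathcal{X}_m} p_m(x\mid y)\,p_m(x\mid y')^{s}\,p_m(x\mid y)^{-s}
=\sum_{x\in\mathcal{X}_m} p_m(x\mid y)^{1-s}\,p_m(x\mid y')^{s},
\]
which is exactly $M_m^{(y,y')}(s)$ as defined in \eqref{eq:M_def}.

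\textbf{Edge cases.} When $r_m=0$, the empty product equals $1$, which matches $(M_m^{(y,y')}(s))^0=1$. At the endpoints $s=0$ and $s=1$ the formula still holds: $M_m^{(y,y')}(0)=M_m^{(y,y')}(1)=1$, consistent with both sides equal to $1$.

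There is no real obstacle in this argument. The one point to state explicitly is that joint independence across all pairs $(m,t)$ given $Y=y$ is supplied by the model's modeling assumptions. That is what justifies the factorization in Step 1; positivity of the likelihoods and finiteness of the alphabets guarantee every quantity is well defined.
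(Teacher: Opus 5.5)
Your proposal is correct and follows essentially the same route as the paper: you rewrite the exponential as a product over $(m,t)$, factor the conditional expectation using independence given $Y=y$, and evaluate each single-query factor as $\sum_{x} p_m(x\mid y)^{1-s}p_m(x\mid y')^{s}=M_m^{(y,y')}(s)$. Your remarks on the edge cases $r_m=0$ and $s\in\{0,1\}$ are harmless additions that the paper leaves implicit.
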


\begin{proof}[Proof of \cref{lem:exp_moment_factorization}]

Rewrite the exponent as
\begin{align*}
&\exp\!\left(
s\sum_{m=1}^K\sum_{t=1}^{r_m}
\log \left(\frac{p_m(X_{m,t}\mid y')}{p_m(X_{m,t}\mid y)}\right)
\right)
\\
&\qquad
=
\prod_{m=1}^K\prod_{t=1}^{r_m}
\left(\frac{p_m(X_{m,t}\mid y')}{p_m(X_{m,t}\mid y)}\right)^{s}
=
\prod_{m=1}^K\prod_{t=1}^{r_m}
p_m(X_{m,t}\mid y)^{-s}\,p_m(X_{m,t}\mid y')^{s}.
\end{align*}

Conditional on $Y=y$ the random
variables $\{X_{m,t}\}_{m,t}$ are independent across $(m,t)$, and for
each fixed $m$ the repeats are i.i.d.\ with $p_m(\cdot\mid y)$.
Therefore, by the definition of independence,
\begin{align*}
&\E\!\left[
\prod_{m=1}^K\prod_{t=1}^{r_m}
p_m(X_{m,t}\mid y)^{-s}\,p_m(X_{m,t}\mid y')^{s}
\,\bigg|\,
Y=y
\right]
\\
&\qquad
=
\prod_{m=1}^K\prod_{t=1}^{r_m}
\E\!\left[
p_m(X_{m,t}\mid y)^{-s}\,p_m(X_{m,t}\mid y')^{s}
\,\bigg|\,
Y=y
\right].
\end{align*}
For any fixed $(m,t)$, conditional on $Y=y$ we have
$\Pr(X_{m,t}=x\mid Y=y)=p_m(x\mid y)$, hence
\begin{align*}
\E\left[
p_m(X_{m,t}\mid y)^{-s}\,p_m(X_{m,t}\mid y')^{s}
\,\bigg|\,
Y=y
\right]
&=
\sum_{x\in \mathcal{X}_m}
p_m(x\mid y)\,p_m(x\mid y)^{-s}\,p_m(x\mid y')^{s}
\\
&=
\sum_{x\in\mathcal{X}_m}
p_m(x\mid y)^{\,1-s}\,p_m(x\mid y')^{\,s}
=
M_{m}^{(y,y')}(s),
\end{align*}
where the last equality is \eqref{eq:M_def}.
Since the same quantity appears for each $t\in\{1,\ldots,r_m\}$, we
obtain \eqref{eq:exp_moment_factorization}.
\end{proof}

We next establish basic analytic properties of the Chernoff affinity factor $M_m^{(y,y')}(s)$ used throughout the analysis.

\begin{lemma}[Bounds on the Chernoff affinity factor]
\label{lem:M_bounds}
Under Assumption~3.2, for all $s\in[0,1]$,
\[
0 < M_m^{(y,y')}(s) \le 1.
\]
Moreover,
\[
M_m^{(y,y')}(0)=M_m^{(y,y')}(1)=1.
\]
For $s\in(0,1)$, equality $M_m^{(y,y')}(s)=1$ holds if and only if
$p_m(\cdot\mid y)=p_m(\cdot\mid y')$.
\end{lemma}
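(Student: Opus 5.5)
The statement to prove is \cref{lem:M_bounds}: $0<M_m^{(y,y')}(s)\le 1$ on $[0,1]$, $M_m^{(y,y')}(0)=M_m^{(y,y')}(1)=1$, and, for interior $s$, $M_m^{(y,y')}(s)=1$ exactly when the two conditional laws coincide. I would argue directly from \eqref{eq:M_def}, using only that $p_m(\cdot\mid y)$ and $p_m(\cdot\mid y')$ are strictly positive probability vectors on the finite alphabet $\mathcal{X}_m$.

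\textbf{Positivity and endpoints.} Every summand $p_m(x\mid y)^{1-s}p_m(x\mid y')^{s}$ is strictly positive, because both factors are positive, and the alphabet is nonempty. Hence $M_m^{(y,y')}(s)>0$. At $s=0$ the sum reduces to $\sum_x p_m(x\mid y)=1$, and at $s=1$ it reduces to $\sum_x p_m(x\mid y')=1$.

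\textbf{Upper bound.} For $s\in(0,1)$, I would apply the weighted AM--GM inequality $a^{1-s}b^{s}\le (1-s)a+sb$ termwise with $a=p_m(x\mid y)$ and $b=p_m(x\mid y')$. Summing over $x$ gives
\[
M_m^{(y,y')}(s)\le (1-s)\cdot 1+s\cdot 1=1 .
\]
Equivalently, one can use Hölder's inequality with exponents $1/(1-s)$ and $1/s$.

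\textbf{Equality case.} If the two distributions coincide, each summand equals $p_m(x\mid y)$, so the sum is $1$. Conversely, suppose $M_m^{(y,y')}(s)=1$ for some $s\in(0,1)$. Then every termwise AM--GM inequality must be tight, since the sum of the nonnegative slacks $(1-s)a+sb-a^{1-s}b^{s}$ is zero. Strict concavity of the logarithm makes weighted AM--GM strict unless $a=b$. So $p_m(x\mid y)=p_m(x\mid y')$ for every $x$.

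There is no real obstacle here; the only point needing care is the strictness argument in the equality case. There I rely on $s\in(0,1)$, so both weights are positive, and on positivity of the probabilities, so the logarithms are finite. The reference to ``Assumption~3.2'' in the statement is just the standing positivity assumption $p_m(x\mid y)>0$ from \cref{s:model}.
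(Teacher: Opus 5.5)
Your proof is correct and is essentially the paper's argument. The paper applies H\"older's inequality with exponents $1/(1-s)$ and $1/s$, then uses H\"older's equality condition: proportionality $p_m(\cdot\mid y)=\gamma\,p_m(\cdot\mid y')$, with normalization forcing $\gamma=1$. Your termwise weighted AM--GM is exactly the inequality that underlies H\"older for already-normalized probability vectors, and it has a minor advantage: the equality case follows coordinatewise from strict concavity of $\log$, without the proportionality-then-normalization step.
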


\begin{proof}[Proof of \cref{lem:M_bounds}]
\textit{Positivity.}
Under Assumption~3.2, each term
\[
p_m(x\mid y)^{1-s}p_m(x\mid y')^{s}
\]
is strictly positive.
Summing over the finite set $\mathcal{X}_m$ therefore gives
$M_m^{(y,y')}(s)>0$.

\noindent
\textbf{Endpoint values.}
At $s=0$,
\[
M_m^{(y,y')}(0)=\sum_{x\in\mathcal{X}_m} p_m(x\mid y)=1.
\]
At $s=1$,
\[
M_m^{(y,y')}(1)=\sum_{x\in\mathcal{X}_m} p_m(x\mid y')=1.
\]

\noindent
\textbf{Upper bound for $s\in(0,1)$ via H\"older's inequality.}
Let
\[
f(x)=p_m(x\mid y)^{1-s},
\qquad
g(x)=p_m(x\mid y')^{s}.
\]
Choose conjugate exponents
\[
p=\frac{1}{1-s}, \qquad q=\frac{1}{s}.
\]
Applying H\"older's inequality on the finite sum over $\mathcal{X}_m$
gives
\[
\sum_{x} f(x)g(x)
\le
\left(\sum_{x} f(x)^p\right)^{1/p}
\left(\sum_{x} g(x)^q\right)^{1/q}.
\]

Since
\[
f(x)^p=p_m(x\mid y),
\qquad
g(x)^q=p_m(x\mid y'),
\]
both sums equal 1 by normalization.
Hence $M_m^{(y,y')}(s)\le 1$.

\noindent
\textbf{Equality condition.}
H\"older's inequality is tight if and only if
$f(x)^p$ is proportional to $g(x)^q$ for all $x$.
This requires
\[
p_m(x\mid y)=\gamma\,p_m(x\mid y')
\]
for some $\gamma>0$.
Summing over $x$ gives $\gamma=1$, which implies
$p_m(\cdot\mid y)=p_m(\cdot\mid y')$.
\end{proof}


\begin{proof}[Proof of \cref{prop:pairwise_chernoff}]
Fix $s\in[0,1]$. By definition, we have
\[
\Delta_{y,y'}(r)
=
\log\!\left(\frac{\pi(y')}{\pi(y)}\right)
+
\sum_{m=1}^K\sum_{t=1}^{r_m}
\log\!\left(\frac{p_m(X_{m,t}\mid y')}{p_m(X_{m,t}\mid y)}\right).
\]

\noindent
\textbf{Apply Markov's inequality.}
For $s>0$, the map $u\mapsto e^{su}$ is strictly increasing, so
$\Delta_{y,y'}(r)\ge 0$ implies $e^{s\,\Delta_{y,y'}(r)}\ge e^0=1$.
For $s=0$, $e^{s\,\Delta_{y,y'}(r)}=1\ge 1$ holds trivially.
Therefore, for every $s\in[0,1]$,
\[
\{\Delta_{y,y'}(r)\ge 0\}
\subseteq
\{e^{s\,\Delta_{y,y'}(r)}\ge 1\}.
\]
By monotonicity of probability and Markov's inequality applied to the
nonnegative random variable $e^{s\,\Delta_{y,y'}(r)}$ under
$\Pr(\,\cdot\mid Y=y)$,
\begin{equation}
\label{eq:markov_step}
\Pr\!\big(\Delta_{y,y'}(r)\ge 0 \,\big|\, Y=y\big)
\le
\Pr\!\big(e^{s\,\Delta_{y,y'}(r)}\ge 1 \,\big|\, Y=y\big)
\le
\E\!\big[e^{s\,\Delta_{y,y'}(r)} \,\big|\, Y=y\big].
\end{equation}

\noindent
\textbf{Compute the conditional expectation.}
Using the decomposition of $\Delta_{y,y'}(r)$ and the fact that
$e^{s(a+b)}=e^{sa}e^{sb}$, we obtain
\[
e^{s\Delta_{y,y'}(r)}
=
\left(\frac{\pi(y')}{\pi(y)}\right)^{s}
\exp\!\left(
s\sum_{m=1}^K\sum_{t=1}^{r_m}
\log\!\left(\frac{p_m(X_{m,t}\mid y')}{p_m(X_{m,t}\mid y)}\right)
\right).
\]
Taking conditional expectations given $Y=y$ yields
\[
\E\!\left[e^{s\Delta_{y,y'}(r)} \,\big|\, Y=y\right]
=
\left(\frac{\pi(y')}{\pi(y)}\right)^{s}
\E\!\left[
\exp\!\left(
s\sum_{m=1}^K\sum_{t=1}^{r_m}
\log\!\left(\frac{p_m(X_{m,t}\mid y')}{p_m(X_{m,t}\mid y)}\right)
\right)
\,\bigg|\,
Y=y
\right].
\]
By Lemma~\ref{lem:exp_moment_factorization}, the expectation factorizes
as $\prod_m (M_m^{(y,y')}(s))^{r_m}$.
Substituting into \eqref{eq:markov_step} and minimizing over $s\in[0,1]$ proves \eqref{eq:pairwise_chernoff_opt}.
\end{proof}
\subsection{Proof of \cref{thm:statewise_chernoff}}
\label{proof:statewise_chernoff}

\begin{proof}[Proof of \cref{thm:statewise_chernoff}]

Fix $y\in\mathcal{Y}$ and $r\in\mathbb{Z}_{\ge0}^{K}$.
By \cref{lem:union_bound},
\[
P_e(y;r)
\le
\sum_{y'\in\mathcal{Y}\setminus\{y\}}
\Pr\!\big(\Delta_{y,y'}(r)\ge0 \mid Y=y\big).
\]

Applying \cref{prop:pairwise_chernoff} to each term gives
\[
P_e(y;r)
\le
\sum_{y'\in\mathcal{Y}\setminus\{y\}}
\min_{s\in[0,1]}
\left(\frac{\pi(y')}{\pi(y)}\right)^s
\prod_{m=1}^{K}
\left(M_m^{(y,y')}(s)\right)^{r_m}.
\]

This expression equals $\overline{P}_e(y;r)$ defined in
\eqref{eq:statewise_chernoff}, completing the proof.
\end{proof}

\subsection{Proof of \cref{thm:optight}}
\label{app:proof_opt_tightness}

Throughout this subsection, we use the shorthand
$\pi_{\min}:=\min_{y}\pi(y)$,
$\pi_{\max}:=\max_{y}\pi(y)$,
$c_{\min}:=\min_{m} c_m$,
and $c_{\Sigma}:=\sum_{m=1}^K c_m$.

To prove Theorem \ref{thm:optight}, we proceed in two stages.  First, we establish a \emph{pairwise tightness} result showing that the
Chernoff bound for each pair $(y,y')$ matches the true pairwise
comparison probability up to a factor of order $1/\sqrt{n}$, where $n$
is the effective sample size for that pair.
Second, we translate this pairwise tightness into a bound on the ratio
$\overline{\mathrm{OPT}}(\alpha)/\mathrm{OPT}(\alpha)$ by showing that
only $O(\log\log(1/\alpha_{\min}))$ additional rounds of queries suffice
to absorb the polynomial prefactor.

\subsubsection{Pairwise Chernoff tightness}

We work with a \emph{strict} pairwise comparison event:
for $y\neq y'$ and $r\in\mathbb{Z}_{\ge 0}^K$, define
$E^{>}_{y,y'}(r):=\{\Delta_{y,y'}(r)>0\}$.
Since $\Delta_{y,y'}(r)>0$ implies that
$y'$ achieves a strictly larger posterior log-likelihood than $y$,
we have
$E^{>}_{y,y'}(r)\subseteq \{\widehat{Y}_r(X(r))\neq y\}$ and hence
$\Pr(E^{>}_{y,y'}(r)\mid Y=y)\le P_e(y;r)$.

For each pair $(y,y')$ and $s\in[0,1]$, define the \emph{pairwise Chernoff proxy}
\begin{equation}
\label{eq:app_pairwise_proxy_def}
\overline{P}_{y,y'}(r;s)
:=
\left(\frac{\pi(y')}{\pi(y)}\right)^{s}
\prod_{m=1}^K \big(M_{m}^{(y,y')}(s)\big)^{r_m},
\end{equation}
and the optimized proxy
$\overline{P}_{y,y'}(r):=\min_{s\in[0,1]} \overline{P}_{y,y'}(r;s)$.
From \cref{prop:pairwise_chernoff}, we have
$\Pr(\Delta_{y,y'}(r)\ge 0\mid Y=y)\le \overline{P}_{y,y'}(r)$
for all $y\neq y'$ and all $r$.

We also define the set of \emph{informative models}
$\mathcal{M}(y,y'):=\{m: p_m(\cdot\mid y)\neq p_m(\cdot\mid y')\}$
and the \emph{effective sample size} $n_{y,y'}(r):=\sum_{m\in\mathcal{M}(y,y')} r_m$.
By the pairwise identifiability condition, $\mathcal{M}(y,y')\neq\emptyset$ for every $y\neq y'$.

The key analytical tool is an exponential change of measure that recasts tail probabilities of $\Delta_{y,y'}(r)$ under the original model as expectations under a tilted distribution.

\begin{remark}[Change of measure]
\label{rem:app_change_of_measure}
For $s\in[0,1]$, define a tilted marginal on $\mathcal{X}_m$ by
\[
q_m^{(y,y')}(x;s)
:=
\frac{
p_m(x\mid y)^{\,1-s}\,p_m(x\mid y')^{\,s}
}{
M_m^{(y,y')}(s)
},
\qquad x\in\mathcal{X}_m,
\]
and the induced product measure
$\mathbb{Q}_{y,y',r,s}(X(r)=x(r)):=\prod_{m,t} q_m^{(y,y')}(x_{m,t};s)$.
Since all likelihoods are strictly positive, this is a well-defined probability measure.
\end{remark}

The following identity expresses probabilities under the original model in terms of the tilted measure $\mathbb{Q}_{y,y',r,s}$.

\begin{lemma}[Change-of-measure identity]
\label{lem:app_rn_identity}
For any event $\mathcal{E}$ measurable with respect to $X(r)$,
\[
\Pr\!\big(X(r)\in\mathcal{E}\mid Y=y\big)
=
\overline{P}_{y,y'}(r;s)\,
\E_{\mathbb{Q}_{y,y',r,s}}
\Big[
e^{-s\Delta_{y,y'}(r)}\,
\mathbf{1}\{X(r)\in\mathcal{E}\}
\Big].
\]
\end{lemma}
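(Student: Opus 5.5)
The identity is a pointwise computation of the Radon--Nikodym derivative $d\Pr(\cdot\mid Y=y)/d\mathbb{Q}_{y,y',r,s}$, followed by summation over the outcomes in $\mathcal{E}$. Every likelihood is strictly positive and every alphabet is finite, so both measures put positive mass on every point $x(r)$ of the finite product space $\prod_m \mathcal{X}_m^{r_m}$. We can therefore write
\[
\Pr\bigl(X(r)\in\mathcal{E}\mid Y=y\bigr)=\sum_{x(r)\in\mathcal{E}}\frac{\Pr(X(r)=x(r)\mid Y=y)}{\mathbb{Q}_{y,y',r,s}(X(r)=x(r))}\,\mathbb{Q}_{y,y',r,s}(X(r)=x(r)),
\]
and it remains to identify the likelihood ratio.

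\textbf{Computing the ratio.} For a single coordinate $(m,t)$ with observation $x$, the definition of $q_m^{(y,y')}$ in \cref{rem:app_change_of_measure} gives
\[
\frac{p_m(x\mid y)}{q_m^{(y,y')}(x;s)}=M_m^{(y,y')}(s)\left(\frac{p_m(x\mid y')}{p_m(x\mid y)}\right)^{-s}.
\]
Under $Y=y$ the joint law factorizes as in \eqref{eq:joint_likelihood}, and $\mathbb{Q}$ is a product measure by construction. Multiplying the coordinate ratios over all $(m,t)$ therefore gives the joint ratio
\[
\prod_m \bigl(M_m^{(y,y')}(s)\bigr)^{r_m}\exp\Bigl(-s\sum_{m,t}\log\tfrac{p_m(x_{m,t}\mid y')}{p_m(x_{m,t}\mid y)}\Bigr).
\]

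\textbf{Inserting the prior term.} By \eqref{eq:delta_expanded}, the data sum in the exponent equals $\Delta_{y,y'}(r)-\log(\pi(y')/\pi(y))$. Hence
\[
\exp\Bigl(-s\sum_{m,t}\log\tfrac{p_m(x_{m,t}\mid y')}{p_m(x_{m,t}\mid y)}\Bigr)=\left(\frac{\pi(y')}{\pi(y)}\right)^{s}e^{-s\Delta_{y,y'}(r)}.
\]
Combined with the product of the $M$ factors, the joint ratio is exactly $\overline{P}_{y,y'}(r;s)\,e^{-s\Delta_{y,y'}(r)}$, using \eqref{eq:app_pairwise_proxy_def}.

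\textbf{Concluding.} The factor $\overline{P}_{y,y'}(r;s)$ is deterministic, so it comes out of the sum over $\mathcal{E}$. What remains is the expectation of $e^{-s\Delta_{y,y'}(r)}\mathbf{1}\{X(r)\in\mathcal{E}\}$ under $\mathbb{Q}_{y,y',r,s}$, which is the claimed identity.

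The only delicate point is the bookkeeping of the prior term. The factor $(\pi(y')/\pi(y))^{s}$ appears in $\overline{P}_{y,y'}(r;s)$, while the prior offset sits inside $\Delta_{y,y'}(r)$; these two contributions must cancel with the correct sign. The endpoint cases $s=0$ and $s=1$ need no separate treatment, since $M_m^{(y,y')}(s)>0$ by \cref{lem:M_bounds} and every step above is valid for all $s\in[0,1]$.
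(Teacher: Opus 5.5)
Your proposal is correct and follows the paper's proof essentially step for step. Both compute the pointwise likelihood ratio between $\Pr(\cdot\mid Y=y)$ and the tilted product measure $\mathbb{Q}_{y,y',r,s}$, rewrite the data sum via $\Delta_{y,y'}(r)$ to get $\overline{P}_{y,y'}(r;s)\,e^{-s\Delta_{y,y'}(r)}$, and sum over $\mathcal{E}$; your explicit accounting of the prior factor $(\pi(y')/\pi(y))^{s}$ just spells out a step the paper leaves implicit.
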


\begin{proof}[Proof of \cref{lem:app_rn_identity}]
For a fixed realization $x(r)$, the Radon--Nikodym derivative equals
\[
\frac{
\Pr(X(r)=x(r)\mid Y=y)
}{
\mathbb{Q}_{y,y',r,s}(X(r)=x(r))
}
=
\prod_{m,t}
\frac{p_m(x_{m,t}\mid y)}{q_m^{(y,y')}(x_{m,t};s)}
=
\prod_{m,t}
M_m^{(y,y')}(s)\,
\left(\frac{p_m(x_{m,t}\mid y)}{p_m(x_{m,t}\mid y')}\right)^{s}.
\]
Using the definition of $\Delta_{y,y'}(r)$ in \eqref{eq:delta_expanded},
the product simplifies to
$\overline{P}_{y,y'}(r;s)\,e^{-s\Delta_{y,y'}(r)}$.
Multiplying by $\mathbb{Q}_{y,y',r,s}(X(r)=x(r))$ and summing over $x(r)\in\mathcal{E}$ yields the result.
\end{proof}

To apply \cref{lem:app_rn_identity} effectively, we need the tilting parameter $s$ to be chosen so that $\Delta_{y,y'}(r)$ is centered under the tilted measure.

\begin{remark}[Interior minimizers and centering]
\label{rem:app_centering}
Define $\Psi_{y,y',r}(s):=\log \overline{P}_{y,y'}(r;s)$ for $s\in[0,1]$.
This function is convex (as a sum of log-sum-exp functions and a linear term in $s$),
and its derivative satisfies
$\Psi'(s)=\E_{\mathbb{Q}_{y,y',r,s}}[\Delta_{y,y'}(r)]$
for $s\in(0,1)$.
\end{remark}

The next lemma confirms that, for large enough effective sample sizes, such a centering minimizer exists and is bounded away from the endpoints of $[0,1]$.

\begin{lemma}[Interior minimizers]
\label{lem:app_centering}
Under the model assumptions in Section~\ref{s:model} and the uniform boundedness condition,
there exist constants $\delta\in(0,1/2)$ and $n_{\mathrm{cen}}\in\mathbb{N}$
such that for any $y\neq y'$ and any $r$ with $n_{y,y'}(r)\ge n_{\mathrm{cen}}$,
every minimizer $s^*_{y,y'}(r)$ of $\Psi_{y,y',r}$ lies in $[\delta,1-\delta]$ and satisfies
\[
\E_{\mathbb{Q}_{y,y',r,s^*_{y,y'}(r)}}\!\big[\Delta_{y,y'}(r)\big]=0.
\]
\end{lemma}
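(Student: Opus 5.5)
The plan is to work directly with the convex function $\Psi_{y,y',r}(s)=s\log(\pi(y')/\pi(y))+\sum_{m=1}^K r_m\,g_m(s)$ from \cref{rem:app_centering}, where $g_m(s):=\log M_m^{(y,y')}(s)$. I will show that its derivative is strictly negative on $[0,\delta]$ and strictly positive on $[1-\delta,1]$ once $n_{y,y'}(r)$ is large. Convexity then traps every minimizer in the open interval $(\delta,1-\delta)$, where the first-order condition reads $\Psi'(s^*)=0$, which is the centering identity.

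\textbf{Step 1: smoothness and endpoint slopes.} Because every $p_m(x\mid\cdot)>0$ and $\mathcal{X}_m$ is finite, $M_m^{(y,y')}(s)$ is a finite sum of exponentials $e^{(1-s)\log p_m(x\mid y)+s\log p_m(x\mid y')}$. It is therefore $C^\infty$ on a neighborhood of $[0,1]$ and, by \cref{lem:M_bounds}, strictly positive. Differentiating gives
\[
g_m'(s)=\E_{q_m^{(y,y')}(\cdot;s)}\!\Big[\log\tfrac{p_m(X\mid y')}{p_m(X\mid y)}\Big],
\]
which yields the formula $\Psi'(s)=\E_{\mathbb{Q}_{y,y',r,s}}[\Delta_{y,y'}(r)]$ of \cref{rem:app_centering}, and this formula extends continuously to $s\in[0,1]$. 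At the endpoints,
\[
g_m'(0)=-\mathrm{KL}\big(p_m(\cdot\mid y)\,\|\,p_m(\cdot\mid y')\big),\qquad g_m'(1)=\mathrm{KL}\big(p_m(\cdot\mid y')\,\|\,p_m(\cdot\mid y)\big).
\]
For $m\notin\mathcal{M}(y,y')$ we have $g_m\equiv 0$. For $m\in\mathcal{M}(y,y')$ both KL divergences are strictly positive. There are only finitely many triples $(m,y,y')$, so we can set $\kappa:=\min\{\mathrm{KL}(p_m(\cdot\mid y)\|p_m(\cdot\mid y')),\,\mathrm{KL}(p_m(\cdot\mid y')\|p_m(\cdot\mid y))\}>0$, the minimum taken over informative triples.

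\textbf{Step 2: a uniform $\delta$.} Each $g_m'$ is continuous on $[0,1]$, hence uniformly continuous. With finitely many triples, we can choose one $\delta\in(0,1/2)$ such that $g_m'(\delta)\le-\kappa/2$ and $g_m'(1-\delta)\ge\kappa/2$ for every informative triple. Then, with $\Gamma:=\log(\pi_{\max}/\pi_{\min})$,
\[
\Psi'(\delta)\le\Gamma-\tfrac{\kappa}{2}\,n_{y,y'}(r),\qquad \Psi'(1-\delta)\ge-\Gamma+\tfrac{\kappa}{2}\,n_{y,y'}(r).
\]
Choose $n_{\mathrm{cen}}>2\Gamma/\kappa$. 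Then for $n_{y,y'}(r)\ge n_{\mathrm{cen}}$ the first quantity is negative and the second is positive. These constants depend only on the model primitives and not on $r$ or $\alpha$. (The uniform bound $B$ is not really needed here, since finiteness of the alphabets already gives it; it is used later.)

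\textbf{Step 3: locating the minimizer.} Since $\Psi$ is convex, $\Psi'$ is nondecreasing. Hence $\Psi'(s)\le\Psi'(\delta)<0$ on $[0,\delta]$, so $\Psi$ is strictly decreasing there. Likewise $\Psi$ is strictly increasing on $[1-\delta,1]$. Consequently no point of $[0,\delta]\cup[1-\delta,1]$ can be a minimizer over $[0,1]$. Every minimizer $s^*$ therefore lies in $(\delta,1-\delta)$, an interior point of a differentiable function. The first-order condition $\Psi'(s^*)=0$ then gives $\E_{\mathbb{Q}_{y,y',r,s^*}}[\Delta_{y,y'}(r)]=0$.

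The main obstacle is Step 2: obtaining a single $\delta$ and $n_{\mathrm{cen}}$ that work uniformly over all pairs and all $r$. The difficulty is that $r$ can spread its queries over different informative models. This is handled by the per-model lower bound $\kappa/2$ on the slopes, which makes the bound scale with $n_{y,y'}(r)$ regardless of how the queries are distributed. It also relies on the fact that uninformative models contribute exactly zero to $\Psi'$.
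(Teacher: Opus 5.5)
Your proof is correct and follows the paper's argument. In both, the endpoint slopes of the convex function $\Psi$ equal $\log(\pi(y')/\pi(y))$ minus (at $s=0$) or plus (at $s=1$) a sum of KL divergences of size at least $\kappa\, n_{y,y'}(r)$, a uniform $\delta$ carries these signs to $\delta$ and $1-\delta$, and the first-order condition at an interior minimizer gives the centering identity. The only difference is how $\delta$ is obtained: the paper sets $\delta=\min\{\kappa_{\min}/(4B^2),1/4\}$ explicitly, using the bound $|\Psi''(s)|\le B^2 n_{y,y'}(r)$ and the mean value theorem, whereas you get $\delta$ non-constructively from uniform continuity of each $g_m'$ over the finitely many informative triples; both are valid.
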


\begin{proof}[Proof of \cref{lem:app_centering}]
Define
$\kappa_{\min}:=\min_{y\neq y',\,m\in\mathcal{M}(y,y')}
\min\{D_{\mathrm{KL}}(p_m(\cdot\mid y)\|p_m(\cdot\mid y')),\,
D_{\mathrm{KL}}(p_m(\cdot\mid y')\|p_m(\cdot\mid y))\}>0$.
At $s=0$, we have
$\Psi'(0)=\log(\pi(y')/\pi(y))-\sum_m r_m D_{\mathrm{KL}}(p_m(\cdot\mid y)\|p_m(\cdot\mid y'))
\le \log(1/\pi_{\min})-\kappa_{\min}\,n_{y,y'}(r)$,
and similarly $\Psi'(1)\ge -\log(1/\pi_{\min})+\kappa_{\min}\,n_{y,y'}(r)$.
For $n_{y,y'}(r)$ large enough, $\Psi'(0)<0<\Psi'(1)$, so every minimizer lies in $(0,1)$.

Set $\delta:=\min\{\kappa_{\min}/(4B^2),\,1/4\}$.
By the mean value theorem and $|\Psi''(s)|\le B^2 n_{y,y'}(r)$,
we get $\Psi'(\delta)<0<\Psi'(1-\delta)$ for $n_{y,y'}(r)$ large enough,
so every minimizer lies in $[\delta,1-\delta]$.
Since $\Psi$ is differentiable and convex on the interior, any interior minimizer satisfies $\Psi'(s^*)=0$,
which is exactly the centering condition.
\end{proof}

Under the centering tilt, $\Delta_{y,y'}(r)$ has mean zero; we next establish that its variance grows linearly in the effective sample size.

\begin{lemma}[Uniform variance lower bound under the tilted measure]
\label{lem:app_variance_lb}
Let $\delta\in(0,1/2)$ be as in \cref{lem:app_centering}.
There exists a constant $v_{\min}>0$ such that for all $y\neq y'$, all $r$, and all $s\in[\delta,1-\delta]$,
\[
\mathrm{Var}_{\mathbb{Q}_{y,y',r,s}}\!\big(\Delta_{y,y'}(r)\big)
\ge
v_{\min}\,n_{y,y'}(r).
\]
\end{lemma}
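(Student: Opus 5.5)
Under $\mathbb{Q}_{y,y',r,s}$ the observations $X_{m,t}$ are independent, so $\Delta_{y,y'}(r)$ is a constant (the prior term) plus a sum of independent summands $Z_{m,t}:=\log\bigl(p_m(X_{m,t}\mid y')/p_m(X_{m,t}\mid y)\bigr)$. The variance is therefore additive:
\[
\mathrm{Var}_{\mathbb{Q}_{y,y',r,s}}\!\big(\Delta_{y,y'}(r)\big)=\sum_{m=1}^K r_m\,\sigma_m^2(s),
\qquad
\sigma_m^2(s):=\mathrm{Var}_{q_m^{(y,y')}(\cdot;s)}\!\left(\log\frac{p_m(X\mid y')}{p_m(X\mid y)}\right).
\]
For $m\notin\mathcal{M}(y,y')$ the log-ratio is identically zero, so $\sigma_m^2(s)=0$; these models contribute nothing to either side. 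It thus suffices to show
\[
v_{\min}:=\min_{y\neq y'}\ \min_{m\in\mathcal{M}(y,y')}\ \inf_{s\in[\delta,1-\delta]}\sigma_m^2(s)>0,
\]
because then the variance is at least $v_{\min}\sum_{m\in\mathcal{M}(y,y')}r_m=v_{\min}\,n_{y,y'}(r)$.

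Fix $y\neq y'$ and $m\in\mathcal{M}(y,y')$. Since $p_m(\cdot\mid y)\neq p_m(\cdot\mid y')$ and both sum to one, the log-ratio $\ell(x):=\log\bigl(p_m(x\mid y')/p_m(x\mid y)\bigr)$ cannot be constant on $\mathcal{X}_m$. Indeed, a constant value $c$ would give $p_m(\cdot\mid y')=e^c p_m(\cdot\mid y)$, and normalization would force $c=0$, i.e.\ equality of the two distributions. By strict positivity of the likelihoods, $q_m^{(y,y')}(x;s)>0$ for every $x$. A nonconstant function under a full-support distribution has strictly positive variance, so $\sigma_m^2(s)>0$ for every $s\in[0,1]$.

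Next, $s\mapsto q_m^{(y,y')}(x;s)$ is continuous on $[0,1]$, being a ratio of finite sums of continuous functions with a denominator $M_m^{(y,y')}(s)>0$ by \cref{lem:M_bounds}. Hence $\sigma_m^2(s)$, a finite sum of continuous terms in $s$, is continuous, and it attains a positive minimum on the compact set $[\delta,1-\delta]$; in fact it attains one on all of $[0,1]$. Because $y,y'$ and $m$ range over finitely many choices, the minimum over them is still positive, which gives $v_{\min}>0$ independent of $r$.

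The only real subtlety is uniformity in $r$, and additivity of the variance together with the finite parameter set resolves it. The restriction to $[\delta,1-\delta]$ is not strictly needed; one could also note that $\sigma_m^2(s)=\Psi_m''(s)$, where $\Psi_m=\log M_m^{(y,y')}$. If an explicit constant is desired, one could lower bound $\sigma_m^2(s)$ using $q_m\ge (\min_x p_m(x\mid\cdot))$-type bounds, but compactness suffices for the lemma.
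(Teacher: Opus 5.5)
Your proof is correct and follows essentially the same route as the paper: additivity of the variance over independent queries, strict positivity of each informative model's variance (full support of $q_m^{(y,y')}(\cdot;s)$ plus a non-constant log-likelihood ratio), and a continuity-plus-compactness minimum over $s$ and the finitely many triples $(m,y,y')$. Your extra details — the normalization argument for non-constancy, and the remark that the bound in fact holds on all of $[0,1]$ — are correct refinements of the same argument.
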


\begin{proof}[Proof of \cref{lem:app_variance_lb}]
For each informative model $m\in\mathcal{M}(y,y')$ and each $s\in[\delta,1-\delta]$, the tilted marginal $q_m^{(y,y')}(\cdot;s)$ has full support on $\mathcal{X}_m$, and the log-likelihood ratio $\log(p_m(X\mid y')/p_m(X\mid y))$ is non-constant.
Its variance under $q_m^{(y,y')}(\cdot;s)$ is therefore strictly positive.
By continuity of the variance in $s$ and finiteness of the set of informative triples $(m,y,y')$, the minimum over all such triples and all $s\in[\delta,1-\delta]$ is attained and positive.
Since $\Delta_{y,y'}(r)$ is a sum of $n_{y,y'}(r)$ independent terms (one per query to an informative model), the total variance is at least $v_{\min}\,n_{y,y'}(r)$.
\end{proof}

With the centering property from \cref{lem:app_centering} and the variance lower bound from \cref{lem:app_variance_lb} in hand, we apply the Berry--Esseen theorem to show that a non-negligible fraction of the tilted probability mass falls in a window just above zero.

\begin{lemma}[Window probability under the minimizing tilt]
\label{lem:app_window_prob}
There exist constants $H>0$, $n_{\mathrm{win}}\in\mathbb{N}$, and
$\beta>0$ such that for every $y\neq y'$ and every $r$ with
$n_{y,y'}(r)\ge n_{\mathrm{win}}$,
\[
\mathbb{Q}_{y,y',r,s^\star_{y,y'}(r)}
\big(0< \Delta_{y,y'}(r)\le H\big)
\ge
\frac{\beta}{\sqrt{n_{y,y'}(r)}}.
\]
\end{lemma}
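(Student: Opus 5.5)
Under $\mathbb{Q}:=\mathbb{Q}_{y,y',r,s^\star}$ with $s^\star=s^\star_{y,y'}(r)$, the statistic $\Delta_{y,y'}(r)$ is a deterministic prior shift plus a sum of independent (not identically distributed) bounded terms. I will apply the Berry--Esseen theorem for non-identical summands, with a window width $H$ chosen large compared with the Berry--Esseen error. Take $n_{\mathrm{win}}\ge n_{\mathrm{cen}}$. Then \cref{lem:app_centering} gives $s^\star\in[\delta,1-\delta]$ and $\E_{\mathbb{Q}}[\Delta_{y,y'}(r)]=0$, and \cref{lem:app_variance_lb} gives $\sigma^2:=\mathrm{Var}_{\mathbb{Q}}(\Delta_{y,y'}(r))\ge v_{\min} n$, where $n:=n_{y,y'}(r)$.

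Queries to non-informative models contribute identically zero log-likelihood ratios, so only the $n$ informative summands matter. Each such summand $Z_{m,t}=\log(p_m(X_{m,t}\mid y')/p_m(X_{m,t}\mid y))$ satisfies $|Z_{m,t}|\le B$ by the uniform boundedness condition, so its centered version is bounded by $2B$. Its centered third absolute moment is therefore at most $2B$ times its variance. Writing $\sum\rho$ for the sum of these centered third absolute moments, we get $\sum\rho\le 2B\sigma^2$. Berry--Esseen then yields
\[
\sup_u\Big|\mathbb{Q}\big(\Delta_{y,y'}(r)\le u\big)-\Phi(u/\sigma)\Big|\le C_0\frac{\sum\rho}{\sigma^3}\le \frac{2C_0B}{\sigma}\le \frac{2C_0B}{\sqrt{v_{\min} n}}.
\]
Centering is exact, so the mean shift includes the prior term and no drift remains.

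For the window, the Berry--Esseen bound applied at $u=H$ and (by taking limits) at $u=0$ gives
\[
\mathbb{Q}\big(0<\Delta_{y,y'}(r)\le H\big)\ge \Phi(H/\sigma)-\Phi(0)-\frac{4C_0B}{\sigma}.
\]
An upper bound on $\sigma$ is also needed. Since $\sigma^2\le B^2 n$, we have $H/\sigma\ge H/(B\sqrt n)$. Once $n\ge H^2/B^2$ this is at most $1$, so $\Phi(H/\sigma)-\Phi(0)\ge \phi(1)H/\sigma\ge \phi(1)H/(B\sqrt n)$. When $n$ is smaller, $H/\sigma\ge 1$ and the difference is at least the constant $\Phi(1)-1/2$, which dominates $c/\sqrt n$ anyway. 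Now choose $H$ so that $\phi(1)H/B\ge 8C_0B/\sqrt{v_{\min}}$. The Berry--Esseen error is at most $4C_0B/\sqrt{v_{\min}n}$, so it is at most half the main term, leaving $\beta:=\phi(1)H/(2B)$. The constants $H$ and $\beta$ depend only on $B$, $v_{\min}$ and the universal constant $C_0$.

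The main obstacle I expect is that Berry--Esseen's error is itself of order $1/\sqrt n$, the same order as the target. The argument above absorbs it only because $H$ may be taken as a large constant. The subtle point is keeping $H/\sigma$ in the regime where $\Phi$ is approximately linear. The upper variance bound $\sigma^2\le B^2n$ makes this work, and the case $H/\sigma>1$ is handled separately as above. One should also check that lattice effects are irrelevant: Berry--Esseen bounds CDF differences at every point, including the strict inequality at $0$ obtained by limits, so no non-lattice assumption is needed.
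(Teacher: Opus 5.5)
Your route is the paper's: Berry--Esseen for the independent, bounded, centered informative summands under the minimizing tilt, a window of constant width $H$ chosen so that the linear main term $\varphi(1)H/(B\sqrt{n})$ dominates the $O(1/\sqrt{n})$ Berry--Esseen error, and the two-sided bounds $v_{\min}n\le\sigma^2\le B^2 n$. Your Lyapunov estimate $\sum\rho\le 2B\sigma^2$ is slightly sharper than the paper's per-term bound $(2B)^3$, but this only changes constants.

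\textbf{A backwards step.} The linearization $\Phi(H/\sigma)-\Phi(0)\ge\varphi(1)H/\sigma$ requires $H/\sigma\le 1$, which is a \emph{lower} bound on $\sigma$. The upper bound $\sigma\le B\sqrt{n}$ only gives $H/\sigma\ge H/(B\sqrt{n})$. So $n\ge H^2/B^2$ does not imply $H/\sigma\le 1$: you can have $v_{\min}n<H^2\le B^2n$, in which case $\sigma<H$.

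\textbf{The fallback case.} As stated, it is also unjustified. When $\sigma<H$ and $n$ is small, your Berry--Esseen error bound $4C_0B/\sqrt{v_{\min}n}$ need not be smaller than $\Phi(1)-\tfrac12$. So that constant does not automatically dominate.

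\textbf{The fix.} Do what the paper does. Use $\sigma\ge\sqrt{v_{\min}n}$ and take $n_{\mathrm{win}}\ge\max\{n_{\mathrm{cen}},\lceil H^2/v_{\min}\rceil\}$. Then $\sigma\ge H$ always holds, so no second case arises. The upper bound $\sigma\le B\sqrt{n}$ is then used only to pass from $\varphi(1)H/\sigma$ to $\varphi(1)H/(B\sqrt{n})$. With this choice of $n_{\mathrm{win}}$, your $H$ and $\beta=\varphi(1)H/(2B)$ go through.
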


\begin{proof}[Proof of \cref{lem:app_window_prob}]
Write $n:=n_{y,y'}(r)$, $s^\star:=s^\star_{y,y'}(r)$, $\mathbb{Q}:=\mathbb{Q}_{y,y',r,s^\star}$, and $\Delta:=\Delta_{y,y'}(r)$.
For each informative model $m\in\mathcal{M}(y,y')$, define
$Z_{m,t}:=\log(p_m(X_{m,t}\mid y')/p_m(X_{m,t}\mid y))$
and the centered variables $U_{m,t}:=Z_{m,t}-\E_{\mathbb{Q}}[Z_{m,t}]$.
Let $S:=\sum_{m\in\mathcal{M}(y,y')}\sum_{t=1}^{r_m} U_{m,t}$.
Since $\E_{\mathbb{Q}}[\Delta]=0$ (\cref{lem:app_centering}), we have $\Delta=S$.
Write $\sigma^2:=\mathrm{Var}_{\mathbb{Q}}(S)$.
By the uniform boundedness condition, $|U_{m,t}|\le 2B$ almost surely
and $\E_{\mathbb{Q}}[|U_{m,t}|^3]\le (2B)^3$.
Moreover, $v_{\min}\,n\le \sigma^2\le B^2 n$.

By the Berry--Esseen theorem for independent random variables,
$|\mathbb{Q}(S/\sigma\le z)-\Phi(z)|\le C_0/\sqrt{n}$
where $C_0:=C_{\mathrm{BE}}(2B)^3/v_{\min}^{3/2}$ and $\Phi$ is the standard normal cdf.

Choose $H:=\max\{1,\,4C_0 B/\varphi(1)\}$ and
$n_{\mathrm{win}}:=\max\{n_{\mathrm{cen}},\,\lceil H^2/v_{\min}\rceil\}$,
where $\varphi(z):=(2\pi)^{-1/2}e^{-z^2/2}$.
For $n\ge n_{\mathrm{win}}$, we have $\sigma\ge H$ and thus $H/\sigma\le 1$, so
\[
\Phi(H/\sigma)-\Phi(0)
=\int_0^{H/\sigma}\varphi(u)\,du
\ge \frac{H}{\sigma}\,\varphi(1)
\ge \frac{H\,\varphi(1)}{B\sqrt{n}}.
\]
Therefore
$\mathbb{Q}(0<\Delta\le H)\ge (H\varphi(1)/B-2C_0)/\sqrt{n}\ge 2C_0/\sqrt{n}$
by the choice of $H$.
\end{proof}

Combining the change-of-measure identity with the window probability estimate yields the main pairwise result.

\begin{lemma}[Pairwise Chernoff tightness]
\label{lem:app_pairwise_tightness}
Under the model assumptions in Section~\ref{s:model} and the uniform boundedness condition,
there exist constants $n_0\in\mathbb{N}$ and $\gamma>0$ such that for every
$y\neq y'$ and every $r$ with $n_{y,y'}(r)\ge n_0$,
\begin{equation}
\label{eq:app_pairwise_tightness}
\Pr\!\big(E^{>}_{y,y'}(r)\mid Y=y\big)
\ge
\frac{\gamma}{\sqrt{n_{y,y'}(r)}}\,
\overline{P}_{y,y'}(r).
\end{equation}
Together with \cref{prop:pairwise_chernoff}, this shows that the
pairwise Chernoff bound is tight up to a factor of order
$1/\sqrt{n_{y,y'}(r)}$.
\end{lemma}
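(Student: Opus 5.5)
We combine the change-of-measure identity (\cref{lem:app_rn_identity}) with the window estimate (\cref{lem:app_window_prob}). Fix $y\neq y'$ and $r$ with $n:=n_{y,y'}(r)\ge n_0:=n_{\mathrm{win}}$. Since $n_{\mathrm{win}}\ge n_{\mathrm{cen}}$, \cref{lem:app_centering} applies. Take any minimizer $s^\star:=s^\star_{y,y'}(r)$ of $\Psi_{y,y',r}$; it lies in $[\delta,1-\delta]$, and $\overline{P}_{y,y'}(r;s^\star)=\overline{P}_{y,y'}(r)$ by definition of the optimized proxy.

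Apply \cref{lem:app_rn_identity} with $s=s^\star$ to the event $\mathcal{E}=E^{>}_{y,y'}(r)=\{\Delta_{y,y'}(r)>0\}$. This gives
\[
\Pr\!\big(E^{>}_{y,y'}(r)\mid Y=y\big)=\overline{P}_{y,y'}(r)\,\E_{\mathbb{Q}}\!\big[e^{-s^\star\Delta}\mathbf{1}\{\Delta>0\}\big].
\]
Because the integrand is nonnegative, we may restrict to the smaller window event $\{0<\Delta\le H\}$. On that event $e^{-s^\star\Delta}\ge e^{-s^\star H}\ge e^{-H}$, using $s^\star\le 1$. Hence the expectation is at least
\[
e^{-H}\,\mathbb{Q}(0<\Delta\le H)\ge e^{-H}\beta/\sqrt{n}
\]
by \cref{lem:app_window_prob}. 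Setting $\gamma:=\beta e^{-H}$ yields \eqref{eq:app_pairwise_tightness}.

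\textbf{Uniformity of constants.} The constants must not depend on $(y,y',r)$. This holds because $H,\beta,n_{\mathrm{win}}$ in \cref{lem:app_window_prob} are uniform over pairs and plans. They are built from $B$, $v_{\min}$ and the Berry--Esseen constant, and $v_{\min}$ is itself uniform over $s\in[\delta,1-\delta]$ by \cref{lem:app_variance_lb}.

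\textbf{Non-informative models.} One subtlety is that queries to models $m\notin\mathcal{M}(y,y')$ contribute zero log-likelihood ratio. They therefore affect neither $\Delta$ nor $\overline{P}_{y,y'}(r;s)$, since $M_m^{(y,y')}\equiv 1$ for such models. The effective sample size $n$ is the right scale, and there is no mismatch.

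\textbf{Main obstacle.} The real difficulty is the window lower bound. The Berry--Esseen error $C_0/\sqrt n$ has the same order as the target mass $\Theta(1/\sqrt n)$, so $H$ must be chosen large relative to $C_0$ so that the Gaussian window mass dominates twice the approximation error. Since \cref{lem:app_window_prob} is already established, the remaining step is only to check that the prior term $\log(\pi(y')/\pi(y))$ has been absorbed into the centering, which \cref{lem:app_centering} guarantees via $\E_{\mathbb{Q}}[\Delta]=0$. The final conclusion is then just the product of the two bounds.
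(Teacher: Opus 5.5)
Your proposal is correct and follows the paper's proof essentially line for line. You apply the change-of-measure identity at the centering minimizer $s^\star$, restrict to the window $\{0<\Delta_{y,y'}(r)\le H\}$ where $e^{-s^\star\Delta_{y,y'}(r)}\ge e^{-H}$, and invoke the window lemma to obtain $\gamma=\beta e^{-H}$ with $n_0=n_{\mathrm{win}}$.
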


\begin{proof}[Proof of \cref{lem:app_pairwise_tightness}]
Set $n_0:=n_{\mathrm{win}}$ and fix $y\neq y'$ and $r$ with $n:=n_{y,y'}(r)\ge n_0$.
Let $s^\star:=s^\star_{y,y'}(r)$.
Applying \cref{lem:app_rn_identity} with $\mathcal{E}=E^{>}_{y,y'}(r)$ gives
\[
\Pr\!\big(E^{>}_{y,y'}(r)\mid Y=y\big)
=
\overline{P}_{y,y'}(r;s^\star)\,
\E_{\mathbb{Q}_{y,y',r,s^\star}}
\Big[
e^{-s^\star\Delta_{y,y'}(r)}\,
\mathbf{1}\{E^{>}_{y,y'}(r)\}
\Big].
\]
On $\{0<\Delta_{y,y'}(r)\le H\}\subseteq E^{>}_{y,y'}(r)$,
we have $e^{-s^\star\Delta}\ge e^{-H}$ since $s^\star\in[0,1]$.
Substituting $\overline{P}_{y,y'}(r;s^\star)=\overline{P}_{y,y'}(r)$ and applying \cref{lem:app_window_prob}, we obtain
\[
\Pr\!\big(E^{>}_{y,y'}(r)\mid Y=y\big)
\ge
\overline{P}_{y,y'}(r)\,e^{-H}\,\frac{\beta}{\sqrt{n}}.
\]
Thus \eqref{eq:app_pairwise_tightness} holds with $\gamma:=\beta e^{-H}$.
\end{proof}

\subsubsection{From pairwise tightness to the cost ratio bound}

For $n\in\mathbb{Z}_{\ge 0}$, define the \emph{$n$-round augmentation} $r^{+n}:=r+n\mathbf{1}$ where $\mathbf{1}=(1,\ldots,1)\in\mathbb{Z}_{\ge 0}^K$.
The following lemma shows that each additional round contracts the pairwise Chernoff proxy by a fixed factor.

\begin{lemma}[Geometric contraction]
\label{lem:app_theta}
Let $\delta$ be as in \cref{lem:app_centering} and define
\[
\vartheta
:=
\max_{\substack{y\neq y'\\ s\in[\delta,1-\delta]}}
\prod_{m=1}^K M_m^{(y,y')}(s).
\]
Then $\vartheta\in(0,1)$.
Moreover, for any $y\neq y'$, any $r$, any $n\ge 0$,
and any $s\in[\delta,1-\delta]$,
\begin{equation}
\label{eq:app_theta_decay}
\overline{P}_{y,y'}(r^{+n};s)
\le
\vartheta^{n}\,\overline{P}_{y,y'}(r;s).
\end{equation}
\end{lemma}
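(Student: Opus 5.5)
The plan has two parts. First show $\vartheta\in(0,1)$ using \cref{lem:M_bounds} together with compactness. Then obtain the contraction \eqref{eq:app_theta_decay} directly from the multiplicative structure of the proxy \eqref{eq:app_pairwise_proxy_def}.

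\textbf{Step 1: $\vartheta\in(0,1)$.} Fix an ordered pair $y\neq y'$ and consider $g_{y,y'}(s):=\prod_{m=1}^K M_m^{(y,y')}(s)$ on the compact interval $[\delta,1-\delta]\subset(0,1)$.
\begin{itemize}
\item \emph{Positivity and upper bound.} By \cref{lem:M_bounds}, each factor lies in $(0,1]$, so $g_{y,y'}(s)\in(0,1]$.
\item \emph{Strict inequality.} By pairwise identifiability there is some $m\in\mathcal{M}(y,y')$. For that $m$ and every $s\in(0,1)$, the equality clause of \cref{lem:M_bounds} gives $M_m^{(y,y')}(s)<1$. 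Hence $g_{y,y'}(s)<1$ on $[\delta,1-\delta]$.
\item \emph{Attainment.} Each $M_m^{(y,y')}$ is a finite sum of terms $p^{1-s}q^{s}=p\,e^{s\log(q/p)}$ with $p,q>0$, so it is continuous in $s$. Therefore $g_{y,y'}$ is continuous and attains its maximum on the compact interval, and that maximum is strictly below $1$.
\item \emph{Conclusion.} There are finitely many ordered pairs, so $\vartheta$ is the maximum of finitely many numbers in $(0,1)$. It is also strictly positive, since each $g_{y,y'}>0$ on the interval. Thus $\vartheta\in(0,1)$.
\end{itemize}

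\textbf{Step 2: contraction.} Fix $s\in[\delta,1-\delta]$. Since $r^{+n}_m=r_m+n$, the definition of the proxy factorizes as
\[
\overline{P}_{y,y'}(r^{+n};s)=\overline{P}_{y,y'}(r;s)\prod_{m=1}^K\big(M_m^{(y,y')}(s)\big)^{n}=\overline{P}_{y,y'}(r;s)\,g_{y,y'}(s)^n .
\]
We have $0<g_{y,y'}(s)\le\vartheta$, and the prefactor $\overline{P}_{y,y'}(r;s)$ is nonnegative. So the right-hand side is at most $\vartheta^n\overline{P}_{y,y'}(r;s)$. The case $n=0$ is trivial.

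\textbf{Main obstacle.} The only delicate point is uniformity: $\vartheta$ must be bounded away from $1$ uniformly in $s$. This is exactly why the interval is restricted to $[\delta,1-\delta]$. At the endpoints $s=0,1$ every factor equals $1$, so the supremum over all of $[0,1]$ would equal $1$. Compactness of the restricted interval, combined with the strict inequality at each interior point, resolves this.
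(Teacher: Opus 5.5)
Your proposal is correct and follows essentially the same route as the paper: you get the strict bound $M_m^{(y,y')}(s)<1$ for an informative model from the H\"older equality case, then use continuity, compactness of $[\delta,1-\delta]$ and finiteness of the pairs to conclude $\vartheta\in(0,1)$, and finally apply the multiplicative factorization of the proxy under $r^{+n}$. Your treatment of the closed interval $[\delta,1-\delta]\subset(0,1)$ is slightly more careful than the paper's, which states the strict inequality only on the open interval $(\delta,1-\delta)$ before invoking compactness.
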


\begin{proof}[Proof of \cref{lem:app_theta}]
For $s\in(\delta,1-\delta)\subset(0,1)$ and any pair $y\neq y'$,
H\"older's inequality gives $M_m^{(y,y')}(s)\le 1$ with equality only if
$p_m(\cdot\mid y)=p_m(\cdot\mid y')$.
Since at least one model distinguishes every pair,
$\prod_m M_m^{(y,y')}(s)<1$.
By continuity and compactness, the maximum $\vartheta$ over all pairs
and all $s\in[\delta,1-\delta]$ is attained and lies in $(0,1)$.
The inequality \eqref{eq:app_theta_decay} follows from
$\overline{P}_{y,y'}(r^{+n};s)=\overline{P}_{y,y'}(r;s)\prod_m M_m^{(y,y')}(s)^n
\le \overline{P}_{y,y'}(r;s)\,\vartheta^n$.
\end{proof}

The pairwise tightness bound in \cref{lem:app_pairwise_tightness} requires $n_{y,y'}(r)\ge n_0$.
The next lemma shows that, for sufficiently small tolerances, any robustly feasible design automatically satisfies this condition.

\begin{lemma}[Small tolerances force large effective sample sizes]
\label{lem:app_min_effective_sample}
Define
$\eta_{\min}:=\min_{m,y,x} p_m(x\mid y)>0$
and
$\varepsilon_0:=\eta_{\min}^{\,n_0-1}$,
where $n_0$ is as in \cref{lem:app_pairwise_tightness}.
For every pair $y\neq y'$ and every $r$ with $n_{y,y'}(r)\le n_0-1$,
\[
\max\big\{P_e(y;r),\,P_e(y';r)\big\}\ge \varepsilon_0.
\]
Consequently, if $\alpha_{\min}<\varepsilon_0$ and $r\in\mathcal{R}(\alpha)$,
then $n_{y,y'}(r)\ge n_0$ for all $y\neq y'$.
\end{lemma}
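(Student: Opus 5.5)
The plan is a direct indistinguishability argument. The idea is that when only a few queries go to models in $\mathcal{M}(y,y')$, every outcome of those queries has probability at least $\eta_{\min}^{n}$ under both labels, so the MAP rule cannot separate $y$ from $y'$ reliably.

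\textbf{Setup.} Fix $y\neq y'$ and $r$ with $n:=n_{y,y'}(r)\le n_0-1$. Split the data as $X(r)=(X_I,X_N)$, where $X_I$ collects the $n$ queries to informative models $m\in\mathcal{M}(y,y')$ and $X_N$ collects the rest. By definition of $\mathcal{M}(y,y')$ and the factorization \eqref{eq:joint_likelihood}, $X_N$ has the same law under $Y=y$ and $Y=y'$, and it is independent of $X_I$ given the label. By the standing positivity assumption, every realization $x_I$ has probability at least $\eta_{\min}^{n}\ge\eta_{\min}^{n_0-1}=\varepsilon_0$ under either label.

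\textbf{Per-slice bound.} For each $x_N$, let $A_{x_N}:=\{x_I:\widehat Y_r(x_I,x_N)=y\}$ be the slice of the decision region on which the estimator outputs $y$. There are two cases.
\begin{itemize}
\item If $A_{x_N}=\emptyset$, then conditional on $X_N=x_N$ the estimator errs under $Y=y$ with probability $1$.
\item If $A_{x_N}\neq\emptyset$, pick any $x_I\in A_{x_N}$. Under $Y=y'$, the event $\{X_I=x_I\}$ has probability at least $\varepsilon_0$, and on it the estimator outputs $y\ne y'$. So the conditional error under $y'$ is at least $\varepsilon_0$.
\end{itemize}
Write $f(x_N)$ and $g(x_N)$ for the conditional error probabilities under $y$ and $y'$. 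In both cases $f(x_N)+g(x_N)\ge\varepsilon_0$.

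\textbf{Averaging.} Average over $X_N$, whose law is common to both labels. This gives $P_e(y;r)+P_e(y';r)\ge\varepsilon_0$.

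\textbf{Main obstacle.} Passing from the sum to the max without losing a factor $2$ is the step I expect to be hardest. The per-slice dichotomy does not directly give a max bound, because the slices with $A_{x_N}=\emptyset$ could carry little $X_N$-mass.

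I would first try a sharper dichotomy. If every slice with positive mass has $A_{x_N}\ne\emptyset$, then $P_e(y';r)\ge\varepsilon_0$ outright. Otherwise, I would try to show that the comparison between $y$ and $y'$ in the MAP rule does not depend on $x_N$, since the ratio $\pi(y')\Pr(x\mid y')/(\pi(y)\Pr(x\mid y))$ involves only $x_I$. That would make the relevant part of the decision essentially uniform across slices. Other labels can still intervene, though, so this may not close.

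\textbf{Fallback.} If it does not close, I would accept $\max\{P_e(y;r),P_e(y';r)\}\ge\varepsilon_0/2$. I would also note that $\eta_{\min}\le 1/2$ whenever some $\mathcal{X}_m$ has at least two symbols, which every informative model must. This gives $\eta_{\min}^{n}\ge 2\eta_{\min}^{n_0-1}$ for $n\le n_0-2$, so the factor is recovered outright in that range. For the boundary case $n=n_0-1$, one can redefine $\varepsilon_0$ with an extra factor $1/2$. That change is harmless, since only the existence of a positive threshold matters downstream.

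\textbf{Consequence.} If $\alpha_{\min}<\varepsilon_0$ and $r\in\mathcal{R}(\alpha)$, then $P_e(y;r)\le\alpha_y<\varepsilon_0$ and likewise for $y'$. By the contrapositive of the bound, $n_{y,y'}(r)\ge n_0$ for every pair $y\neq y'$.
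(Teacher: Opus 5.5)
Your slicing argument is correct as far as it goes, but it only proves $P_e(y;r)+P_e(y';r)\ge\eta_{\min}^{n}$. That gives the stated bound only for $n_{y,y'}(r)\le n_0-2$. In the boundary case $n_{y,y'}(r)=n_0-1$ you get $\varepsilon_0/2$. Even with sharper bookkeeping of the empty and nonempty slices, this route reaches only $\varepsilon_0/(1+\varepsilon_0)$. Redefining $\varepsilon_0$ proves a different lemma from the one stated. You are right that the downstream uses in the tightness and AFPTAS proofs only need some positive threshold. Still, as a proof of this statement there is a genuine gap.

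The missing idea is the one you set aside. Do not slice the decision region $\{\widehat Y_r=y\}$. Work instead with the strict pairwise events $\{\Delta_{y,y'}(r)>0\}$ and $\{\Delta_{y,y'}(r)<0\}$.

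\emph{These events depend on $X_I$ alone.} Models outside $\mathcal{M}(y,y')$ contribute zero log-likelihood ratio, so the events do not depend on $x_N$.

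\emph{Other labels cannot interfere, because you only need lower bounds on error.} On $\{\Delta_{y,y'}(r)>0\}$ the label $y$ is not a maximizer, so $\widehat Y_r\neq y$ whatever the other posteriors are and however ties are broken. Symmetrically, $\widehat Y_r\neq y'$ on $\{\Delta_{y,y'}(r)<0\}$.

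\emph{One of the two events has large probability.} For $n\ge 1$, pick $m^\star\in\mathcal{M}(y,y')$ with $r_{m^\star}\ge 1$. Let $h_{m^\star}(x)=\log\bigl(p_{m^\star}(x\mid y')/p_{m^\star}(x\mid y)\bigr)$ and choose symbols $x^\pm$ with $h_{m^\star}(x^+)>0>h_{m^\star}(x^-)$. These exist because two distinct distributions on $\mathcal{X}_{m^\star}$ cannot have a likelihood ratio that is everywhere $\ge 1$ or everywhere $\le 1$. Take two informative patterns $x_I^\pm$ that differ only in coordinate $(m^\star,1)$. They give values $v_+>v_-$ of $\Delta_{y,y'}(r)$, so $v_+>0$ or $v_-<0$. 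In the first case, $\{X_I=x_I^+\}$ has probability at least $\eta_{\min}^{n}\ge\varepsilon_0$ under $Y=y$. In the second case, $\{X_I=x_I^-\}$ has the same lower bound under $Y=y'$. This gives the max bound directly, with no factor $2$, and it is exactly the paper's argument.

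For $n=0$, $\Delta_{y,y'}(r)=\log(\pi(y')/\pi(y))$ is deterministic, so one label is always misclassified when the priors differ. Your bound $P_e(y;r)+P_e(y';r)\ge 1$ also covers this case whenever $n_0\ge 2$.
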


\begin{proof}[Proof of \cref{lem:app_min_effective_sample}]
If $n_{y,y'}(r)=0$, no informative model is queried, so the score difference
$S_{y'}-S_y=\log(\pi(y')/\pi(y))$ is deterministic.
This forces one of the two labels to be always misclassified, giving
$\max\{P_e(y;r),P_e(y';r)\}=1\ge \varepsilon_0$.

If $1\le n_{y,y'}(r)\le n_0-1$, there exists an informative model $m^\star\in\mathcal{M}(y,y')$
with $r_{m^\star}\ge 1$.
Because $p_{m^\star}(\cdot\mid y)\neq p_{m^\star}(\cdot\mid y')$,
there exist symbols $x^+,x^-\in\mathcal{X}_{m^\star}$ such that
$h_{m^\star}(x^+)>0$ and $h_{m^\star}(x^-)<0$,
where $h_{m^\star}(x):=\log(p_{m^\star}(x\mid y')/p_{m^\star}(x\mid y))$.
Construct two observation patterns that agree everywhere except at $(m^\star,1)$:
one using $x^+$ and the other using $x^-$.
The score differences under these patterns satisfy $v_+-v_-=h_{m^\star}(x^+)-h_{m^\star}(x^-)>0$,
so at least one of $v_+>0$ or $v_-<0$ holds.
In the first case, the pattern with $x^+$ leads to $\widehat{Y}_r\neq y$ and has conditional
probability at least $\eta_{\min}^{n_{y,y'}(r)}\ge \varepsilon_0$ under $Y=y$.
In the second case, the pattern with $x^-$ leads to $\widehat{Y}_r\neq y'$ with probability
at least $\varepsilon_0$ under $Y=y'$.
\end{proof}

Combining the pairwise tightness lower bound with the minimum sample-size guarantee yields a bound on the surrogate slack.

\begin{lemma}[Surrogate slack for true-feasible plans]
\label{lem:app_true_to_surrogate}
Fix $\alpha$ with $\alpha_{\min}<\varepsilon_0$ and let $r\in\mathcal{R}(\alpha)$.
Then for every $y\in\mathcal{Y}$,
\begin{equation}
\label{eq:app_surrogate_slack}
\overline{P}_e(y;r)
\le
\frac{L-1}{\gamma}\,\alpha_y\,
\sqrt{\sum_{m=1}^K r_m}.
\end{equation}
\end{lemma}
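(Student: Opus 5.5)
The argument goes term by term in the definition of $\overline{P}_e(y;r)$. Fix $y$ and a competitor $y'\neq y$. Since $r\in\mathcal{R}(\alpha)$ and $\alpha_{\min}<\varepsilon_0$, \cref{lem:app_min_effective_sample} gives $n_{y,y'}(r)\ge n_0$. This places us in the regime where \cref{lem:app_pairwise_tightness} applies.

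For each such pair, \cref{lem:app_pairwise_tightness} yields
\[
\overline{P}_{y,y'}(r)\le \frac{\sqrt{n_{y,y'}(r)}}{\gamma}\,\Pr\!\big(E^{>}_{y,y'}(r)\mid Y=y\big).
\]
The strict comparison event satisfies $E^{>}_{y,y'}(r)\subseteq\{\widehat{Y}_r(X(r))\neq y\}$, as noted right before \eqref{eq:app_pairwise_proxy_def}. Hence $\Pr(E^{>}_{y,y'}(r)\mid Y=y)\le P_e(y;r)\le\alpha_y$, where the last step is true feasibility of $r$. In addition, $n_{y,y'}(r)=\sum_{m\in\mathcal{M}(y,y')}r_m\le\sum_{m=1}^K r_m$. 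Combining these,
\[
\overline{P}_{y,y'}(r)\le\frac{1}{\gamma}\,\alpha_y\sqrt{\textstyle\sum_m r_m}.
\]

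Finally, $\overline{P}_e(y;r)=\sum_{y'\neq y}\overline{P}_{y,y'}(r)$ by definition, since the optimized proxy is exactly the inner minimum over $s$ in \eqref{eq:statewise_chernoff}. Summing the $L-1$ identical bounds gives \eqref{eq:app_surrogate_slack}.

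The main point to check is that the strict event, and not the weak event $\{\Delta_{y,y'}(r)\ge 0\}$, is what is needed to compare with $P_e$; ties do not necessarily produce errors, which is why the tightness lemma was stated for $E^{>}$. A second check is that the constants $n_0$ and $\gamma$ are uniform over pairs, which the tightness lemma guarantees. Everything else is direct substitution.
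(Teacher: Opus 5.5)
Your proposal is correct and matches the paper's proof: it uses \cref{lem:app_min_effective_sample} to get $n_{y,y'}(r)\ge n_0$, then the chain $\alpha_y\ge P_e(y;r)\ge\Pr(E^{>}_{y,y'}(r)\mid Y=y)\ge(\gamma/\sqrt{n_{y,y'}(r)})\,\overline{P}_{y,y'}(r)$ from \cref{lem:app_pairwise_tightness}, then $n_{y,y'}(r)\le\sum_m r_m$, and finally sums over the $L-1$ competitors. One caveat you share with the paper: the ``consequently'' clause of \cref{lem:app_min_effective_sample} follows from its first part only when $\alpha_y$ and $\alpha_{y'}$ are both below $\varepsilon_0$, so this step really needs $\max_y\alpha_y<\varepsilon_0$ rather than just $\alpha_{\min}<\varepsilon_0$.
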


\begin{proof}[Proof of \cref{lem:app_true_to_surrogate}]
Fix $y'\neq y$.
Since $\alpha_{\min}<\varepsilon_0$, \cref{lem:app_min_effective_sample} gives $n_{y,y'}(r)\ge n_0$.
By \cref{lem:app_pairwise_tightness},
$\alpha_y\ge P_e(y;r)\ge \Pr(E^{>}_{y,y'}(r)\mid Y=y)\ge (\gamma/\sqrt{n_{y,y'}(r)})\,\overline{P}_{y,y'}(r)$,
so $\overline{P}_{y,y'}(r)\le (\alpha_y/\gamma)\sqrt{n_{y,y'}(r)}\le (\alpha_y/\gamma)\sqrt{\sum_m r_m}$.
Summing over $y'\neq y$ yields \eqref{eq:app_surrogate_slack}.
\end{proof}

We now assemble the preceding lemmas to bound the additive gap between the surrogate and true optimal costs.

\begin{proposition}[Additive optimality gap]
\label{prop:app_additive_gap}
There exist constants $\alpha_{\min} \in(0,1)$ and $K_0>0$ such that
for sufficiently small $\alpha_{\min}$ and $\mathrm{OPT}(\alpha)<\infty$,
\[
0\le \overline{\mathrm{OPT}}(\alpha)-\mathrm{OPT}(\alpha)
\le K_0\,\log\log\!\left(\frac{1}{\alpha_{\min}}\right).
\]
\end{proposition}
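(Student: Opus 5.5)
The lower bound $0\le\overline{\mathrm{OPT}}(\alpha)-\mathrm{OPT}(\alpha)$ is \cref{cor:surrogate_conservative}. For the upper bound, I will start from a true-optimal plan $r^\star\in\mathcal{R}(\alpha)$ with $C(r^\star)=\mathrm{OPT}(\alpha)$. I will augment it by $n$ full rounds to obtain $r^{+n}=r^\star+n\mathbf{1}$, and choose $n=O(\log\log(1/\alpha_{\min}))$ so that $r^{+n}\in\overline{\mathcal{R}}(\alpha)$. This gives $\overline{\mathrm{OPT}}(\alpha)\le C(r^{+n})=\mathrm{OPT}(\alpha)+n\,c_\Sigma$, so the target bound holds with $K_0$ a multiple of $c_\Sigma$.

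First, take $\alpha_{\min}<\varepsilon_0$. Then \cref{lem:app_true_to_surrogate} applies to $r^\star$ and gives $\overline{P}_e(y;r^\star)\le \frac{L-1}{\gamma}\alpha_y\sqrt{N}$, where $N:=\sum_m r^\star_m\le \mathrm{OPT}(\alpha)/c_{\min}$. Because \cref{lem:app_theta} is stated for fixed $s\in[\delta,1-\delta]$, I will apply it pairwise. By \cref{lem:app_min_effective_sample}, $n_{y,y'}(r^\star)\ge n_0\ge n_{\mathrm{cen}}$, so \cref{lem:app_centering} places the minimizer $s^\star=s^\star_{y,y'}(r^\star)$ in $[\delta,1-\delta]$. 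Then
\[
\overline{P}_{y,y'}(r^{+n})\le \overline{P}_{y,y'}(r^{+n};s^\star)\le \vartheta^n\,\overline{P}_{y,y'}(r^\star;s^\star)=\vartheta^n\,\overline{P}_{y,y'}(r^\star).
\]
Summing over $y'\ne y$ and using the pairwise bound from the proof of \cref{lem:app_true_to_surrogate} gives $\overline{P}_e(y;r^{+n})\le \vartheta^n\frac{L-1}{\gamma}\alpha_y\sqrt{N}$. This is at most $\alpha_y$ as soon as $n\ge \log\bigl((L-1)\sqrt{N}/\gamma\bigr)/\log(1/\vartheta)$.

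Second, I need $N$ to be at most polynomial in $\log(1/\alpha_{\min})$, since otherwise $\log\sqrt{N}$ is not $O(\log\log)$. This is the main obstacle. My approach is to upper bound $\mathrm{OPT}(\alpha)$ by an explicit feasible plan: the uniform plan $r=k\mathbf{1}$. For $k\ge n_{\mathrm{cen}}$, using the minimizing tilt in $[\delta,1-\delta]$ together with \cref{lem:app_theta} started from $r=0$, we get $\overline{P}_{y,y'}(k\mathbf{1})\le \overline{P}_{y,y'}(0;s^\star)\,\vartheta^k$. The prefactor satisfies
\[
\overline{P}_{y,y'}(0;s^\star)=\bigl(\pi(y')/\pi(y)\bigr)^{s^\star}\le \pi_{\max}/\pi_{\min}.
\]
Hence the plan $k\mathbf{1}$ is surrogate-feasible, and therefore true-feasible, once
\[
k\ge \max\Bigl\{n_{\mathrm{cen}},\ \log\bigl((L-1)\pi_{\max}/(\pi_{\min}\alpha_{\min})\bigr)/\log(1/\vartheta)\Bigr\}.
\]
So $\mathrm{OPT}(\alpha)\le c_\Sigma k=O(\log(1/\alpha_{\min}))$, which gives $N=O(\log(1/\alpha_{\min}))$ and $\log\sqrt{N}=O(\log\log(1/\alpha_{\min}))$. (The same argument shows $\mathrm{OPT}<\infty$ automatically for small $\alpha_{\min}$.)

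Finally, set $n=\bigl\lceil \log\bigl((L-1)\sqrt{N}/\gamma\bigr)/\log(1/\vartheta)\bigr\rceil^+$. For $\alpha_{\min}$ small enough that $\log\log(1/\alpha_{\min})\ge1$, the constants and the ceiling can be absorbed, which yields $n\,c_\Sigma\le K_0\log\log(1/\alpha_{\min})$ for a suitable $K_0$ depending only on $L,\gamma,\vartheta,\pi,c$. The threshold on $\alpha_{\min}$ is the minimum of $\varepsilon_0$ and the value that makes this absorption valid.
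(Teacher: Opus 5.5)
Your proposal is correct and follows the paper's proof essentially step by step. You pad a true-optimal $r^\star$ by $n$ full rounds, use \cref{lem:app_true_to_surrogate} for the $\sqrt{N}$ slack, use \cref{lem:app_centering} with \cref{lem:app_theta} for the $\vartheta^n$ contraction, and get $N=O(\log(1/\alpha_{\min}))$ from a feasible uniform design. The only difference is cosmetic: you bound the uniform design via $\vartheta$ at a tilt in $[\delta,1-\delta]$, whereas the paper uses $\rho$. In your version the condition $k\ge n_{\mathrm{cen}}$ is unnecessary, because evaluating at any fixed $s\in[\delta,1-\delta]$ already suffices.
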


\begin{proof}[Proof of \cref{prop:app_additive_gap}]
Let $r^\star$ be an optimal solution of the true problem with $C(r^\star)=\mathrm{OPT}(\alpha)$
and write $N^\star:=\sum_m r_m^\star$.
We begin by bounding $N^\star$ from above.
Under pairwise identifiability,
$\rho:=\max_{y\neq y'}\min_{s\in[0,1]}\prod_m M_m^{(y,y')}(s)\in(0,1)$.
The uniform design $r^{\mathrm{unif}}(n):=(n,\ldots,n)$ satisfies
$\overline{P}_e(y;r^{\mathrm{unif}}(n))\le (L-1)(\pi_{\max}/\pi_{\min})\rho^n$,
so choosing $n=O(\log(1/\alpha_{\min}))$ makes $r^{\mathrm{unif}}(n)$ feasible.
It follows that
$\mathrm{OPT}(\alpha)=O(\log(1/\alpha_{\min}))$ and $N^\star=O(\log(1/\alpha_{\min}))$.

Although $r^\star$ is feasible for the true problem, it need not be surrogate-feasible.
\Cref{lem:app_true_to_surrogate} quantifies the gap: $\overline{P}_e(y;r^\star)\le (L-1)\alpha_y\sqrt{N^\star}/\gamma$ for every $y$.
To close this gap, consider the padded design $r^\star{}^{+n}=r^\star+n\mathbf{1}$ for $n\in\mathbb{Z}_{\ge 0}$.
For each pair $(y,y')$, the minimizer $s^\star_{y,y'}(r^\star)\in[\delta,1-\delta]$ by \cref{lem:app_centering}.
Since $\overline{P}_{y,y'}(r^\star{}^{+n})\le \overline{P}_{y,y'}(r^\star{}^{+n};s^\star_{y,y'}(r^\star))$
and \cref{lem:app_theta} gives
$\overline{P}_{y,y'}(r^\star{}^{+n};s^\star_{y,y'}(r^\star))\le \vartheta^n\,\overline{P}_{y,y'}(r^\star)$,
summing over $y'\neq y$ yields
$\overline{P}_e(y;r^\star{}^{+n})\le \vartheta^n\,\overline{P}_e(y;r^\star)$.
Choosing
$n:=\lceil\log((L-1)\sqrt{N^\star}/\gamma)/(-\log\vartheta)\rceil$
therefore ensures $\overline{P}_e(y;r^\star{}^{+n})\le \alpha_y$ for all $y$,
so $r^\star{}^{+n}\in\overline{\mathcal{R}}(\alpha)$ and
$\overline{\mathrm{OPT}}(\alpha)\le \mathrm{OPT}(\alpha)+n\,c_{\Sigma}$.

It remains to observe that $n=O(\log\log(1/\alpha_{\min}))$.
Since $N^\star=O(\log(1/\alpha_{\min}))$, we have
$\log(\sqrt{N^\star})=O(\log\log(1/\alpha_{\min}))$, and
the claimed bound follows after absorbing constants into $K_0$.
\end{proof}

We are now ready to prove \cref{thm:optight}. For the reader's convenience, we restate it here.

\medskip
\noindent\textbf{Theorem \ref{thm:optight}} (Optimization-level tightness).
\textit{Suppose the uniform boundedness condition is satisfied. Then for all sufficiently small $\alpha_{\min}:=\min_{y\in\mathcal{Y}}\alpha_y$, if $\mathrm{OPT}(\alpha)<\infty$,
\[
1
\le
\frac{\overline{\mathrm{OPT}}(\alpha)}{\mathrm{OPT}(\alpha)}
\le
1
+
O\!\left(
\frac{
\log\log\!\left(1/\alpha_{\min}\right)
}{
\log\!\left(1/\alpha_{\min}\right)
}\right).
\]
In particular, this ratio approaches $1$ as $\alpha_{\min}\downarrow 0$.}

\begin{proof}[Proof of \cref{thm:optight}]
The lower bound $1\le \overline{\mathrm{OPT}}(\alpha)/\mathrm{OPT}(\alpha)$
follows from the inclusion $\overline{\mathcal{R}}(\alpha)\subseteq\mathcal{R}(\alpha)$.
For the upper bound, \cref{prop:app_additive_gap} gives
$\overline{\mathrm{OPT}}(\alpha)\le \mathrm{OPT}(\alpha)+K_0\log\log(1/\alpha_{\min})$.
It remains to show $\mathrm{OPT}(\alpha)\ge K_{\mathrm{low}}\log(1/\alpha_{\min})$
for some constant $K_{\mathrm{low}}>0$.

Let $r^\star$ be an optimal true-feasible design with
$C(r^\star)=\mathrm{OPT}(\alpha)$ and write $N^\star:=\sum_m r_m^\star$.
Fix $y^\star\in\arg\min_y \alpha_y$ and any competitor $y'\neq y^\star$.
By feasibility and \cref{lem:app_pairwise_tightness},
$\alpha_{\min}\ge \Pr(E^{>}_{y^\star,y'}(r^\star)\mid Y=y^\star)
\ge (\gamma/\sqrt{N^\star})\,\overline{P}_{y^\star,y'}(r^\star)$,
hence
\begin{equation}
\label{eq:app_proxy_upper}
\overline{P}_{y^\star,y'}(r^\star)
\le
\frac{\alpha_{\min}}{\gamma}\sqrt{\frac{\mathrm{OPT}(\alpha)}{c_{\min}}}.
\end{equation}

Define
$w_{\max}:=\max_{m,\,y\neq y',\,s\in[0,1]}(-\log M_m^{(y,y')}(s))<\infty$.
Then
$-\log \overline{P}_{y^\star,y'}(r^\star)
\le \log(\pi_{\max}/\pi_{\min})+(w_{\max}/c_{\min})\,\mathrm{OPT}(\alpha)$,
so
\begin{equation}
\label{eq:app_proxy_lower}
\overline{P}_{y^\star,y'}(r^\star)
\ge
\exp\!\big(-\log(\pi_{\max}/\pi_{\min})-(w_{\max}/c_{\min})\,\mathrm{OPT}(\alpha)\big).
\end{equation}

Combining \eqref{eq:app_proxy_upper} and \eqref{eq:app_proxy_lower} and taking logarithms yields
\[
\frac{w_{\max}}{c_{\min}}\,\mathrm{OPT}(\alpha)
\ge
\log\!\left(\frac{1}{\alpha_{\min}}\right)
-\tfrac{1}{2}\log\!\left(\frac{\mathrm{OPT}(\alpha)}{c_{\min}}\right)
-\log\!\left(\frac{\pi_{\max}}{\pi_{\min}}\right)
-\log\!\left(\frac{1}{\gamma}\right).
\]
Since $\mathrm{OPT}(\alpha)=O(\log(1/\alpha_{\min}))$, the right-hand side is
$\log(1/\alpha_{\min})-O(\log\log(1/\alpha_{\min}))$.
For $\alpha_{\min}$ small enough,
$\mathrm{OPT}(\alpha)\ge K_{\mathrm{low}}\log(1/\alpha_{\min})$.
Dividing the additive gap by this lower bound gives
\[
\frac{\overline{\mathrm{OPT}}(\alpha)}{\mathrm{OPT}(\alpha)}
\le
1+\frac{K_0\log\log(1/\alpha_{\min})}{K_{\mathrm{low}}\log(1/\alpha_{\min})},
\]
which is $1+O(\log\log(1/\alpha_{\min})/\log(1/\alpha_{\min}))$ as $\alpha_{\min}\to 0$, yielding the stated bound in \cref{thm:optight}.
\end{proof}

\subsection{Proof of \cref{thm:afptas}}
\label{app:proof_afptas}

This subsection proves the AFPTAS guarantee stated in \cref{thm:afptas}.
We first collect the supporting lemmas that the proof requires.
Throughout, we use the shorthand
$\alpha_{\min}:=\min_{y}\alpha_y$,
$\pi_{\min}:=\min_{y}\pi(y)$,
$\pi_{\max}:=\max_{y}\pi(y)$,
$c_{\min}:=\min_{m} c_m$,
$c_{\Sigma}:=\sum_{m=1}^K c_m$,
and $\mathcal{P}:=\{(y,y')\in\mathcal{Y}^2: y\neq y'\}$.

\subsubsection{Supporting lemmas}

The first lemma provides a feasibility condition and an upper bound on the total number of queries in any optimal surrogate design.

\begin{lemma}[Crude feasible design and query-count bound]
\label{lem:app_crude_feasible}
Define the pairwise contraction
$\rho:=\max_{y\neq y'}\min_{s\in[0,1]}\prod_{m=1}^K M_m^{(y,y')}(s)$.
Then $\rho\in(0,1)$, and the uniform design
$r^{\mathrm{unif}}(n):=(n,\ldots,n)\in\mathbb{Z}_{\ge 0}^K$ is surrogate-feasible
whenever
\begin{equation}
\label{eq:app_n_unif}
n
\;\ge\;
n_{\mathrm{unif}}
:=
\left\lceil
\frac{
\log\!\Big(\frac{(L-1)\,\pi_{\max}}{\pi_{\min}\,\alpha_{\min}}\Big)
}{
-\log \rho
}
\right\rceil.
\end{equation}
In particular, $\overline{\mathrm{OPT}}(\alpha)\le n_{\mathrm{unif}}\,c_{\Sigma}$.

Let $k_{\max}:=\lceil 2\log 2/(-\log\vartheta)\rceil$,
where $\vartheta\in(0,1)$ is the contraction constant from
\cref{lem:app_theta}.
Every optimal surrogate design $r^\star$ satisfies
\begin{equation}
\label{eq:app_Nmax}
N(r^\star):=\sum_{m=1}^K r_m^\star
\;\le\;
\frac{n_{\mathrm{unif}}\,c_{\Sigma}}{c_{\min}}+k_{\max}\,K
\;=:\;N_{\max}.
\end{equation}
\end{lemma}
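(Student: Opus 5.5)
The plan has two parts, matching the two claims of the lemma. Both reduce to elementary consequences of \cref{lem:M_bounds} and of the positivity of the per-query costs.

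\textbf{Step 1: $\rho\in(0,1)$.} Fix a pair $y\neq y'$ and evaluate the product at $s=1/2$. By pairwise identifiability some model $m$ has $p_m(\cdot\mid y)\neq p_m(\cdot\mid y')$. \Cref{lem:M_bounds} then gives $M_m^{(y,y')}(1/2)<1$, while all other factors are at most $1$. Hence $\min_{s}\prod_m M_m^{(y,y')}(s)<1$.

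The minimum over $s$ is attained because $s\mapsto \prod_m M_m^{(y,y')}(s)$ is continuous on the compact set $[0,1]$. It is strictly positive because each $M_m^{(y,y')}(s)>0$, again by \cref{lem:M_bounds}. Taking the maximum over the finitely many ordered pairs preserves both strict inequalities, so $\rho\in(0,1)$.

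\textbf{Step 2: feasibility of the uniform design.} For each pair, choose $s_{(y,y')}$ as a minimizer of $\prod_m M_m^{(y,y')}(s)$. Then
\[
\overline{P}_{y,y'}(r^{\mathrm{unif}}(n)) \le \Bigl(\tfrac{\pi(y')}{\pi(y)}\Bigr)^{s_{(y,y')}}\rho^{n}.
\]
The prior factor is controlled by the case split on whether $\pi(y')/\pi(y)\ge 1$ or $<1$. In either case, since $s\in[0,1]$, we have $(\pi(y')/\pi(y))^{s}\le \max\{1,\pi(y')/\pi(y)\}\le \pi_{\max}/\pi_{\min}$.

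Summing over the $L-1$ competitors gives
\[
\overline{P}_e(y;r^{\mathrm{unif}}(n))\le (L-1)\frac{\pi_{\max}}{\pi_{\min}}\,\rho^{n}.
\]
This is at most $\alpha_{\min}\le\alpha_y$ precisely when $n\log(1/\rho)\ge \log\bigl((L-1)\pi_{\max}/(\pi_{\min}\alpha_{\min})\bigr)$, which is exactly \eqref{eq:app_n_unif}. Since the uniform design is then surrogate-feasible, $\overline{\mathrm{OPT}}(\alpha)\le C(r^{\mathrm{unif}}(n_{\mathrm{unif}}))= n_{\mathrm{unif}}\,c_\Sigma$.

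\textbf{Step 3: the query-count bound.} For any optimal surrogate design $r^\star$,
\[
c_{\min}N(r^\star)\le \sum_m c_m r^\star_m=\overline{\mathrm{OPT}}(\alpha)\le n_{\mathrm{unif}}\,c_\Sigma .
\]
Dividing by $c_{\min}$ gives $N(r^\star)\le n_{\mathrm{unif}}c_\Sigma/c_{\min}$, which is already stronger than \eqref{eq:app_Nmax} because $k_{\max}K\ge 0$. The additive $k_{\max}K$ term is pure slack for this lemma. I expect it is included so that $N_{\max}$ also dominates the padded designs $r^\star+k\mathbf 1$ with $k\le k_{\max}$ used later in the AFPTAS analysis. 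For that purpose one needs only that $k_{\max}$ is finite, which holds because $\vartheta\in(0,1)$ by \cref{lem:app_theta}.

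There is no real obstacle in this argument. The only point needing care is the prior factor in Step 2: the exponent $s$ may equal $0$ or $1$, so the bound must be taken as the uniform bound $\pi_{\max}/\pi_{\min}$ rather than something that depends on $s$.
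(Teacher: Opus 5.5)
Your proposal is correct and follows the paper's proof essentially step for step. You bound each pairwise term by $(\pi_{\max}/\pi_{\min})\rho^n$, sum over the $L-1$ competitors to get surrogate feasibility of $r^{\mathrm{unif}}(n_{\mathrm{unif}})$, use $c_{\min}N(r^\star)\le \overline{\mathrm{OPT}}(\alpha)\le n_{\mathrm{unif}}c_\Sigma$, and treat $k_{\max}K$ as slack reserved for the padding in the AFPTAS proof; your Step 1 (evaluating at $s=1/2$ and using the equality case of \cref{lem:M_bounds}) also supplies the justification of $\rho\in(0,1)$, which the paper's proof asserts without argument.
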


\begin{proof}[Proof of \cref{lem:app_crude_feasible}]
For any $y\neq y'$ and $r=r^{\mathrm{unif}}(n)$,
\[
\overline{P}_{y,y'}(r)
=
\min_{s\in[0,1]}
\left(\frac{\pi(y')}{\pi(y)}\right)^{s}
\prod_{m=1}^K \left(M_m^{(y,y')}(s)\right)^{n}
\le
\frac{\pi_{\max}}{\pi_{\min}}\,\rho^n.
\]
Summing over $y'\neq y$ gives
$\overline{P}_e(y;r)\le (L-1)(\pi_{\max}/\pi_{\min})\rho^n$.
This is at most $\alpha_y$ for all $y$ whenever $n\ge n_{\mathrm{unif}}$.
The cost bound $\overline{\mathrm{OPT}}(\alpha)\le n_{\mathrm{unif}}\,c_{\Sigma}$ follows.
For the query-count bound, if $r^\star$ is optimal then
$C(r^\star)\le n_{\mathrm{unif}}\,c_{\Sigma}$ and
$N(r^\star)\le C(r^\star)/c_{\min}\le n_{\mathrm{unif}}\,c_{\Sigma}/c_{\min}$.
Adding the term $k_{\max}K$ accounts for the augmentation used in the
proof of \cref{thm:afptas}.
\end{proof}

The next lemma shows that the pairwise Chernoff bound is Lipschitz in the tilting parameter $s$, with a Lipschitz constant controlled by the query-count bound $N_{\max}$.

\begin{lemma}[Lipschitz bound in $s$]
\label{lem:app_lipschitz_s}
For any ordered pair $p=(y,y')$, any $s,s'\in[0,1]$, and any $r$ with $N(r)\le N_{\max}$,
\begin{equation}
\label{eq:app_pairwise_lip}
\overline{P}_{y,y'}(r;s')
\le
\exp\!\Big(\Lambda\,|s-s'|\Big)\,
\overline{P}_{y,y'}(r;s),
\qquad
\Lambda:=\log\!\Big(\frac{\pi_{\max}}{\pi_{\min}}\Big)+B\,N_{\max}.
\end{equation}
\end{lemma}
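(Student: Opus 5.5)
The approach is to work with the logarithm of the pairwise proxy and bound its derivative in $s$ uniformly. Fix the ordered pair $p=(y,y')$ and a plan $r$ with $N(r)\le N_{\max}$, and set
\[
\Psi(s):=\log \overline{P}_{y,y'}(r;s)= s\log\!\Big(\frac{\pi(y')}{\pi(y)}\Big)+\sum_{m=1}^K r_m \log M_m^{(y,y')}(s).
\]
This is well defined on $[0,1]$, since $M_m^{(y,y')}(s)>0$ by \cref{lem:M_bounds}. It is also continuously differentiable there, because each $M_m^{(y,y')}$ is a finite sum of functions $s\mapsto p_m(x\mid y)^{1-s}p_m(x\mid y')^{s}$ with strictly positive bases. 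It therefore suffices to show $|\Psi'(s)|\le\Lambda$ for all $s\in[0,1]$. The mean value theorem then gives $\Psi(s')-\Psi(s)\le \Lambda|s-s'|$, and exponentiating yields \eqref{eq:app_pairwise_lip}.

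\textbf{Derivative of the prior term.} This term contributes the constant $\log(\pi(y')/\pi(y))$, whose absolute value is at most $\log(\pi_{\max}/\pi_{\min})$.

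\textbf{Derivative of each model term.} Differentiate under the finite sum:
\[
\frac{d}{ds}\log M_m^{(y,y')}(s)=\frac{\sum_x p_m(x\mid y)^{1-s}p_m(x\mid y')^{s}\,h_m(x)}{M_m^{(y,y')}(s)}=\E_{q_m^{(y,y')}(\cdot;s)}[h_m(X)],
\]
where $h_m(x)=\log(p_m(x\mid y')/p_m(x\mid y))$ and $q_m^{(y,y')}(\cdot;s)$ is the tilted marginal of \cref{rem:app_change_of_measure}. This is the same computation that gives $\Psi'(s)=\E_{\mathbb Q}[\Delta]$ in \cref{rem:app_centering}. Since $q_m$ is a probability distribution and $|h_m|\le B$ by the uniform boundedness condition, this derivative has absolute value at most $B$. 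For models with $p_m(\cdot\mid y)=p_m(\cdot\mid y')$ we have $h_m\equiv 0$, which is consistent with the bound.

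\textbf{Summing.} Adding the pieces gives
\[
|\Psi'(s)|\le \log(\pi_{\max}/\pi_{\min})+B\sum_m r_m\le \log(\pi_{\max}/\pi_{\min})+BN_{\max}=\Lambda,
\]
using $N(r)\le N_{\max}$.

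No step is a serious obstacle. The only point needing care is that the tilted-expectation representation of the derivative, together with the boundedness of $\Psi'$, holds uniformly on the closed interval $[0,1]$, including the endpoints. This follows from strict positivity of all likelihoods, which makes each $M_m^{(y,y')}$ smooth and bounded away from zero on a neighborhood of $[0,1]$.
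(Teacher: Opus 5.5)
Your proposal is correct and follows essentially the same route as the paper. Both write $\log \overline{P}_{y,y'}(r;s)$ as the linear prior term plus $\sum_m r_m \log M_m^{(y,y')}(s)$, bound each $\tfrac{d}{ds}\log M_m^{(y,y')}(s)$ by $B$ through its tilted-expectation form, and then apply the mean value theorem with $N(r)\le N_{\max}$. Your remark that differentiability holds up to the endpoints is a small clarification that the paper leaves implicit.
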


\begin{proof}[Proof of \cref{lem:app_lipschitz_s}]
We have $\log \overline{P}_{y,y'}(r;s) = s\log(\pi(y')/\pi(y)) + \sum_{m=1}^K r_m \log M_m^{(y,y')}(s)$.
The first term is Lipschitz in $s$ with constant $|\log(\pi(y')/\pi(y))|\le \log(\pi_{\max}/\pi_{\min})$.
For the second term, differentiating $\log M_m^{(y,y')}(s)$ yields
$(d/ds)\log M_m^{(y,y')}(s) = \E_{q_m^{(y,y')}(\cdot;s)}[\log(p_m(X\mid y')/p_m(X\mid y))]$,
which is bounded in absolute value by $B$ (from the uniform boundedness condition).
By the mean value theorem,
$|\log M_m^{(y,y')}(s) - \log M_m^{(y,y')}(s')| \le B|s-s'|$.
Combining and using $\sum_m r_m = N(r) \le N_{\max}$ gives
$|\log \overline{P}_{y,y'}(r;s) - \log \overline{P}_{y,y'}(r;s')| \le \Lambda|s-s'|$,
which implies \eqref{eq:app_pairwise_lip}.
\end{proof}

The next lemma shows that rounding the discrimination weights downward is conservative and inflates the error bounds by a controlled factor.

\begin{lemma}[Rounding bounds]
\label{lem:app_rounding_bounds}
Fix $s\in[0,1]^{|\mathcal{P}|}$ and let
$\Delta:=\log(1+\varepsilon)/N_{\max}$.
For each model $m$ and pair $p=(y,y')\in\mathcal{P}$, define
$w_{m,p}(s):=-\log M_m^{(y,y')}(s_p)$ and
$\widetilde{w}_{m,p}(s):=\lfloor w_{m,p}(s)/\Delta\rfloor$.
For any $r$ with $N(r)\le N_{\max}$ and any $p\in\mathcal{P}$,
writing $W_p(r;s):=\sum_m r_m w_{m,p}(s)$ and
$\widetilde{W}_p(r;s):=\sum_m r_m \widetilde{w}_{m,p}(s)$, we have
\[
\Delta\,\widetilde{W}_p(r;s)
\;\le\;
W_p(r;s)
\;<\;
\Delta\,\widetilde{W}_p(r;s)+\log(1+\varepsilon).
\]
Consequently,
\[
A_p(s_p)\exp\!\big(-W_p(r;s)\big)
\;\le\;
A_p(s_p)\exp\!\big(-\Delta\,\widetilde{W}_p(r;s)\big)
\;\le\;
(1+\varepsilon)\,A_p(s_p)\exp\!\big(-W_p(r;s)\big),
\]
where $A_p(s_p):=(\pi(y')/\pi(y))^{s_p}$.
\end{lemma}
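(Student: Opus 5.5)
The plan is to first prove the two-sided inequality on the rounded weights one coordinate at a time, then aggregate it over queries, and finally exponentiate.

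For each model $m$ and pair $p$, the definition of the floor function gives
\[
\widetilde w_{m,p}(s)\le w_{m,p}(s)/\Delta<\widetilde w_{m,p}(s)+1.
\]
Multiplying by $\Delta>0$ yields $0\le w_{m,p}(s)-\Delta\widetilde w_{m,p}(s)<\Delta$. Here $w_{m,p}(s)\ge 0$ because $M_m^{(y,y')}(s_p)\le 1$ by \cref{lem:M_bounds}, so the rounded weights are nonnegative integers, which is consistent with the DP.

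Next, multiply by $r_m\ge0$ and sum over $m$. This gives
\[
0\le W_p(r;s)-\Delta\widetilde W_p(r;s)\le \Delta N(r).
\]
Using $N(r)\le N_{\max}$ and $\Delta N_{\max}=\log(1+\varepsilon)$, the right-hand side is at most $\log(1+\varepsilon)$.

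One subtlety concerns strictness. The strict inequality survives only if $N(r)\ge1$; when $r=0$ both sides equal $0$. I would handle this by noting that the strict bound holds whenever some $r_m>0$, and in the degenerate case read the conclusion with the non-strict version. The exponentiated statement is unaffected, since it uses only "$\le$". I expect this strict-versus-nonstrict edge case to be the only place needing care; otherwise the argument is routine bookkeeping.

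For the consequence, apply the decreasing map $u\mapsto A_p(s_p)e^{-u}$, with $A_p(s_p)>0$, to the chain
\[
\Delta\widetilde W_p\le W_p\le \Delta\widetilde W_p+\log(1+\varepsilon).
\]
The first inequality gives $A_pe^{-W_p}\le A_pe^{-\Delta\widetilde W_p}$. The second gives $e^{-\Delta\widetilde W_p}\le e^{\log(1+\varepsilon)}e^{-W_p}=(1+\varepsilon)e^{-W_p}$. Together these give the stated sandwich, which completes the proof.
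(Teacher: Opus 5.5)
Your proof is correct and follows the paper's argument: you apply the floor bound to each weight, multiply by $r_m$ and sum, use $\Delta N_{\max}=\log(1+\varepsilon)$, and then exponentiate using monotonicity. One small correction to your edge-case remark: when $r=\mathbf{0}$ only the intermediate bound $W_p-\Delta\widetilde W_p<\Delta N(r)$ fails, and the stated strict inequality still holds because $W_p-\Delta\widetilde W_p=0<\log(1+\varepsilon)$ for $\varepsilon>0$, so you do not need to fall back to the non-strict version.
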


\begin{proof}[Proof of \cref{lem:app_rounding_bounds}]
The floor definition gives $\Delta\widetilde{w}_{m,p}\le w_{m,p} < \Delta(\widetilde{w}_{m,p}+1)$.
Multiplying by $r_m$ and summing over $m$ yields
$\Delta\,\widetilde{W}_p \le W_p < \Delta\,\widetilde{W}_p + \Delta\,N(r) \le \Delta\,\widetilde{W}_p + \log(1+\varepsilon)$.
The two-sided bound on $A_p\exp(-\cdot)$ follows by exponentiating and using monotonicity.
\end{proof}

The next lemma establishes the DP recursion that Algorithm~\ref{alg:afptas} uses to solve each fixed-$s$ subproblem.
Fix $s\in[0,1]^{|\mathcal{P}|}$ and let $T_{\max}:=\lceil B N_{\max}/\Delta\rceil$.
For each DP state $t=(t_p)_{p\in\mathcal{P}}\in\{0,\ldots,T_{\max}\}^{|\mathcal{P}|}$, define the conservative error bound
\begin{equation}
\label{eq:app_dp_error}
\widehat{P}_e(y;t;s)
:=
\sum_{y'\neq y}
A_{(y,y')}\!\big(s_{(y,y')}\big)\,
\exp\!\big(-\Delta\, t_{(y,y')}\big),
\qquad y\in\mathcal{Y}.
\end{equation}

\begin{lemma}[DP recursion]
\label{lem:app_dp_recursion}
Define the DP value function
\[
\mathrm{DP}[t]
:=
\min\Big\{
C(r): r\in\mathbb{Z}_{\ge 0}^K,\ \widetilde{W}_p(r;s)\ge t_p\ \ \forall p\in\mathcal{P}
\Big\},
\]
with the convention $\mathrm{DP}[t]=+\infty$ if the set is empty.
Then $\mathrm{DP}[\mathbf{0}]=0$, and for $t\neq \mathbf{0}$,
\begin{equation}
\label{eq:app_dp_recursion}
\mathrm{DP}[t]
=
\min_{m\in\{1,\ldots,K\}}
\Big\{c_m+\mathrm{DP}\big[(t-\widetilde{w}_m)_+\big]\Big\},
\end{equation}
where $(t-\widetilde{w}_m)_+$ denotes componentwise
$\max\{t_p-\widetilde{w}_{m,p}(s),\,0\}$.
Moreover, the minimizing $m$ in \eqref{eq:app_dp_recursion} provides a
backpointer that recovers an optimal design $r$ for $\mathrm{DP}[t]$.
\end{lemma}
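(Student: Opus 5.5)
The plan is a standard unbounded-knapsack dynamic-programming argument over the covering value function. It has four parts: the base case, two inequalities that together give the recursion, and the backpointer claim.

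\emph{Base case.} The state $t=\mathbf 0$ is satisfied by $r=\mathbf 0$ at cost $0$, so $\mathrm{DP}[\mathbf 0]=0$ because all costs are nonnegative. We record two facts used below. First, $\widetilde w_{m,p}(s)\ge 0$, since $M_m^{(y,y')}(s_p)\le 1$ by \cref{lem:M_bounds}. Second, $\mathrm{DP}$ is monotone nondecreasing in $t$ componentwise, since enlarging $t$ shrinks the feasible set.

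\emph{Upper bound ($\le$).} Fix $m$ and let $r'$ attain $\mathrm{DP}[(t-\widetilde w_m)_+]$; if that value is $+\infty$ there is nothing to prove. Set $r=r'+e_m$. Then for every $p$,
\[
\widetilde W_p(r;s)=\widetilde W_p(r';s)+\widetilde w_{m,p}(s)\ge (t_p-\widetilde w_{m,p})_+ + \widetilde w_{m,p}\ge t_p .
\]
Hence $r$ is feasible for $t$, and $\mathrm{DP}[t]\le c_m+\mathrm{DP}[(t-\widetilde w_m)_+]$. Minimizing over $m$ gives one direction.

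\emph{Lower bound ($\ge$).} Let $r$ attain $\mathrm{DP}[t]$ with $t\neq\mathbf 0$. Attainment holds because cost sublevel sets are finite and $c_m>0$; if the set is empty, both sides are $+\infty$ by the upper bound just shown. Since some $t_p>0$ and $\widetilde W_p(\mathbf 0;s)=0$, we have $r\neq \mathbf 0$, so pick $m$ with $r_m\ge 1$ and set $r'=r-e_m$. Then $\widetilde W_p(r';s)=\widetilde W_p(r;s)-\widetilde w_{m,p}\ge t_p-\widetilde w_{m,p}$, and also $\widetilde W_p(r';s)\ge 0$. So $r'$ is feasible for $(t-\widetilde w_m)_+$, giving
\[
\mathrm{DP}[t]=c_m+C(r')\ge c_m+\mathrm{DP}[(t-\widetilde w_m)_+]\ge \min_{m}\{\cdots\}.
\]

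\emph{Main obstacle: well-foundedness.} The recursion must refer only to previously computed states in non-decreasing $\|t\|_1$ order, so I would check the case where $\widetilde w_m=\mathbf 0$ and the state maps to itself.
\begin{itemize}
\item Such a self-referencing term $c_m+\mathrm{DP}[t]$ strictly exceeds $\mathrm{DP}[t]$ because $c_m>0$. It can therefore be dropped from the minimum without changing the value.
\item Every other $m$ yields $(t-\widetilde w_m)_+\le t$ with strict decrease in $\|\cdot\|_1$, provided some $\widetilde w_{m,p}>0$ with $t_p>0$.
\item If instead all $m$ with nonzero weight have zero weight on the positive coordinates of $t$, the state is unreachable and is correctly $+\infty$.
\end{itemize}
Cutting states at $T_{\max}$ is harmless: any relevant $\widetilde W_p\le N_{\max}\max_m\widetilde w_{m,p}\le B N_{\max}/\Delta\le T_{\max}$, using $w_{m,p}\le B$, which follows from the Lipschitz argument in \cref{lem:app_lipschitz_s}.

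\emph{Backpointer.} Unrolling the minimizing choices $m$ from $t$ yields a sequence of states with strictly decreasing $\|t\|_1$, so the sequence terminates at $\mathbf 0$. Summing the $e_m$'s along the path gives a design whose cost telescopes to $\mathrm{DP}[t]$. By the upper-bound argument applied inductively, this design is feasible, so it is optimal for $\mathrm{DP}[t]$.
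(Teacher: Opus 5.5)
Your proof is correct and follows essentially the same route as the paper. Both arguments take $r=\mathbf 0$ for the base case, obtain the ``$\le$'' direction by appending $e_m$ to an optimizer of $\mathrm{DP}[(t-\widetilde w_m)_+]$, and obtain the ``$\ge$'' direction by removing one query from an optimizer of $\mathrm{DP}[t]$. Your extra remarks fill in details the paper leaves implicit without changing the route: the nonnegativity $\widetilde W_p(r';s)\ge 0$ needed for the positive part, the infeasible/$+\infty$ case, the self-loop and well-foundedness check, and the inductive backpointer argument.
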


\begin{proof}[Proof of \cref{lem:app_dp_recursion}]
If $t=\mathbf{0}$, then $r=\mathbf{0}$ is feasible with cost $0$, so $\mathrm{DP}[\mathbf{0}]=0$.
Now let $t\neq \mathbf{0}$ and suppose $\mathrm{DP}[t]<+\infty$.
Let $r$ be an optimizer for $\mathrm{DP}[t]$.
Since $t\neq \mathbf{0}$, we must have $r\neq \mathbf{0}$, so there exists $m$ with $r_m\ge 1$.
Setting $r':=r-e_m$ gives $C(r')=C(r)-c_m$ and
$\widetilde{W}_p(r';s)=\widetilde{W}_p(r;s)-\widetilde{w}_{m,p}(s)\ge (t-\widetilde{w}_m)_{+,p}$
for all $p$.
Therefore $r'$ is feasible for $\mathrm{DP}[(t-\widetilde{w}_m)_+]$, giving
$C(r)\ge c_m+\mathrm{DP}[(t-\widetilde{w}_m)_+]$.
Minimizing over $m$ proves the ``$\ge$'' direction.

For the reverse direction, fix any $m$ with $\mathrm{DP}[(t-\widetilde{w}_m)_+]<+\infty$ and let $r'$ be an optimizer for that subproblem.
Setting $r:=r'+e_m$ gives $\widetilde{W}_p(r;s)=\widetilde{W}_p(r';s)+\widetilde{w}_{m,p}(s)\ge t_p$ for all $p$, so $r$ is feasible for $\mathrm{DP}[t]$ with cost $c_m+\mathrm{DP}[(t-\widetilde{w}_m)_+]$.
Taking the minimum over $m$ yields the ``$\le$'' direction.
\end{proof}

The following proposition connects the DP output to surrogate-feasibility.

\begin{proposition}[Fixed-$s$ DP feasibility]
\label{prop:app_fixeds_dp}
Fix $s\in[0,1]^{|\mathcal{P}|}$.
Let $t^\dagger$ be any minimizer of $\mathrm{DP}[t]$ over the states
$t\in\{0,\ldots,T_{\max}\}^{|\mathcal{P}|}$ satisfying
$\widehat{P}_e(y;t;s)\le \alpha_y$ for all $y\in\mathcal{Y}$.
Let $r^\dagger$ be the design recovered from the backpointers at $t^\dagger$.
Then $r^\dagger$ is surrogate-feasible for the original problem \eqref{eq:surrogate_problem}.

Moreover, if some design $r'$ with $N(r')\le N_{\max}$ satisfies
$\widehat{P}_e(y;\widetilde{W}(r';\,s);s)\le \alpha_y$ for all $y$,
then
$\mathrm{DP}[\widetilde{W}(r';s)]\le C(r')$
and consequently
$C(r^\dagger)\le C(r')$.
\end{proposition}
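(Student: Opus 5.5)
The proof has two parts: a feasibility chain for $r^\dagger$, and a comparison of $r^\dagger$ against $r'$.

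\textbf{Feasibility of $r^\dagger$.} By \cref{lem:app_dp_recursion}, the backpointers at $t^\dagger$ recover a design $r^\dagger$ with $C(r^\dagger)=\mathrm{DP}[t^\dagger]$ and $\widetilde W_p(r^\dagger;s)\ge t^\dagger_p$ for every $p$. This comes from induction along the recursion: each backpointer step adds $e_m$ and raises every coordinate by at least the amount subtracted in $(t-\widetilde w_m)_+$.

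Since every $\widetilde w_{m,p}(s)\ge 0$, the lower half of \cref{lem:app_rounding_bounds} gives $W_p(r^\dagger;s)\ge \Delta\widetilde W_p(r^\dagger;s)\ge \Delta t^\dagger_p$. That half uses only the floor definition, so it does not need $N(r^\dagger)\le N_{\max}$. Hence, for every $y$,
\[
\sum_{y'\neq y} A_{(y,y')}(s_{(y,y')})\prod_m \bigl(M_m^{(y,y')}(s_{(y,y')})\bigr)^{r^\dagger_m}
=\sum_{y'\neq y}A_{(y,y')}\,e^{-W_{(y,y')}(r^\dagger;s)}
\le \widehat P_e(y;t^\dagger;s)\le\alpha_y .
\]
So $(r^\dagger,s)$ is feasible for the joint problem \eqref{eq:joint_reformulation}. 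Since $\overline P_e(y;r^\dagger)$ minimizes each term over its own $s$, it is bounded by the left-hand side. Therefore $r^\dagger\in\overline{\mathcal R}(\alpha)$; equivalently, this is \cref{prop:joint_reformulation}.

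\textbf{Comparison with $r'$.} Set $t':=\widetilde W(r';s)$. By the definition of $\mathrm{DP}$ as a minimum over designs with $\widetilde W_p\ge t_p$, $r'$ itself is admissible for $\mathrm{DP}[t']$, so $\mathrm{DP}[t']\le C(r')$. By hypothesis, $t'$ satisfies the feasibility condition $\widehat P_e(y;t';s)\le\alpha_y$.

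It remains to check that $t'$ lies in the searched box $\{0,\dots,T_{\max}\}^{|\mathcal P|}$. By \cref{lem:app_rounding_bounds}, $\Delta t'_p\le W_p(r';s)=\sum_m r'_m(-\log M_m(s_p))$. The second step is to bound $-\log M_m^{(y,y')}(s)\le B$, which follows from $M=\E_{p(\cdot|y)}[e^{s h}]\ge e^{-sB}\ge e^{-B}$. This gives $t'_p\le BN_{\max}/\Delta\le T_{\max}$.

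Since $t^\dagger$ minimizes $\mathrm{DP}$ over all feasible states in the box, $C(r^\dagger)=\mathrm{DP}[t^\dagger]\le \mathrm{DP}[t']\le C(r')$.

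\textbf{Main obstacle.} The delicate point is the backtracking claim that the recovered design really attains $\mathrm{DP}[t^\dagger]$ while meeting $\widetilde W(r^\dagger)\ge t^\dagger$, especially when $(t-\widetilde w_m)_+$ truncates. I would prove it by strong induction on $\|t\|_1$, noting that the recursion is well-founded only when $\widetilde w_m\neq 0$. Models with $\widetilde w_m=\mathbf 0$ are never strictly useful, because $c_m>0$ makes those transitions dominated, so the induction goes through.
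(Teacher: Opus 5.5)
Your proof is correct and follows the same route as the paper. Backtracking gives $\widetilde W_p(r^\dagger;s)\ge t^\dagger_p$, the floor inequality yields fixed-$s$ feasibility and hence surrogate feasibility after minimizing over $s$, and admissibility of $r'$ for $\mathrm{DP}[\widetilde W(r';s)]$ together with minimality of $t^\dagger$ gives $C(r^\dagger)\le C(r')$. Your check that $\widetilde W(r';s)\in\{0,\dots,T_{\max}\}^{|\mathcal P|}$ via $-\log M\le B$ (the place where $N(r')\le N_{\max}$ is actually used), and your remark that the lower rounding inequality needs no bound on $N(r^\dagger)$, supply steps the paper's proof leaves implicit. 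One small correction to your backtracking remark: self-loops arise whenever $(t-\widetilde w_m)_+=t$, not only when $\widetilde w_m=\mathbf 0$. The same $c_m>0$ dominance argument still disposes of them.
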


\begin{proof}[Proof of \cref{prop:app_fixeds_dp}]
By \cref{lem:app_dp_recursion}, the recovered design $r^\dagger$ satisfies
$\widetilde{W}_p(r^\dagger;s)\ge t^\dagger_p$ for all $p$.
\Cref{lem:app_rounding_bounds} then gives
$W_p(r^\dagger;s)\ge \Delta\, t^\dagger_p$, so
\[
\begin{aligned}
    \overline{P}_e(y;r^\dagger;s)
&=
\sum_{y'\neq y}A_{(y,y')}(s_{(y,y')})\exp(-W_{(y,y')}(r^\dagger;s))
\le
\sum_{y'\neq y}A_{(y,y')}(s_{(y,y')})\exp(-\Delta\, t^\dagger_{(y,y')})\\
&=
\widehat{P}_e(y;t^\dagger;s)
\le \alpha_y.
\end{aligned}
\]
Surrogate-feasibility of $r^\dagger$ for the original problem \eqref{eq:surrogate_problem}
then follows from \cref{prop:joint_reformulation}, since feasibility for any fixed $s$ implies
$\overline{P}_e(y;r^\dagger)\le \overline{P}_e(y;r^\dagger;s)\le \alpha_y$ for all $y$.

For the cost comparison, the design $r'$ is feasible for
$\mathrm{DP}[\widetilde{W}(r';s)]$ by definition, so
$\mathrm{DP}[\widetilde{W}(r';s)]\le C(r')$.
Since $t^\dagger$ minimizes $\mathrm{DP}[t]$ over all feasible states,
$C(r^\dagger)=\mathrm{DP}[t^\dagger]\le \mathrm{DP}[\widetilde{W}(r';s)]\le C(r')$.
\end{proof}

The final supporting lemma provides a universal lower bound on the optimal surrogate cost.

\begin{lemma}[Cost lower bound]
\label{lem:app_cost_lb}
Any surrogate-feasible design $r\in\overline{\mathcal{R}}(\alpha)$ satisfies
\begin{equation}
\label{eq:app_cost_lb}
C(r)
\;\ge\;
\frac{c_{\min}}{B}\,
\log\!\Big(\frac{(L-1)\,\pi_{\min}}{\pi_{\max}\,\alpha_{\min}}\Big)
\qquad
\text{whenever }\ \alpha_{\min}<\frac{(L-1)\,\pi_{\min}}{\pi_{\max}}.
\end{equation}
\end{lemma}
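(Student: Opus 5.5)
The plan is to show that some single pairwise term of the surrogate forces the lower bound. We take $y^\star\in\arg\min_y\alpha_y$, so $\alpha_{y^\star}=\alpha_{\min}$. Surrogate feasibility then gives $\overline{P}_e(y^\star;r)\le\alpha_{\min}$, and $\overline{P}_e(y^\star;r)$ is a sum of $L-1$ nonnegative pairwise terms $\overline{P}_{y^\star,y'}(r)$. So it suffices to lower bound every pairwise term uniformly in $s\in[0,1]$ and then sum.

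\textbf{Step 1: bounding each factor.} Fix $y'\neq y^\star$ and $s\in[0,1]$.

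The prior factor satisfies $(\pi(y')/\pi(y^\star))^{s}\ge \min\{1,\pi_{\min}/\pi_{\max}\}=\pi_{\min}/\pi_{\max}$.

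For the model factors we need a lower bound $M_m^{(y,y')}(s)\ge e^{-B}$. By the uniform boundedness condition, $p_m(x\mid y')\ge e^{-B}p_m(x\mid y)$ for every $x$. Hence
\[
p_m(x\mid y)^{1-s}p_m(x\mid y')^{s}\ge e^{-sB}p_m(x\mid y).
\]
Summing over $x$ gives $M_m^{(y,y')}(s)\ge e^{-sB}\ge e^{-B}$.

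Combining the two, every pairwise proxy satisfies
\[
\overline{P}_{y^\star,y'}(r;s)\ge \frac{\pi_{\min}}{\pi_{\max}}\,e^{-B N(r)},
\qquad N(r)=\sum_m r_m.
\]
Since this holds for every $s$, it also holds after minimizing over $s\in[0,1]$.

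\textbf{Step 2: summing and solving for $N(r)$.} Summing over the $L-1$ competitors gives
\[
\alpha_{\min}\ge \overline{P}_e(y^\star;r)\ge (L-1)\frac{\pi_{\min}}{\pi_{\max}}e^{-BN(r)}.
\]
Rearranging,
\[
N(r)\ge \frac1B\log\!\Big(\frac{(L-1)\pi_{\min}}{\pi_{\max}\alpha_{\min}}\Big).
\]
The right-hand side is positive exactly under the stated condition $\alpha_{\min}<(L-1)\pi_{\min}/\pi_{\max}$. Finally, $C(r)\ge c_{\min}N(r)$ yields \eqref{eq:app_cost_lb}.

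\textbf{Where care is needed.} The one step requiring attention is the lower bound $M_m^{(y,y')}(s)\ge e^{-B}$, and it relies on the boundedness assumption pointing in the right direction. The bound $|\log(p'/p)|\le B$ is two-sided, so the direction is available. This avoids any need for finer KL-type arguments, because a crude per-query loss of $B$ nats suffices for a lower bound of order $\log(1/\alpha_{\min})$.
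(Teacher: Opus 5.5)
Your proof is correct and follows essentially the same route as the paper: you lower-bound each pairwise proxy uniformly in $s$ (the prior factor by $\pi_{\min}/\pi_{\max}$ and each $M_m^{(y,y')}(s)$ by $e^{-B}$), sum over the $L-1$ competitors of a label attaining $\alpha_{\min}$, solve for $N(r)$, and finish with $C(r)\ge c_{\min}N(r)$. Your explicit derivation of $M_m^{(y,y')}(s)\ge e^{-sB}\ge e^{-B}$ from $p_m(x\mid y')\ge e^{-B}p_m(x\mid y)$ fills in the step that the paper asserts directly from the uniform boundedness condition.
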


\begin{proof}[Proof of \cref{lem:app_cost_lb}]
For any pair $(y,y')$ and any $s\in[0,1]$,
$M_m^{(y,y')}(s)\ge e^{-B}$ by the uniform boundedness condition, so
$\prod_m M_m^{(y,y')}(s)^{r_m}\ge e^{-B N(r)}$.
Also, $(\pi(y')/\pi(y))^s\ge \pi_{\min}/\pi_{\max}$ for $s\in[0,1]$.
Therefore
$\overline{P}_{y,y'}(r)\ge (\pi_{\min}/\pi_{\max})\,e^{-B\,N(r)}$.
Picking $y$ with $\alpha_y=\alpha_{\min}$ and summing over $y'\neq y$ gives
$\alpha_{\min}\ge \overline{P}_e(y;r) \ge (L-1)(\pi_{\min}/\pi_{\max})\,e^{-B\,N(r)}$.
Rearranging yields $N(r)\ge \frac{1}{B}\log((L-1)\pi_{\min}/(\pi_{\max}\alpha_{\min}))$,
and $C(r)\ge c_{\min}\,N(r)$.
\end{proof}

\subsubsection{Proof of \cref{thm:afptas}}

For the reader's convenience, we restate the theorem.

\medskip
\noindent\textbf{\cref{thm:afptas}} (AFPTAS guarantee).
\textit{The design $\widehat{r}$ returned by Algorithm~\ref{alg:afptas} is surrogate-feasible, and for sufficiently small $\alpha_{\min}$, the
multiplicative guarantee
$C(\widehat{r})\le (1+\varepsilon)\,\overline{\mathrm{OPT}}(\alpha)$
holds. The runtime is polynomial in $K$, $\log(1/\alpha_{\min})$, and $1/\varepsilon$ for fixed $L$.}

\begin{proof}[Proof of \cref{thm:afptas}]

\noindent\textbf{Surrogate-feasibility.}
Every candidate design $r^\dagger(s)$ returned by the fixed-$s$ DP subroutine satisfies $\overline{P}_e(y;r^\dagger(s);s)\le \alpha_y$ for all $y\in\mathcal{Y}$
by \cref{prop:app_fixeds_dp}.
\Cref{prop:joint_reformulation} then implies
$\overline{P}_e(y;r^\dagger(s))\le \overline{P}_e(y;r^\dagger(s);s)\le \alpha_y$ for all $y$, so each candidate is surrogate-feasible.
Since $\widehat{r}$ is the cheapest among all candidates, it is surrogate-feasible as well.

\noindent\textbf{Runtime.}
The DP at each grid point has $(T_{\max}+1)^{|\mathcal{P}|}$ states, and each recursion step in \eqref{eq:app_dp_recursion} requires $O(K)$ work.
Since $T_{\max}=O(BN_{\max}^2/\log(1+\varepsilon))$ and $|\mathcal{P}|=L(L-1)$ is fixed,
the per-grid-point cost is polynomial in $K$, $N_{\max}$, and $1/\varepsilon$.
The grid $\mathcal{G}_\varepsilon$ contains $O(\Lambda/\log(1+\varepsilon))^{|\mathcal{P}|}$ points, which is also polynomial in $N_{\max}$ and $1/\varepsilon$ for fixed $L$.
Since $N_{\max}=O(\log(1/\alpha_{\min}))$ by \cref{lem:app_crude_feasible}, the overall runtime is polynomial in $K$, $\log(1/\alpha_{\min})$, and $1/\varepsilon$.

\noindent\textbf{Multiplicative cost bound.}
It remains to prove $C(\widehat{r})\le (1+\varepsilon)\,\overline{\mathrm{OPT}}(\alpha)$ for sufficiently small $\alpha_{\min}$.
Let $r^\star$ denote an optimal surrogate design with cost $C(r^\star)=\overline{\mathrm{OPT}}(\alpha)$.
The proof proceeds by constructing an explicit feasible solution
for the DP at some grid point $\bar{s}\in\mathcal{G}_\varepsilon^{|\mathcal{P}|}$
whose cost we can control, then arguing that the DP output is at
least as cheap as this constructed solution.
To absorb the two sources of approximation error that the algorithm introduces
(grid discretization and weight rounding), we first augment $r^\star$ with
additional queries to every model, creating a $(1+\varepsilon)^2$ margin of
slack in the surrogate constraints.

\noindent
\textbf{Constructing a near-feasible augmented design.}
For each ordered pair $p=(y,y')\in\mathcal{P}$, select any minimizer
\[
s_p^\star\in\arg\min_{s\in[0,1]}\overline{P}_{y,y'}(r^\star;s),
\]
which exists because $[0,1]$ is compact and
$s\mapsto\overline{P}_{y,y'}(r^\star;s)$ is continuous.
Setting $s^\star:=(s_p^\star)_{p\in\mathcal{P}}\in[0,1]^{|\mathcal{P}|}$
and applying \cref{prop:joint_reformulation} gives
\[
\overline{P}_e(y;r^\star;s^\star)
=
\sum_{y'\neq y}\overline{P}_{y,y'}(r^\star;s^\star_{(y,y')})
=
\sum_{y'\neq y}\min_{s\in[0,1]}\overline{P}_{y,y'}(r^\star;s)
=
\overline{P}_e(y;r^\star)
\le \alpha_y
\]
for every $y\in\mathcal{Y}$,
so the pair $(r^\star,s^\star)$ is feasible for the joint formulation
\eqref{eq:joint_reformulation}.

We now verify that every component of $s^\star$ lies in the interior interval $[\delta,1-\delta]$.
Since $r^\star$ is surrogate-feasible, \cref{thm:statewise_chernoff} gives
\[
P_e(y;r^\star)\le \overline{P}_e(y;r^\star)\le \alpha_y
\qquad \forall\, y\in\mathcal{Y},
\]
so $r^\star$ is also feasible for the original robust problem
\eqref{eq:opt_problem}.
The assumption $\alpha_{\min}\le \varepsilon_0$ then places us
in the regime of \cref{lem:app_min_effective_sample}, which yields
$n_{y,y'}(r^\star)\ge n_0$ for all $y\neq y'$.
By construction $n_0\ge n_{\mathrm{cen}}$, so
$n_{y,y'}(r^\star)\ge n_{\mathrm{cen}}$ and \cref{lem:app_centering}
gives $s^\star_p\in[\delta,1-\delta]$ for every $p\in\mathcal{P}$.

Define $k(\varepsilon):=\lceil 2\log(1+\varepsilon)/(-\log\vartheta)\rceil$
and the augmented design
$\widetilde{r}:=r^\star + k(\varepsilon)\,\mathbf{1}$,
where $\mathbf{1}=(1,\ldots,1)\in\mathbb{Z}_{\ge 0}^K$.
Because $s^\star_p\in[\delta,1-\delta]$ for every $p$, the definition
of $\vartheta$ in \cref{lem:app_theta} gives
$\prod_{m=1}^K M_m^{(y,y')}(s^\star_p)\le \vartheta$ for every
ordered pair $p=(y,y')$.
For any such pair, the pairwise Chernoff bound of the augmented design
evaluated at $s_p^\star$ satisfies
\begin{align}
\overline{P}_{y,y'}\!\big(\widetilde{r};s^\star_p\big)
&=
\left(\frac{\pi(y')}{\pi(y)}\right)^{s^\star_p}
\prod_{m=1}^K \left(M_m^{(y,y')}(s^\star_p)\right)^{r^\star_m + k(\varepsilon)}
\nonumber\\
&=
\left(\frac{\pi(y')}{\pi(y)}\right)^{s^\star_p}
\prod_{m=1}^K \left(M_m^{(y,y')}(s^\star_p)\right)^{r^\star_m}
\cdot
\prod_{m=1}^K \left(M_m^{(y,y')}(s^\star_p)\right)^{k(\varepsilon)}
\nonumber\\
&=
\overline{P}_{y,y'}\!\big(r^\star;s^\star_p\big)
\cdot
\left(\prod_{m=1}^K M_m^{(y,y')}(s^\star_p)\right)^{k(\varepsilon)}
\nonumber\\
&\le
\overline{P}_{y,y'}\!\big(r^\star;s^\star_p\big)\,\vartheta^{k(\varepsilon)},
\label{eq:app_pairwise_augment}
\end{align}
where the final inequality uses
$\prod_{m=1}^K M_m^{(y,y')}(s^\star_p)\le \vartheta$.
The definition of $k(\varepsilon)$ guarantees that
$k(\varepsilon)\ge 2\log(1+\varepsilon)/(-\log\vartheta)$,
which implies
\[
\vartheta^{k(\varepsilon)}
=
e^{\,k(\varepsilon)\log\vartheta}
\le
e^{\,-2\log(1+\varepsilon)}
=
\frac{1}{(1+\varepsilon)^2}.
\]
Substituting this into \eqref{eq:app_pairwise_augment} gives
$\overline{P}_{y,y'}(\widetilde{r};s^\star_p)
\le
\overline{P}_{y,y'}(r^\star;s^\star_p)/(1+\varepsilon)^2$.
Summing over $y'\neq y$ then yields
\begin{equation}
\label{eq:app_rtilde_strong}
\overline{P}_e(y;\widetilde{r};s^\star)
=
\sum_{y'\neq y}
\overline{P}_{y,y'}\!\big(\widetilde{r};s^\star_{(y,y')}\big)
\le
\frac{1}{(1+\varepsilon)^2}
\sum_{y'\neq y}
\overline{P}_{y,y'}\!\big(r^\star;s^\star_{(y,y')}\big)
=
\frac{\overline{P}_e(y;r^\star;s^\star)}{(1+\varepsilon)^2}
\le
\frac{\alpha_y}{(1+\varepsilon)^2}
\end{equation}
for every $y\in\mathcal{Y}$.
The augmented design therefore satisfies the surrogate
constraints with a quadratic margin of $(1+\varepsilon)^2$.
This margin will absorb the two approximation losses that follow.

We also verify that $N(\widetilde{r})\le N_{\max}$.
\Cref{lem:app_crude_feasible} gives
$C(r^\star)\le n_{\mathrm{unif}}\,c_{\Sigma}$, and since
$C(r^\star)=\sum_m c_m r_m^\star\ge c_{\min}\sum_m r_m^\star
=c_{\min}\,N(r^\star)$, we obtain
$N(r^\star)\le n_{\mathrm{unif}}\,c_{\Sigma}/c_{\min}$.
Because $\log(1+\varepsilon)\le \log 2$ for $\varepsilon\in(0,1]$, the
ceiling definition gives $k(\varepsilon)\le k_{\max}$, so
\[
N(\widetilde{r})
=
N(r^\star)+k(\varepsilon)\,K
\le
\frac{n_{\mathrm{unif}}\,c_{\Sigma}}{c_{\min}}+k_{\max}\,K
=
N_{\max}.
\]
This confirms that \cref{lem:app_lipschitz_s,lem:app_rounding_bounds}
apply to $\widetilde{r}$.

\noindent
\textbf{Snapping $s^\star$ to the grid.}
The grid $\mathcal{G}_\varepsilon$ in Algorithm~\ref{alg:afptas} has mesh
$h=\log(1+\varepsilon)/\Lambda$.
There exists a grid point
$\bar{s}\in\mathcal{G}_\varepsilon^{|\mathcal{P}|}$ such that
$|s^\star_p-\bar{s}_p|\le h$ for every $p\in\mathcal{P}$.
Applying \cref{lem:app_lipschitz_s} to each pair $(y,y')$ gives
\[
\begin{aligned}
    \overline{P}_{y,y'}(\widetilde{r};\bar{s}_{(y,y')})
&\le
\exp\!\big(\Lambda\,|s^\star_{(y,y')}-\bar{s}_{(y,y')}|\big)\,
\overline{P}_{y,y'}(\widetilde{r};s^\star_{(y,y')})
\le
\exp(\Lambda\, h)\,
\overline{P}_{y,y'}(\widetilde{r};s^\star_{(y,y')})
\\
&= (1+\varepsilon)\,\overline{P}_{y,y'}(\widetilde{r};s^\star_{(y,y')}),
\end{aligned}
\]
where the final equality uses $\Lambda\, h = \log(1+\varepsilon)$.
Summing over $y'\neq y$ and applying \eqref{eq:app_rtilde_strong} gives
\begin{equation}
\label{eq:app_rtilde_grid}
\overline{P}_e(y;\widetilde{r};\bar{s})
=
\sum_{y'\neq y}
\overline{P}_{y,y'}\!\big(\widetilde{r};\bar{s}_{(y,y')}\big)
\le
(1+\varepsilon)
\sum_{y'\neq y}
\overline{P}_{y,y'}\!\big(\widetilde{r};s^\star_{(y,y')}\big)
=
(1+\varepsilon)\,\overline{P}_e(y;\widetilde{r};s^\star)
\le
\frac{\alpha_y}{1+\varepsilon}
\end{equation}
for every $y\in\mathcal{Y}$.
One of the two factors of $(1+\varepsilon)$ in the margin has now been consumed.

\noindent
\textbf{Rounding weights to DP states.}
Let $t:=\widetilde{W}(\widetilde{r};\bar{s})$ denote the rounded weight
vector computed by Algorithm~\ref{alg:afptas} at grid point $\bar{s}$.
By \cref{lem:app_rounding_bounds}, rounding each weight
$w_{m,p}(\bar{s}_p)$ down to its
nearest multiple of $\Delta=\log(1+\varepsilon)/N_{\max}$ introduces
a per-model error of at most $\Delta$, and summing over the
$N(\widetilde{r})\le N_{\max}$ queries yields a total rounding gap of
at most $N_{\max}\cdot\Delta=\log(1+\varepsilon)$ in each coordinate.
Exponentiating, this gives
\[
\widehat{P}_e(y;t;\bar{s})
\le
\exp\!\big(\log(1+\varepsilon)\big)\,\overline{P}_e(y;\widetilde{r};\bar{s})
=
(1+\varepsilon)\,\overline{P}_e(y;\widetilde{r};\bar{s}).
\]
Combining with \eqref{eq:app_rtilde_grid} yields
\[
\widehat{P}_e(y;t;\bar{s})
\le
(1+\varepsilon)\cdot\frac{\alpha_y}{1+\varepsilon}
=
\alpha_y,
\qquad \forall\, y\in\mathcal{Y}.
\]
The second and final factor of $(1+\varepsilon)$ in the margin has been
consumed, and $\widetilde{r}$ is feasible for the DP feasibility
condition at the grid point $\bar{s}$.

\noindent
\textbf{Comparing costs.}
Because $\widetilde{r}$ satisfies the DP feasibility condition at $\bar{s}$,
\cref{prop:app_fixeds_dp} gives
$C(r^\dagger(\bar{s}))\le C(\widetilde{r})$.
Algorithm~\ref{alg:afptas} outputs the cheapest design across all grid points, so
\begin{equation}
\label{eq:app_cost_chain}
C(\widehat{r})
\le
C(r^\dagger(\bar{s}))
\le
C(\widetilde{r})
=
C(r^\star)+k(\varepsilon)\,c_{\Sigma}
=
\overline{\mathrm{OPT}}(\alpha)+k(\varepsilon)\,c_{\Sigma}.
\end{equation}

\noindent
\textbf{Converting the additive bound to a multiplicative bound.}
It remains to show that $k(\varepsilon)\,c_{\Sigma}\le \varepsilon\,\overline{\mathrm{OPT}}(\alpha)$
for sufficiently small $\alpha_{\min}$.
Define $\alpha_0(\varepsilon):=\frac{(L-1)\pi_{\min}}{\pi_{\max}}\exp\!\big(-\frac{B\,k(\varepsilon)\,c_{\Sigma}}{\varepsilon\,c_{\min}}\big)$
and assume $\alpha_{\min}\le \alpha_0(\varepsilon)$.
Taking logarithms of both sides of
$\alpha_{\min}\le \alpha_0(\varepsilon)$ gives
\[
\log\alpha_{\min}
\le
\log\!\Big(\frac{(L-1)\,\pi_{\min}}{\pi_{\max}}\Big)
-\frac{B\,k(\varepsilon)\,c_{\Sigma}}{\varepsilon\,c_{\min}},
\]
which rearranges to
\begin{equation}
\label{eq:app_threshold_equiv}
\frac{B\,k(\varepsilon)\,c_{\Sigma}}{\varepsilon\,c_{\min}}
\le
\log\!\Big(\frac{(L-1)\,\pi_{\min}}{\pi_{\max}\,\alpha_{\min}}\Big).
\end{equation}
Applying \cref{lem:app_cost_lb} and
using \eqref{eq:app_threshold_equiv} yields
\[
\overline{\mathrm{OPT}}(\alpha)
\;\ge\;
\frac{c_{\min}}{B}\,
\log\!\Big(\frac{(L-1)\,\pi_{\min}}{\pi_{\max}\,\alpha_{\min}}\Big)
\;\ge\;
\frac{c_{\min}}{B}\cdot
\frac{B\,k(\varepsilon)\,c_{\Sigma}}{\varepsilon\,c_{\min}}
\;=\;
\frac{k(\varepsilon)\,c_{\Sigma}}{\varepsilon}.
\]
Multiplying both sides by $\varepsilon$ gives
$k(\varepsilon)\,c_{\Sigma}\le \varepsilon\,\overline{\mathrm{OPT}}(\alpha)$.
Substituting this into \eqref{eq:app_cost_chain} produces
\[
C(\widehat{r})
\le
\overline{\mathrm{OPT}}(\alpha)+k(\varepsilon)\,c_{\Sigma}
\le
\overline{\mathrm{OPT}}(\alpha)+\varepsilon\,\overline{\mathrm{OPT}}(\alpha)
=
(1+\varepsilon)\,\overline{\mathrm{OPT}}(\alpha),
\]
which is the multiplicative guarantee \eqref{eq:multiplicative}.
\end{proof}

\end{document}